\documentclass[12pt]{article}

\usepackage{tikz}
\usetikzlibrary{calc}
\usepackage[utf8]{inputenc}
\usepackage{textcomp}
\usepackage{newunicodechar}
\newunicodechar{−}{-}
\definecolor{Inflow}{HTML}{2F4C6B}
\definecolor{Outflow}{HTML}{8C4040}
\definecolor{InflowFill}{HTML}{D7E0E8}
\definecolor{OutflowFill}{HTML}{E9DBD8}
\definecolor{GridLine}{HTML}{C9C9C9}
\definecolor{BaseFill}{HTML}{F7F7F8}
\tikzset{
  pcell/.style   = {draw=Inflow!55,line width=0.25pt,fill=InflowFill,text=Inflow,
                    minimum width=3mm,minimum height=3.6mm,inner sep=0pt,font=\scriptsize\bfseries},
  mcell/.style   = {draw=Outflow!55,line width=0.25pt,fill=OutflowFill,text=Outflow,
                    minimum width=3mm,minimum height=3.6mm,inner sep=0pt,font=\scriptsize\bfseries},
  rowlab/.style  = {anchor=east,font=\scriptsize,xshift=-0.8mm},
  glab/.style    = {anchor=center,font=\scriptsize},
  tlab/.style    = {anchor=west,rotate=-90,font=\tiny,text=black!60},
  caption/.style = {anchor=north,font=\normalsize},
}

\usepackage{tikz}
\usepackage{float}
\usepackage{booktabs}
\usepackage{threeparttable}
\usepackage{multirow}
\usepackage{bigdelim}
\usepackage{makecell}
\usepackage{booktabs}
\usepackage{multirow}
\usepackage{bigdelim}
\usepackage{makecell}
\usepackage{adjustbox}
\usepackage{caption}
\usepackage{nicematrix}
\newcommand{\vbeta}[1]{\rotatebox[origin=c]{90}{$\scriptstyle #1$}}
\usepackage[english]{babel}	    
\usepackage{amsmath, amsfonts, amssymb, amsthm, mathtools, bm, bbm} 
\usepackage{icomma}                            
\usepackage{dsfont}

\usepackage[usenames, dvipsnames, svgnames, table, rgb]{xcolor}
\usepackage{colortbl}
\usepackage{chngcntr}
\usepackage{apptools}

\definecolor{BlueBottle}{RGB}{0, 130, 255}
\usepackage{algorithm}
\usepackage{algorithmic}

\usepackage{thmtools}
\usepackage{hyperref}

\newenvironment{continuance}[1]
  {\par\bigskip\noindent\textbf{Example #1 (continued)}\itshape}
  {\par}

\usepackage{graphicx}                               
\graphicspath{{images/}{images2/}}               
\usepackage{wrapfig}                                      
\usepackage{subcaption}

\usepackage{array, tabularx, tabulary, booktabs} 
\usepackage{longtable}                                     
\usepackage{multirow}

\theoremstyle{plain}

\newtheorem{assumption}{Assumption}[section]
\newtheorem{proposition}{Proposition}
\newtheorem{lemma}{Lemma}
\newtheorem{definition}{Definition}[section]

\theoremstyle{definition}                

\newtheorem{example}{Example}

\newtheorem{remark}{Remark}

\theoremstyle{remark}

\counterwithin{lemma}{section}
\counterwithin{theorem}{section}
\counterwithin{equation}{section}

\newcommand{\E}{\mathbb{E}}
\newcommand{\Var}{\mathbb{V}ar}
\newcommand{\Cov}{\mathbb{C}ov}

\newcommand{\G}{\mathbb{G}}
\makeatletter
\renewcommand*\env@matrix[1][\arraystretch]{%
	\edef\arraystretch{#1}%
	\hskip -\arraycolsep
	\let\@ifnextchar\new@ifnextchar
	\array{*\c@MaxMatrixCols c}}
\makeatother

\usepackage{geometry}                  	
\usepackage{setspace}  
\usepackage{hyperref}
\hypersetup{
    colorlinks=true,
    linkcolor=BlueBottle,
    citecolor=BlueBottle,
    urlcolor=BlueBottle
}

\usepackage{natbib}
\usepackage{har2nat}

\usepackage{multicol}                   

\usepackage{pgf, tikz}                           
\usepackage{pgfplots}
\usepackage{pgfplotstable}
\usetikzlibrary{arrows}
\usetikzlibrary[patterns]
\usetikzlibrary{arrows.meta}
\usepackage{pdfpages}

\usepackage{caption}
\usepackage{subcaption}
\usepackage{cleveref}
\usepackage{graphicx}

\usepackage{lscape}
\newcommand{\ci}{\perp\!\!\!\perp}
\newcommand{\supp}{\text{supp}}

\usetikzlibrary{arrows.meta}

	\title{\Large Adaptive Estimation of Aggregated Values of Conditional Linear Programs   }

\author{
Gevorg Khandamiryan\thanks{Email: gevorgkh@berkeley.edu. University of California, Berkeley.} \quad 
Vira Semenova\thanks{Email: vsemenova@berkeley.edu.  First version: March 2023, arXiv ID: 2303.00982.   We are grateful to Victor Chernozhukov, Anna Mikusheva, Bryan Graham, Michael Jansson, Patrick Kline,  Demian Pouzo for their guidance and encouragement. Helpful comments were provided by  Denis Chetverikov,  Bulat Gafarov, Dalia Ghanem, Nail Kashaev,   Ying-Ying Lee, Lihua Lei, Rosa Matzkin,  Andres Santos, Christopher Walters.}
}

\begin{document}
  \maketitle
\date{}

\begin{abstract}
We develop a covariate-assisted approach to partially identified parameters that are solutions to an under-identified system of linear equations with known coefficients. 
Examples include bounds on treatment effects, models of unemployment with state dependence, choice-theoretic models of IV, and random utility models. The boundary (i.e., support function) of the proposed identified set is represented as an average of intersections of regression functions, aggregated over the covariate distribution. We show that the boundary is a regular parameter, propose asymptotic theory, and demonstrate using an empirical application to Jobs First.
\end{abstract}

\noindent\texttt{Keywords:}  Duality, linear programming, support function, partial identification,  intersection bounds, cross-fitting, asymptotic linear representation, stochastic programming
\noindent\texttt{JEL Numbers:} C14, C31, C54

\section{Introduction and Motivation}

Linear programming problems are ubiquitous in economics. They arise
in heterogeneous treatment analysis, multivalued treatments and
instruments \citep{HeckmanPinto, SalanieLee}, sample selection
\citep{HorowitzManski}, discrete choice \citep{TebaldiECMA2023},
state dependence \citep{Torgovitsky}, and random utility models
\citep{KitamuraStoye}, among many others. In many such settings, the
number of identifying restrictions is smaller than the number of
parameters, so point identification fails and worst-case bounds are
the natural target of inference.

Inference on bounds is challenging for two reasons. First,
closed-form expressions for the bounds are rarely available,
motivating case-by-case derivations in individual applications, as
in \citet{KT}'s analysis of Jobs First and
\citet{kamat2021identifying}'s analysis of Head Start. Second, the
linear program may admit multiple optimal solutions --- ``flat
faces'' of the identified set
\citep{Shapiro1991, Dumbgen2003, HsiehShiShum} which create
non-differentiabilities \citep{HiranoPorter2012} that invalidate
standard asymptotic arguments. Both hurdles often discourage the use
of baseline covariates, even when covariates are available and would
plausibly tighten the bounds.

This paper develops a general framework for estimation and inference
on aggregate values of conditional linear programs. The right-hand
side of the linear system is allowed to vary with observed
characteristics and is treated as an unknown nuisance function to be
learned from data. The target parameter is the aggregated support
function, obtained by averaging the covariate-specific bounds over
the covariate distribution. We establish that, when at least one
continuously distributed covariate enters the right-hand side, this
aggregated parameter is regular and pathwise differentiable, with an
influence function that takes a closed form. The cross-fitted plug-in
estimator is root-N consistent and asymptotically normal, with
confidence intervals available through a Gaussian
multiplier-bootstrap procedure. As a leading special case, when the
signal is taken to be the doubly robust signal of \citet{Robins},
the influence function we derive coincides with the efficient
influence function of \citet{LuedtkeLaan} for the sharp upper bound
on the always-takers' share. The framework accommodates first-stage
regularized regression methods or other machine-learning estimators
through cross-fitting.

The paper makes three theoretical contributions. The first is a
closed-form identification result. We establish, via strong LP
duality applied pointwise and a Jensen-type aggregation, that the
aggregated support function equals the expectation, over the
covariate distribution, of the minimum inner product between the
conditional right-hand side and a finite collection of dual
vertices. This aggregated bound is weakly tighter than the bound
based on aggregate data alone. The representation connects the
support-function approach to partial identification with an
aggregated intersection-bound representation, a combination that is
new to the literature. The second contribution is a regular
asymptotic theory. We characterize the influence function of the
aggregated support function under a margin condition that is
generically satisfied when the conditional moments are continuously
distributed, and develop a cross-fitted plug-in estimator together
with a Gaussian multiplier-bootstrap inference procedure whose
uniform coverage is established over an explicit class of
distributions. The third contribution is a self-contained asymptotic
theory for cross-fitted envelope-regression estimators, developed in
the online supplement at the level of an abstract finite index set
and a vector-valued nuisance function. This theory --- comprising
an oracle expansion, a Gaussian approximation, and
multiplier-bootstrap validity --- strictly generalizes Theorem~1 of
\citet{LuedtkeLaan}, which addresses the two-vertex case, and yields
the asymptotic results of the present paper as a direct
specialization to the conditional linear program. 

We demonstrate the proposed method by revisiting the Jobs
First study of \citet{KT}, who report bounds on five
response-probability parameters that summarize how Connecticut's
1996 welfare reform affected women's labor-supply and welfare
participation decisions. Using twenty-eight baseline covariates and
an $\ell_1$-penalized first stage with woman-id cross-fitting, we
recover their bounds without relying on the closed-form derivations
of their Online Appendix~B, providing an alternative methodological
route to their main results. We then use the LP framework to address
a question that \citeauthor{KT} themselves posed:  whether the
above-FPL opt-in response reflects substantive labor-supply
adjustment or trivial earnings reductions of a few dollars from just
above the poverty line. Refining the earnings grid into nine bins
and applying the same identification and inference machinery
verbatim, we find that the largest opt-in lower bound occurs not at
the smallest reduction but for women who would have had to reduce
earnings by twenty to forty percent of the Federal Poverty Line ---
a substantively large adjustment. The opt-in response is therefore
inconsistent with trivial rounding. Beyond this specific finding,
the empirical exercise illustrates how the LP framework makes
outcome-grid refinements  inferentially and computationally tractable.

A central conceptual point distinguishes this paper from earlier
work on linear-program bounds. The existing literature on the
support function approach to partial identification
\citep{BM, BMM, CCMS} studies the support function at a
fixed, aggregate data vector. We instead study the \emph{aggregated}
support function, obtained by averaging the conditional support
function over the covariate distribution. None of the papers in this
strand considers this object. The distinction matters because
aggregation, viewed as an integration operation, smooths the
non-regularity that motivates the literature's specialized inference
procedures \citep[e.g.,][]{FangSantosShaikh, HsiehShiShum}: covariate
values at which the binding dual vertex is non-unique form a measure
zero set under a mild smoothness condition on the covariate
distribution and therefore do not contribute to the first-order
asymptotics. As a consequence, the aggregated support function is a
regular, pathwise differentiable parameter, and we do not need
further regularization to restore asymptotic normality.

The notion of ``flat faces'' that complicates inference in the
existing literature merits clarification in the present setting. In
its general use, the term refers to multiplicity of solutions to the
\emph{primal} linear program. By LP duality, primal-solution
multiplicity is equivalent to the multiplicity of Lagrange
multipliers --- the dual solutions --- which corresponds to a
violation of the Linear Independence Constraint Qualification. Our
setting is a special case of linear systems in which the
left-hand-side coefficient matrix is deterministic and does not
depend on the covariates. In this special case, the derivative of
the value function with respect to the conditional right-hand side
depends only on the dual solution and not on the primal solution.
Multiplicity of primal solutions is therefore immaterial for
inference; we need only exclude the possibility of multiple dual
solutions, which is the content of our key assumption.

\subsection{Literature review}

This paper contributes to the growing literature on bounds arising
from linear programming problems and affine moment inequalities
\citep{HonoreTamer, AndrewsRothPakes, FangSantosShaikh, DongHsiehShum,
HsiehShiShum, JLS}. \citet{KitamuraStoye} employ a dual approach to
test whether observed demand is consistent with a random utility
model. \citet{FangSantosShaikh} and \citet{AndrewsRothPakes} also
rely on duality for inference procedures, in both cases at the level
of the population LP rather than its conditional counterpart. The
closest paper in this line is \citet{HsiehShiShum}, who invoke
duality arguments for a broad class of linear and quadratic
programming problems and develop an inference method that
accommodates ties. By contrast, we study a conditional version of
the linear program in which the right-hand-side covariates are
continuously supported, which generically rules out ties on a set of
positive measure and renders the aggregated support function
regular.

The paper also contributes to the literature on debiased inference
with machine learning for trimmed and bounded functionals: the
covariate-assisted Lee-type procedures of \citet{SemSupp2}, the
nonparametric truncated-mean estimator of
\citet{olma2021nonparametric}, the least-squares approach to
heterogeneous effects of \citet{heiler2024}, the
intensive--extensive margin treatment evaluation of
\citet{heiler2024intensive}, and the continuous-treatment extensions
of \citet{lee2025}. These papers typically rely on closed-form
representations of the target parameter. Our framework instead
focuses on settings where covariates enter through a conditional
linear program with no closed-form solution, providing a general
LP-based route to regular and efficient inference in partially
identified problems with rich covariates.

A second strand of related work is the support-function approach to
partial identification, including \citet{BM, BMM, Gafarov, SemJoE}.
Following \citet{BM}, we adopt the random-set approach to
characterize the covariate-assisted identified set, with randomness
induced by the covariate distribution. \citet{Gafarov} proposes a
different strategy, introducing regularization to handle the
non-differentiability associated with flat faces of the identified
set. Our approach instead leverages continuously distributed
covariates, which generically eliminate flat faces and restore
regularity without the need for regularization. The two approaches
are complementary: regularization offers a way forward in
environments with limited covariates, while covariate assistance
provides a natural route to regular inference when richer covariates
are available.

The most closely related concurrent work is \citet{JLS}. Eight
months after the first version of the present paper was posted,
\citet{JLS} developed a model-agnostic covariate-assisted inference
framework for partially identified causal effects using optimal
transport duality, with a central role for weak duality. Both papers
exploit the same fundamental observation: conditioning on covariates
and averaging covariate-specific dual bounds yields a bound that is
weakly tighter than the bound obtained from population-average data.
The two frameworks are structurally distinct and complementary, and
they make different trade-offs between sharpness and robustness.
The present paper delivers semiparametrically efficient,
square-root-of-N inference for the sharp aggregated bound under a
margin condition and consistent first-stage estimation, with an
explicit closed-form influence function. \citet{JLS} relax the
consistency requirement on the first stage by exploiting weak
duality directly: any dual-feasible selector delivers a valid bound
in expectation, regardless of whether the first-stage estimator
converges to the true conditional moments. The cost of this
robustness is that the resulting bound is one-sided and need not be
sharp. The two approaches are thus complementary: the optimal
transport framework offers broader scope (continuous outcomes,
model-agnostic validity under misspecification), while the finite-LP
framework of the present paper provides a direct route to
semiparametrically efficient inference with explicit influence
functions in the large class of discrete-outcome economic models in
which a linear-system structure is available.

\section{Setup}
\label{sec:setup}

Consider a system of linear equations
\begin{align}
\label{eq:linearcons}
 A \bm\beta_0 = \bm{b}_0, \qquad  \bm\beta_0 \geq 0.
\end{align}
Here,  ``\(\bm\beta_0 \geq 0\)'' indicates that all coordinates of the parameter vector \(\bm\beta_0 \) are nonnegative.
The parameter \(\bm\beta_0 = \E [ \bm{\Pi} ]\) is a $d$-vector summarizing the unobserved heterogeneity by taking an expectation of an unobserved random vector \(\bm{\Pi} \in \mathbf{R}^d\).
 The $k$-vector \(\bm{b}_0 = \E [ \mathbf{B} ]\) is the expectation of a  vector \(\mathbf{B}\) that is an unknown yet estimable parameter.
The matrix \(A\) is a known, deterministic \(k\times d\) matrix. We focus on an empirically relevant case when  \(k<d\), which implies that the vector \(\bm\beta_0\) may not be point-identified.

The paper studies projections of the partially identified vector $\bm\beta_0$ onto various directions of economic interest. For example, an upper bound on the \(j\)th coordinate (\(j\in\{1,\dots,d\}\)) is obtained by solving
\begin{align}
\label{eq:lpbasic}
  \max_{\bm\beta_0}\; \bm e_j^\top \bm\beta_0
  \quad\text{subject to}\quad
  A\,\bm\beta_0 = \bm{b}_0,\;\;\bm\beta_0 \ge 0,
\end{align}
where $\bm e_j$ is the $j$th standard basis vector in $\mathbf{R}^d$. A corresponding lower bound can be obtained by replacing \(\max_{\bm\beta_0} \bm e_j^\top \bm\beta_0\) with \(-\max_{\bm\beta_0} -\bm e_j^\top \bm\beta_0\) under the same constraints. More generally, for a given direction $q$ on a unit sphere
$$\mathcal{S}^{d-1} = \{ q \in \mathbf{R}^d, \ \|q \| =1 \},$$
 the projection of the identified set onto $q$ is given by 
\begin{align}
\label{eq:lpbasicq}
 \max_{\bm\beta_0} q' \bm\beta_0  \quad\text{subject to}\quad
  A\,\bm\beta_0 = \bm{b}_0,\;\;\bm\beta_0 \ge 0.
\end{align}
In order to fix the ideas, we discuss several empirically relevant examples of the  linear system \eqref{eq:linearcons}. 
%

\subsection{Motivating Examples}
\label{sec:mot}

\begin{example}[Principal stratification]
\label{ex:hp1}
Let $X$ denote a vector of baseline covariates taking values in the set $\mathcal{X}$. Let $Z$ denote a randomly assigned instrument and  $Y \in \mathcal{Y} \subset \mathbf{R}^{\dim Y}$ be the vector of  endogenous variables, where $Z$ and $Y$ are assumed to have finite support.  Let $\mathbf{U} \in \mathcal{U}$ denote a complete characterization of unobserved heterogeneity (e.g., a collection of counterfactual outcomes) with finite support $\supp \mathbf{U} = \{ \mathbf{u}_1, \dots, \mathbf{u}_{N_U} \}$. Given $\mathbf{U}$ and $Z$, the endogenous variables $Y$ must be non-random.   The instrument $Z$ is assumed to be completely independent of $\mathbf{U}$ and $X$, that is
\begin{align}
\label{eq:indep}
(\mathbf{U}, X) \ci Z.
\end{align}
The target parameter \(\bm\beta_0\) is the $N_U$-dimensional vector of latent response-type probabilities, 
\(\Pr (\mathbf{U} = \mathbf{u})_{\mathbf{u} \in \supp \mathbf{U}}\). For any set $R \subseteq \supp Y$, \begin{align}
\Pr (Y \in R \mid Z=z) 
&= \sum_{\mathbf{u} \in \supp \mathbf{U}} \Pr( Y \in R \mid \mathbf{U} = \mathbf{u}, Z=z )\, \Pr (\mathbf{U} = \mathbf{u} \mid Z=z) \label{eq:propscore2} \\
&= \sum_{\mathbf{u} \in \supp \mathbf{U}} \Pr( Y \in R \mid \mathbf{U} = \mathbf{u}, Z=z )\, \Pr (\mathbf{U} = \mathbf{u}). \nonumber
\end{align}
Special cases of this Example include IV bounds in  \cite{BalkePearl1994,BalkePearl1997},  \citep*{LeeBound} bounds with discrete-valued outcomes.
\end{example}

The linear system in Example~\ref{ex:hp1} --- and, more generally, in
most examples in this paper --- encodes a conservation-of-mass
condition: each equation matches the probability of an observable
outcome value to a sum of latent response-type probabilities
consistent with that value. The number of equations therefore equals
the cardinality of the outcome support, and a finite system requires
$Y$ to have finite support. When $Y$ is continuously distributed, the RHS function  becomes an infinite-dimensional object and the conditional
LP is replaced by an infinite-dimensional moment problem requiring
optimal-transport techniques \citep{JLS}. Such extensions are
outside the scope of the present paper. In applied work, researchers
commonly discretize continuously distributed outcomes for
tractability, as in \citet{KT} and our
Section~\ref{sec:illustration}.

When both the treatment and the outcome are binary, Example \ref{ex:hp1} reduces to a model  studied by \citet{HSC,Manski}. We spell it out as a separate example. 

\begin{example}
\label{ex:hsc}
Let $Z=D=1$ be an indicator of exogenous binary treatment, and let $S$ be a binary outcome (here, $Y=S$).   The propensity score equation \eqref{eq:propscore2} reduces to
\begin{align}
\Pr (\mathbf{U} =(1, 1)) + \Pr (\mathbf{U} =(1, 0)) &= \Pr (S=1 \mid D=1) \label{eq:eq1} \\
\Pr (\mathbf{U} =(1, 1)) + \Pr (\mathbf{U} =(0, 1)) &=  \Pr (S=1 \mid D=0)  \label{eq:eq2}.
\end{align}
This system is a special case of \eqref{eq:linearcons} with 
\begin{align}
\label{eq:linearconshsc} 
A = \begin{pmatrix}
1 & 1 & 0 & 0 \\
1 & 0 & 1 & 0  \\
1 & 1 & 1 & 1 \\
\end{pmatrix}, \quad \bm{b}_0 = \begin{pmatrix} \Pr (S=1 \mid D=1) \\  \Pr (S=1 \mid D=0) \\ 1 \end{pmatrix}  = \begin{pmatrix} s(1) \\  s(0) \\ 1 \end{pmatrix}.
\end{align}
\end{example}

Example~\ref{ex:hsc} serves as the running example throughout the rest of the paper. Beyond this binary-outcome benchmark, the framework also accommodates a range of applied settings. 

\begin{example}[Choice-theoretic model of IV]
\label{ex:hp2}
Consider a model generated by a system of treatment and outcome equations as studied in \citet{HeckmanPinto}:
\begin{align}
D &= f_D(X, Z, V), \nonumber \\
Y &= f_Y (X, D, V, \epsilon_Y), \nonumber \\
X, V , \epsilon_Y, Z &\text{ are mutually independent}. \label{eq:indep2}
\end{align}
Here $Z$ is an exogenous instrument, $D$ is an endogenous treatment, and
\[
Y = \sum_{d \in \mathcal{D}} Y(d)\, \bm{1}\{D=d\}
\]
is the observed outcome.
The response vector $\mathbf{U} = (D(1), \dots, D(n_Z))$ encodes potential treatment assignments across instrument values $Z \in \mathcal{Z}$.
The target parameter is the average potential outcome for a given response type $\mathbf{u}$:
\begin{align}
\label{eq:ratiohp}
\E[ Y(d) \mid \mathbf{U} = \mathbf{u}]
= \frac{\sum_{y \in \supp Y} y \, \Pr(Y(d)=y, \mathbf{U}=\mathbf{u})}{\Pr(\mathbf{U}=\mathbf{u})}.
\end{align}

The independence assumption \eqref{eq:indep2} implies, for any $d \in \mathcal{D}$ and $z \in \mathcal{Z}$,
\begin{align}
\label{eq:propscore2hp}
\Pr(D=d \mid Z=z)
= \sum_{\mathbf{u} \in \supp(\mathbf{U})}
   \Pr(D=d \mid \mathbf{U}=\mathbf{u}, Z=z)\, \Pr(\mathbf{U}=\mathbf{u}),
\end{align}
which is parallel to \eqref{eq:propscore2}.
It also implies
\begin{align}
&\Pr(\bm{1}\{Y \in R\}\cdot \bm{1}\{D=d\} \mid Z=z) \nonumber \\
&\quad = \sum_{\mathbf{u} \in \supp(\mathbf{U})}
   \Pr(D=d \mid \mathbf{U}=\mathbf{u}, Z=z)\,
   \Pr(Y(d)\in R, \mathbf{U}=\mathbf{u}), \quad z \in \mathcal{Z}.
   \label{eq:propscore3hp}
\end{align}

As discussed in \citet{HeckmanPinto}, these equations can be written as linear programs.
If the denominator of \eqref{eq:ratiohp} is identified, then bounds on $\E[Y(d) \mid \mathbf{U}=\mathbf{u}]$ are equivalent to bounds on the numerator $\sum_{y \in \supp Y} y \Pr(Y(d)=y, \mathbf{U}=\mathbf{u})$, which is a special case of \eqref{eq:lpbasic} with $q=(y_1, \dots, y_{\#\supp Y})$ and $\bm{\beta}_0=(\Pr(Y(d)=y, \mathbf{U}=\mathbf{u}))_{y \in \supp Y}$.
\end{example}

\begin{example}[Survey Response]
\label{ex:response}
Consider the setup of Example~\ref{ex:hp2}. In addition to the instrument and treatment, suppose we observe a survey response variable
\[
S \;=\; \sum_{d \in \mathcal{D}} \sum_{z \in \mathcal{Z}} \bm{1}\{D=d, Z=z\}\, S(d, z),
\]
where $S(d, z)$ is a binary indicator of whether the potential outcome $Y(d)$ is observed. In contrast to job training, where the employment indicator  $S(d, z)=S(d)=\bm{1}\{Y(d)>0\}$ satisfies an exclusion restriction (e.g., \cite{ChenFlores}) by design, the indicator $S(d, z)$ is unrestricted.   The observed data consist of $(Z, X, D, S, S\cdot Y)$, where $Y$ is assumed to have finite support.

Define the unobserved heterogeneity vector $\mathbf{U}$ as the $N_Z (N_D+1)$-dimensional random vector collecting the counterfactual treatment assignments and response indicators:
\[
\mathbf{U}
= \bigl(D(z_1), \dots, D(z_{N_Z}), (S(d, z))_{d \in \mathcal{D},\,z \in \mathcal{Z}}\bigr)'.
\]
Then, analogously to \eqref{eq:propscore3hp}, we obtain
\begin{align}
\Pr(S=1, D=d \mid Z=z)
&=\sum_{\mathbf{u} \in \mathcal{U}}
   \Pr(S(d, z)=1, D=d \mid \mathbf{U}=\mathbf{u}, Z=z)\,
   \Pr(\mathbf{U}=\mathbf{u} \mid Z=z) \label{eq:propscore_app_resp} \\
&=\sum_{\mathbf{u} \in \mathcal{U}}
   \Pr(S(d, z)=1, D=d \mid \mathbf{U}=\mathbf{u}, Z=z)\,
   \Pr(\mathbf{U}=\mathbf{u}), \nonumber
\end{align}
where the last equality uses the independence $\mathbf{U} \ci Z$.  In this setup, it is most informative to focus on treatment effect parameters for the always-observed principal strata. For example, one may consider the average potential outcome for always-observed compliers:
\[
\E\!\big[ Y(d) \,\big|\,
(S(d, z)=1)_{d \in \mathcal{D},\,z \in \mathcal{Z}}, \, D(1)=1, \, D(0)=0\big].
\]
\end{example}

We include below a final core example drawn from the random utility literature. 

\begin{example}[Random utility models, \citep*{KitamuraStoye,KitamuraStoye2019}]
\label{ex:rum}
A random utility model (RUM, \citep*{KitamuraStoye,KitamuraStoye2019}) partially identifies counterfactual demand from a repeated cross-section of prices, budgets, and observed product choices. Suppose there are $K$ goods and $J$ budgets. Given a vector $p \in \mathbf{R}^K$, the budget set is $B(p) = \{y \in \mathbf{R}^K : p'y = 1\}$ and the consumption bundle satisfies $Y \in B(p)$. Let $A$ be a binary matrix whose columns represent possible non-stochastic demand systems, $\bm{\beta}_0 \ge 0$ a probability distribution over demand systems, and $\bm{b}_0$ a vector of observed factual shares $(\Pr(Y \in V \mid P = p_j))_{j \in J,\, V \in \mathcal{V}}$. \citet{KitamuraStoye2019} propose bounds on functions of counterfactual demand; our framework delivers debiased inference for such bounds when baseline covariates $X$ are available.
\end{example}

\subsection{Covariates}
\label{sec:2}

This section extends the linear program \eqref{eq:linearcons} to incorporate covariates. 
Specifically, suppose the system can be written as the conditional \textit{linear program}
\begin{align}
\label{eq:condineq}
A \bm\beta_0(x) = \bm{b}_0(x), \qquad x \in \mathcal{X},
\end{align}
where $\bm\beta_0(x) = \E[\bm{\Pi} \mid X=x]$ is the $d$-dimensional vector function summarizing unobserved heterogeneity, 
and $\bm{b}_0(x) = \E[\bm{B} \mid X=x]$ is the $k$-dimensional vector function that is estimable. 
This assumption is high level and must be verified on a case-by-case basis. 
Below, we revisit several motivating examples and show how the conditional linear restriction \eqref{eq:condineq} arises from first principles. 

\begin{continuance}{\ref{ex:hp1}}
The independence assumption \eqref{eq:indep} implies
\begin{align}
&\Pr (Y \in R \mid Z=z, X=x)  \\
&= \sum_{\mathbf{u} \in \supp \mathbf{U}}
  \Pr( Y \in R \mid \mathbf{U}=\mathbf{u}, Z=z, X=x )\,
  \Pr(\mathbf{U}=\mathbf{u} \mid Z=z, X=x) \nonumber \\
&= \sum_{\mathbf{u} \in \supp \mathbf{U}}
  \Pr( Y \in R \mid \mathbf{U}=\mathbf{u}, Z=z )\,
  \Pr(\mathbf{U}=\mathbf{u} \mid X=x). \nonumber
\end{align}
The independence assumption \eqref{eq:indep} implies that \eqref{eq:condineq} holds with $\bm{b}_0(x) = (s(1, x), s(0, x), 1)'$ where $s(d, x) = \Pr (S=1 \mid D=d, X=x)$ for $d \in \{1, 0\}$ and $A$ as in \eqref{eq:linearconshsc}. \end{continuance}

For any $q \in \mathcal{S}^{d-1}$, consider the conditional linear program
\begin{align}
\label{eq:standardform}
\sigma(q,x) := \max_{\bm\beta_0} \; q' \bm\beta_0 
\quad \text{s.t. } A \bm\beta_0 = \bm{b}_0(x), \; \bm\beta_0 \geq 0.
\end{align}
Averaging over the distribution of covariates gives
\begin{align}
\label{eq:targetlp}
\sigma(q) = \E[\sigma(q,X)],
\end{align}
which characterizes the boundary of the covariate-assisted \textit{identified set}. This set is
\begin{align}
\mathcal{B} 
= \bigcap_{q \in \mathcal{S}^{d-1}} \{ b \in \mathbf{R}^d : q'b \leq \sigma(q) \}, 
\label{eq:idset}
\end{align}
where $\sigma(q)$ is given by \eqref{eq:standardform}--\eqref{eq:targetlp}.  
Invoking the argument of \citet*{Artstein} (cf.\ \citet*{BM}, Definition 5, p.\ 771) yields the following Proposition:

\begin{proposition}[Covariate-assisted identified set]
\label{lem:identification}
The set $\mathcal{B}$ is convex and compact and has support function $\sigma(q)$ in \eqref{eq:targetlp}.
Equivalently, $\mathcal{B}$ consists of points of the form $\bm{\beta}_0 = \E[\bm{\beta}_0(X)]$,
where $\bm{\beta}_0(x)$ satisfies \eqref{eq:condineq}.
\end{proposition}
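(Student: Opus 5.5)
The plan is to introduce the set-valued map $x \mapsto \mathcal{B}(x) := \{\bm\beta \in \mathbf{R}^d : A\bm\beta = \bm{b}_0(x),\ \bm\beta \ge 0\}$ and the set of its integrable selections,
\[
\mathcal{E} := \{\E[\bm\beta(X)] : \bm\beta(\cdot) \text{ measurable},\ \bm\beta(x)\in\mathcal{B}(x) \text{ a.s.}\},
\]
which is the Aumann expectation of the random set $\mathcal{B}(X)$. The argument then proceeds in three steps. First, show that $\mathcal{E}$ is convex and compact. Second, show that the support function of $\mathcal{E}$ equals $\sigma(q)$. Third, conclude $\mathcal{E}=\mathcal{B}$, because a compact convex set equals the intersection of its supporting half-spaces.

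\textbf{Step 1: properties of $\mathcal{B}(x)$.} Each $\mathcal{B}(x)$ is a polyhedron, so it is closed and convex. It is bounded, as in Example~\ref{ex:hsc}, where $A$ contains the row $\mathbf{1}'$ with right-hand side $1$, which puts $\mathcal{B}(x)$ inside the simplex. In general I would assume that some row combination $\lambda'A$ has strictly positive entries; this implies $\|\bm\beta\|_1 \le C$ uniformly in $x$ whenever $\bm{b}_0$ is bounded. Hence $\mathcal{B}(X)$ is an integrably bounded random compact convex set. I also assume $\mathcal{B}(x)\neq\emptyset$ a.s., which holds because $\bm\beta_0(x)$ is a feasible point. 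Measurability of the map follows because $\bm{b}_0$ is measurable and the constraints are affine, so the graph is measurable.

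\textbf{Step 2: support-function identity.} For any selection $\bm\beta(\cdot)$ we have $q'\bm\beta(X)\le\sigma(q,X)$ pointwise, so $\sup_{\bm b\in\mathcal{E}} q'\bm b\le \E[\sigma(q,X)]$. For the reverse inequality, I need a measurable maximizer $\bm\beta^*_q(x)\in\arg\max_{\bm\beta\in\mathcal{B}(x)} q'\bm\beta$. This follows from the Kuratowski--Ryll-Nardzewski measurable selection theorem, or more concretely from LP theory. There are finitely many bases of $A$. For each basis, the basic solution is an affine function of $\bm{b}_0(x)$, and the conditions that it is feasible and optimal are measurable in $x$. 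Choosing the first such basis in a fixed order therefore gives a measurable optimal vertex. Then $\E[\bm\beta^*_q(X)]\in\mathcal{E}$ attains $\E[\sigma(q,X)]$. This is the Artstein / \citet{BM} fact $h_{\E \mathcal{B}(X)}(q)=\E h_{\mathcal{B}(X)}(q)$.

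\textbf{Step 3: convexity, compactness, and conclusion.} $\mathcal{E}$ is convex because convex combinations of selections are again selections. It is bounded by the uniform bound $C$. For closedness I would rely on Artstein's (or Aumann's) theorem: for integrably bounded random closed sets on a nonatomic probability space, the selection expectation is compact and convex, and it coincides with the expectation of the convex hull. Here the sets are already convex, so no convexification is needed. If there are atoms, closedness still follows because the selections form a weakly compact set in $L^1$ (they are uniformly bounded and take values in closed convex sets), and the expectation is weakly continuous. A closed convex set equals the intersection over $q$ of the half-spaces $\{q'b\le h(q)\}$, so $\mathcal{E}=\bigcap_q\{b: q'b\le\sigma(q)\}=\mathcal{B}$. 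This gives both claims of Proposition~\ref{lem:identification}. The intersection \eqref{eq:idset} is also closed and convex directly.

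\textbf{Main obstacle.} I expect the main difficulty to be the measurable selection of optimizers together with the closedness of $\mathcal{E}$. The plan is to handle the first through the finite-basis construction, which avoids abstract selection theorems, and the second through weak $L^1$ compactness of the selection set.
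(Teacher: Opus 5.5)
Your proposal is correct and takes essentially the same route as the paper: identify $\mathcal{B}$ with the selection (Aumann) expectation of the random polytope $\mathcal{B}(X)$, and use the Artstein/\citet{BM} identity $h_{\E\mathcal{B}(X)}(q)=\E[h_{\mathcal{B}(X)}(q)]=\sigma(q)$. You also supply details the paper only cites or asserts: a measurable optimal basic solution, closedness of the selection expectation via weak $L^1$ compactness, and the support function of the intersection obtained from the half-space representation of a compact convex set. Your boundedness assumption, that $\lambda'A>0$ for some $\lambda$, is by Gordan's alternative equivalent to the paper's standing assumption that $\sigma(q,x)$ is finite for every $q$, so it costs nothing.
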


Proposition \ref{lem:identification} shows that $\mathcal{B}$ can be characterized in two equivalent ways:
(i) as the average of partially identified solutions to \eqref{eq:condineq} (random set approach, \citep*{BM,BMM2}),
or (ii) as the intersection of supporting hyperplanes defined by conditional projections of \eqref{eq:condineq} (support function approach, \citep*{BM,BMM,CCMS}).

\section{Theoretical Results}

Section \ref{sec:review} gives an overview of strong duality and presents a dual representation of the boundary, which  facilitates estimation and inference.
Section \ref{sec:mr2} derives an influence function for the boundary.

\subsection{Dual Identification}
\label{sec:review}

Given a direction $q \in \mathbf{R}^d$, the linear system \eqref{eq:standardform} can be written as a standard-form \textit{linear program} (LP): 
\begin{align*}
\sigma(q,x) &= \max \; q' \bm\beta_0 \\
\text{subject to } \quad A \bm\beta_0 &= \bm{b}_0(x), \\
\bm\beta_0 &\geq 0.
\end{align*}
The data enter the problem only through the expectation function $\bm{b}_0(x) = \E[ \bm{B} \mid X=x]$. The primal optimal value $\sigma(q, x)$ is assumed finite for every covariate value $x \in \mathcal{X}$.   The dual LP is
\begin{align*}
\sigma_D(q, x) &= \min \; \nu' \bm{b}_0(x)  \\
\text{subject to } \quad A' \nu - \lambda - q &= 0, \\
\lambda &\geq 0,
\end{align*}
where $\nu \in \mathbf{R}^k$ and $\lambda \in \mathbf{R}^d$ are the dual variables associated with the equality and inequality constraints. Eliminating the slack vector $\lambda$ via $\lambda = A'\nu - q \geq 0$ gives the inequality form of the dual feasible set
\begin{align}
\label{eq:dualset}
A' \nu \geq q,
\end{align}
which is a data-free, covariate-free convex polytope with a finite vertex set $\mathcal{T}=\mathcal{T}(q)$, that is, the set of extreme points of $\{\nu \in \mathbb{R}^k : A'\nu \geq q\}$.  Because most empirical questions fix a direction $q$, we suppress the argument and write $\mathcal{T}$ for $\mathcal{T}(q)$ whenever no ambiguity arises. For the dual variable as \textit{any} minimizer
\begin{align}
\label{eq:dualshadowprice1}
\nu_0(x) \in \arg \min_{\nu \in \mathcal{T}} \nu' \bm{b}_0(x),
\end{align}
$\sigma_D(q, x) = \nu_0(x)' \bm{b}_0(x)$ is the dual optimal value. For linear programs, the primal and dual optimal values coincide:
\begin{align}
\label{eq:strongduality}
\sigma(q,x) = \sigma_D(q,x) = \nu_0(x)' \bm{b}_0(x),
\end{align}
that is, strong duality holds.  Aggregating $\sigma(q, x)$ over the covariate distribution gives a dual representation for $\sigma(q)$:
\begin{align}
\label{eq:mainpsi2}
\sigma(q) = \E[ \sigma(q, X) ] = \E[ \nu_0(X)' \bm{b}_0(X)].
\end{align}

\begin{continuance}{\ref{ex:hsc}}
The upper Fréchet–Hoeffding bound on the always-takers’ share is
\[
\E_X \big[\min (s(0, X), s(1, X))\big] 
= \E \big[\nu_0(X)' \bm{b}_0(X)\big],
\]
where the expectation function is 
$\bm{b}_0(x) = (s(1, x), \; s(0, x), \; 1)'$.   The dual value $\nu_0(x)$ reduces to
\begin{align}
\nu_0(x) = 
\begin{cases} 
(0, 1, 0)' & s(1, x) > s(0, x), \\[6pt]
(1, 0, 0)' & s(1, x) < s(0, x), \\[6pt]
w (1, 0, 0)' + (1-w)(0, 1, 0)', & s(1, x) = s(0, x), \;\; w \in [0, 1].
\end{cases}
\end{align}
In the first two cases, where $s(1, x) \neq s(0, x)$, the binding vertex is unique. In the third case, when $s(1, x)=s(0, x)$, the binding vertex is not unique: the dual value $\nu_0(x)$ can take any value on the line segment $\{w (1, 0, 0)' + (1-w)(0, 1, 0)': w \in [0, 1]\}$.
\end{continuance}

\begin{proposition}
\label{cor:tightness}
Let $\mathcal{T}$ be the set of vertices of the dual feasible set  $\{\nu \in \mathbf{R}^k: A' \nu \geq q\}$ defined in \eqref{eq:dualset}.   The following statements hold:

(1) The basic (i.e., no-covariate) boundary is an intersection bound
\[
\bar{\sigma}(q) = \inf_{\nu \in \mathcal{T}} \nu' \E[\bm{b}_0(X)] 
= \inf_{\nu \in \mathcal{T}} \nu' \bm{b}_0,
\]

(2) The covariate-assisted boundary is an aggregated intersection bound
\begin{align}
\label{eq:mainpsi3}
\sigma(q) 
&= \E \big[ \inf_{\nu \in \mathcal{T}} \nu' \bm{b}_0(X)\big] 
= \E \big[\nu_0(X)' \bm{b}_0(X)\big] 
= \E \big[\nu_0(X)' \mathbf{B}\big].
\end{align}

(3) The covariate-assisted boundary is weakly tighter than the basic one:
\begin{align}
\label{eq:weakinequality}
\sigma(q)
= \E \big[\inf_{\nu \in \mathcal{T}} \nu' \bm{b}_0(X)\big]
\;\leq\; \inf_{\nu \in \mathcal{T}} \nu' \E[\bm{b}_0(X)]
= \bar{\sigma}(q),
\end{align}
and the covariate-assisted set $\mathcal{B}$ is weakly contained in the basic set $\bar{\mathcal{B}}$.
\end{proposition}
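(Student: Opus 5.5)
The plan is to obtain all three parts from pointwise strong LP duality together with the fundamental theorem of linear programming applied to the dual polyhedron $\{\nu: A'\nu \geq q\}$. First I will note that this set is covariate-free and data-free. I will assume it has at least one vertex (it is pointed), which holds whenever $A$ has full row rank $k$, as in the examples. Finiteness of $\sigma(q,x)$ for every $x$, assumed in Section~\ref{sec:review}, makes the primal feasible and bounded. Strong duality then gives $\sigma(q,x)=\min\{\nu'\bm b_0(x): A'\nu\geq q\}$, and this minimum is finite. A linear objective that is bounded below on a pointed polyhedron attains its minimum at a vertex, so the minimum equals $\min_{\nu\in\mathcal T}\nu'\bm b_0(x)$. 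Because $\mathcal T$ is finite, the infimum is a minimum, and it is attained at any selection $\nu_0(x)$ as in \eqref{eq:dualshadowprice1}.

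For part (1), I will apply the same argument to the unconditional program \eqref{eq:lpbasicq} with right-hand side $\bm b_0=\E[\bm b_0(X)]$, using the law of iterated expectations $\E[\mathbf B]=\E[\E[\mathbf B\mid X]]$. Finiteness follows because $\bm b_0$ is an average of points $\bm b_0(x)$ at which the primal is feasible. Averaging the feasible conditional solutions, $\E[\bm\beta_0(X)]$ is feasible for \eqref{eq:lpbasicq}. Boundedness holds because the dual feasible set is nonempty, which follows from finiteness at any single $x$.

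For part (2), I will take expectations of the pointwise identity, using \eqref{eq:mainpsi2}, to get the first two equalities. Measurability of $x\mapsto\nu_0(x)$ has to be handled here. I will choose $\nu_0$ by a fixed tie-breaking order on the finite set $\mathcal T$, so that $\nu_0$ is a finite-valued Borel function of $\bm b_0(x)$. Integrability follows from $|\nu'\bm b_0(X)|\le \max_{\mathcal T}\|\nu\|\,\|\bm b_0(X)\|$. The last equality, $\E[\nu_0(X)'\bm b_0(X)]=\E[\nu_0(X)'\mathbf B]$, follows from the tower property, since $\nu_0(X)$ is $X$-measurable and $\E[\mathbf B\mid X]=\bm b_0(X)$.

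For part (3), for each fixed $\nu\in\mathcal T$ we have $\inf_{\nu''}\nu''{}'\bm b_0(X)\le \nu'\bm b_0(X)$ pointwise. Taking expectations and then the minimum over the finite set $\mathcal T$ gives \eqref{eq:weakinequality}; this is Jensen's inequality for the concave function $b\mapsto\min_{\nu\in\mathcal T}\nu'b$. The inequality holds for every direction $q$. Hence $\sigma(q)\le\bar\sigma(q)$ for all $q$, and comparing the two sets defined as intersections of half-spaces as in \eqref{eq:idset} gives $\mathcal B\subseteq\bar{\mathcal B}$.

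The main obstacle is the reduction from the whole dual polyhedron to its vertex set $\mathcal T$. This needs the polyhedron to be pointed, which requires a full-row-rank condition on $A$ that must be stated or argued. I would handle it by first removing redundant equality rows without changing $\sigma$. Failing that, I would restate $\mathcal T$ as a finite set of representative points, one from each minimal face.
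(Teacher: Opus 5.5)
Your proposal is correct and follows the same route as the paper's proof. The paper likewise uses pointwise strong LP duality with the minimum attained at a vertex of the data-free dual polyhedron, the law of iterated expectations for the last equality in (2), and concavity of $b\mapsto\min_{\nu\in\mathcal{T}}\nu'b$ (Jensen) together with a comparison of half-space representations for (3). Your extra steps fill in details the paper leaves implicit: requiring $A$ to have full row rank so that $\{\nu: A'\nu\geq q\}$ is pointed (the paper simply calls it a polytope and asserts vertex attainment), deriving finiteness of $\bar\sigma(q)$ rather than assuming it, and fixing a measurable tie-breaking rule for $\nu_0$.
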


Proposition \ref{cor:tightness} characterizes the boundary of the identified set for $\bm{\beta}_0$ with and without covariates. It represents the boundary (i.e., support function) as (aggregated) intersection bounds for a general class of linear programs. As we demonstrate later on, this representation facilitates estimation and inference.

\subsection{Influence Function} 
\label{sec:mr2}

In this section, we derive the influence function for the support function. Our first assumption requires the dual variable to be unique a.s. in covariate space. This assumption rules out ``flat faces'' and establishes regularity of support function.

\begin{assumption}[Unique Dual Vertex]
\label{ass:boundary}
The dual minimization problem \eqref{eq:dualshadowprice1} has a unique minimizer $\nu_0(x)$ for each $x \in \mathcal{X}$. In other words, for any distinct $\nu_1, \nu_2 \in \mathbf{R}^k$, the probability that both achieve the same dual value as $\nu_0(x)$ is zero:
\begin{align}
\label{eq:zeromass}
\Pr\!\Big( X : \exists \nu_1 \neq \nu_2 \in \mathcal{T}
\ \text{s.t.}\ 
\nu_1' \bm{b}_0(X)  = \nu_2' \bm{b}_0(X) = \nu_0(X)'  \bm{b}_0(X) \Big) = 0.
\end{align}
\end{assumption}

Assumption \ref{ass:boundary} requires that the binding vertex in the set $\mathcal{T}$ is unique almost surely under $P_X$. If this condition holds, we can define the dual function mapping $\mathcal{X}$ into $\mathcal{T}$ as
\begin{align}
\label{eq:dualshadowprice}
\nu_0(x) 
= \sum_{\nu \in \mathcal{T}} \nu \, \bm{1}\!\left\{ \nu \in \arg\min_{\tilde\nu \in \mathcal{T}} \tilde\nu' \bm{b}_0(X) \right\}.
\end{align}
This assumption is plausible if the vector $\bm{b}_0(X)$ is continuously distributed, for example, if it has an a.s. bounded density.

The following Assumption \ref{ass:secmom} is a mild technical condition on the  random variable $\mathbf{B}$. 
For example, if the random variable $\mathbf{B}$ is bounded a.s., this assumption is trivially satisfied.

\begin{assumption}[Bounded Second Moment]
\label{ass:secmom}
The variance of the signal vector $\mathbf{B}$ is bounded in operator norm:
\begin{align}
\label{eq:boundedsecondmoment}
\sup_{x \in \mathcal{X}} \lambda_{\max}\!\left( \E_P[\mathbf{B}\mathbf{B}' \mid X=x] \right) \;\leq\; \bar B.
\end{align}
\end{assumption}

A statistical functional $\theta(P)$ is \emph{pathwise differentiable} at~$P_0$
if, for every regular parametric submodel $\{P_\varepsilon : \varepsilon \in
(-\delta, \delta)\}$ passing through~$P_0$ at $\varepsilon = 0$ with
score~$S(W)$, the map $\varepsilon \mapsto \theta(P_\varepsilon)$ is
differentiable at $\varepsilon = 0$ and the derivative can be represented as
\begin{align}
\label{eq:pathwise_def}
\left.\frac{d}{d\varepsilon}\right|_{\varepsilon=0} \theta(P_\varepsilon)
\;=\; \E_{P_0}\!\big[\phi(W)\, S(W)\big]
\end{align}
for some mean-zero, finite-variance function $\phi(W)$ that does not depend on the
choice of submodel.

Pathwise differentiability underwrites regular $\sqrt{N}$-inference:
a regular, asymptotically linear estimator $\widehat\theta$ with influence
function $\phi$ satisfies $\sqrt{N}(\widehat\theta - \theta) =
N^{-1/2}\sum_{i=1}^N \phi(W_i) + o_P(1)$ and is asymptotically normal with
variance $\Var(\phi)$. When pathwise differentiability fails --- as it does
for the pointwise support function at a fixed~$b$ when flat faces are present
\citep{HiranoPorter2012} --- regular $\sqrt{N}$-inference is generally
impossible without additional smoothing or regularization.

\begin{proposition}
\label{thm:ptw}
Suppose Assumptions~\ref{ass:boundary} and~\ref{ass:secmom} hold. Then $\sigma(q)$ is pathwise differentiable, and
\begin{align}
\label{eq:if}
\phi_q(W) \;=\; \nu_0(X)'\mathbf{B} - \sigma(q)
\end{align}
is an influence function for $\sigma(q)$. Consequently, any regular, asymptotically linear estimator $\widehat\sigma(q)$ with influence function $\phi_q$ --- the existence of which is established for the cross-fitted plug-in of Definition~\ref{def:dual} in Proposition~\ref{cor:lp} --- admits the representation
\begin{align}
\label{eq:alinear}
\sqrt{N}\big(\widehat\sigma(q) - \sigma(q)\big)
\;=\; \frac{1}{\sqrt{N}}\sum_{i=1}^N \phi_q(W_i) + o_P(1),
\end{align}
and is asymptotically normal with variance $\Var(\phi_q(W)) = \Var(\nu_0(X)'\mathbf{B})$, where the second equality uses the fact that $\sigma(q)$ is a deterministic constant for each fixed $q$.
\end{proposition}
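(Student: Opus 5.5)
The proof has two parts. The first shows pathwise differentiability along an arbitrary regular submodel. The second is the asymptotic-linearity consequence, which is immediate from the definition once $\phi_q$ is identified.

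\emph{Setup and decomposition.} Fix a regular submodel $\{P_\varepsilon\}$ with score $S(W)=S_X(X)+S_{W\mid X}(W\mid X)$, where $\E[S_{W\mid X}\mid X]=0$. Write $\bm b_\varepsilon(x)=\E_{P_\varepsilon}[\mathbf B\mid X=x]$. By Proposition~\ref{cor:tightness}(2),
\[
\sigma_\varepsilon(q)=\E_{P_\varepsilon}\big[m(\bm b_\varepsilon(X))\big], \qquad m(b):=\min_{\nu\in\mathcal T}\nu' b .
\]
Here $m$ is concave, piecewise linear and Lipschitz, since $\mathcal T$ is finite. Split the difference as
\[
\sigma_\varepsilon-\sigma_0=\underbrace{(\E_{P_\varepsilon}-\E_{P_0})\big[m(\bm b_0(X))\big]}_{\text{(I)}}+\underbrace{\E_{P_\varepsilon}\big[m(\bm b_\varepsilon(X))-m(\bm b_0(X))\big]}_{\text{(II)}} .
\]
Term (I) has derivative $\E[m(\bm b_0(X))S_X(X)]=\E[\nu_0(X)'\bm b_0(X)\,S(W)]$ by standard submodel differentiation.

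\emph{Main step: the derivative of (II).} This is the step I expect to be hardest, because it is where Assumption~\ref{ass:boundary} enters. For each $x$, $m$ is differentiable at $\bm b_0(x)$ with gradient $\nu_0(x)$ whenever the minimizer is unique, which holds $P_X$-a.s. Since $m$ is piecewise linear, for a.e. $x$ we have $m(\bm b_\varepsilon(x))-m(\bm b_0(x))=\nu_0(x)'(\bm b_\varepsilon(x)-\bm b_0(x))$ once $\|\bm b_\varepsilon(x)-\bm b_0(x)\|$ falls below the gap $\min_{\nu\ne\nu_0(x)}(\nu-\nu_0(x))'\bm b_0(x)/C$, which is positive a.s. Divide by $\varepsilon$ and pass to the limit by dominated convergence. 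The domination comes from the Lipschitz bound $|m(b)-m(b')|\le \max_{\nu\in\mathcal T}\|\nu\|\,\|b-b'\|$ together with the representation
\[
\varepsilon^{-1}\big(\bm b_\varepsilon(x)-\bm b_0(x)\big)\approx \E\big[\mathbf B\, S_{W\mid X}\mid X=x\big],
\]
whose $L^2$ bound follows from Cauchy--Schwarz and Assumption~\ref{ass:secmom}. I would handle the $P_\varepsilon$ versus $P_0$ measure in (II) by the usual argument that it contributes only $o(\varepsilon)$. The limit is then
\[
\E\big[\nu_0(X)'\E[\mathbf B S_{W\mid X}\mid X]\big]=\E\big[\nu_0(X)'\mathbf B\, S_{W\mid X}\big].
\]
Adding (I) and (II), and using $\E[\nu_0(X)'(\mathbf B-\bm b_0(X))S_X(X)]=0$ and $\E[\nu_0(X)'\bm b_0(X)S_{W\mid X}]=0$, gives
\[
\frac{d}{d\varepsilon}\Big|_{\varepsilon=0}\sigma_\varepsilon(q)=\E\big[\nu_0(X)'\mathbf B\, S(W)\big]=\E\big[\phi_q(W)S(W)\big],
\]
where the last equality holds because $\E S=0$. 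Finally, $\phi_q$ has mean zero by \eqref{eq:mainpsi3}, and it has finite variance because $\nu_0$ takes finitely many values and $\E\|\mathbf B\|^2\le k\bar B$.

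\emph{Consequences.} Representation \eqref{eq:alinear} is exactly the definition of an asymptotically linear estimator with influence function $\phi_q$. Asymptotic normality with variance $\Var(\phi_q)$ then follows from the Lindeberg--Lévy CLT and Slutsky's lemma. Since $\sigma(q)$ is constant, $\Var(\phi_q)=\Var(\nu_0(X)'\mathbf B)$.
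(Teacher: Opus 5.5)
Your proof is correct, and it handles the key step differently from the paper. The paper splits $\sigma(P_\varepsilon)-\sigma(P_0)$ into a ``kink'' term $\E_{P_\varepsilon}[\{d_{\bm{b}_\varepsilon}(X)-d_{\bm{b}_0}(X)\}'\bm{b}_\varepsilon(X)]$ and a fixed-selector term. It bounds the kink integrand by a constant $C$ on the near-tie set $\{\Delta(x)<\delta\}$ and shows it vanishes on the complement for small $\varepsilon$. It then couples $\delta=\delta(\varepsilon)$ so that $|\varepsilon|<\varepsilon_0(\delta)$ and $P_X(\Delta(X)<\delta)=o(|\varepsilon|)$, and leaves the marginal-$X$ part of the score to ``standard calculations.''

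Your split into a change-of-measure term (I) and a perturbation term (II) makes the $S_X$ and $S_{W\mid X}$ contributions explicit. Your use of the Lipschitz property of $m$ is the real gain. With $C_\nu=\max_{\nu\in\mathcal{T}}\|\nu\|$,
\[
\big|m(\bm{b}_\varepsilon)-m(\bm{b}_0)-\nu_0'(\bm{b}_\varepsilon-\bm{b}_0)\big|\le 2C_\nu\|\bm{b}_\varepsilon-\bm{b}_0\|\,\mathbf{1}\{2C_\nu\|\bm{b}_\varepsilon-\bm{b}_0\|\ge\Delta\}.
\]
This is an $O(\varepsilon)$ integrand on a set of vanishing mass, so the switching contribution is $o(\varepsilon)$ with no coupling at all.

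The paper's constant bound $C$ is what forces its coupling, and the two requirements pull against each other. Ruling out switching on the safe set generically needs $\delta(\varepsilon)\gtrsim|\varepsilon|$. But if $P_X(\Delta(X)<t)$ is of exact order $t$ (the leading case the margin condition~\eqref{eq:bound} is written for), then $P_X(\Delta(X)<\delta(\varepsilon))$ is of order $|\varepsilon|$, not $o(|\varepsilon|)$. Your route is therefore the sturdier one.

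The one step you should tighten is the domination. An $L^2$ bound on the limit $\E[\mathbf{B}S_{W\mid X}\mid X]$ does not dominate the quotients $\varepsilon^{-1}(\bm{b}_\varepsilon-\bm{b}_0)$ uniformly in $\varepsilon$, which is what dominated convergence needs. Along $dP_\varepsilon=(1+\varepsilon S)\,dP_0$ with bounded $S$ this is easy. There the quotient equals $\E[\mathbf{B}(S-\E[S\mid X])\mid X]/(1+\varepsilon\E[S\mid X])$, which is bounded by $2\|S\|_\infty(k\bar B)^{1/2}/(1-|\varepsilon|\|S\|_\infty)$ using Assumption~\ref{ass:secmom}. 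For general submodels, argue instead via $L^1(P_X)$ convergence of the quotients together with the display above and uniform integrability.
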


Proposition~\ref{thm:ptw} shows that the influence function for the boundary of the aggregated identified set depends only on the dual variable $\nu_0(x)$ and on the chosen signal $\mathbf{B}$. The key finding is that the identity of the binding dual vertex $\nu_0(x)$ can be treated as known. No correction term is needed for estimating the $\arg\min$ in \eqref{eq:dualshadowprice}, paralleling the ``oracle property'' in \citet{LuedtkeLaan}.

\begin{remark}[\citet{LuedtkeLaan} as a special case]
In Example \ref{ex:hsc}, Assumption \ref{ass:boundary} reduces to
\[
\Pr\!\big(s(1, X) = s(0, X)\big) = 0.
\]
If this condition holds, the semiparametric efficiency bound for $\E[\min(s(1, X), s(0, X))]$ is well defined \citep{LuedtkeLaan}; otherwise, regular estimators may not exist \citep{HiranoPorter2012}. Let $\pi(X) = \Pr(D=1 \mid X)$ be the propensity score. Taking $\mathbf{B}$ as the doubly robust signal of \citet{Robins}, the dual signal $\nu_0(X)'\mathbf{B}$ entering the influence function $\phi_q(W) = \nu_0(X)'\mathbf{B} - \sigma(q)$ of Proposition~\ref{thm:ptw} reduces to
\begin{align}
\label{eq:dr}
\nu_0(X)'\mathbf{B}
\;&=\; \bm{1}\{s(1, X) > s(0, X)\}\,\cdot\,\left[\, s(0, X) \;+\; \frac{1-D}{1-\pi(X)}\,\bigl(S - s(0, X)\bigr)\,\right] \nonumber\\
&\quad +\; \bm{1}\{s(1, X) < s(0, X)\}\,\cdot\,\left[\, s(1, X) \;+\; \frac{D}{\pi(X)}\,\bigl(S - s(1, X)\bigr)\,\right].
\end{align}
The right-hand side coincides, up to centering, with the efficient influence function for $\E[\min(s(0, X), s(1, X))]$ established in \citet[Theorem 1]{LuedtkeLaan}.
\end{remark}

\begin{remark}[\citet{Robins} as a special case]
In Example \ref{ex:hsc}, if covariates fail to detect the sign change, i.e.
\[
s(1, x) > s(0, x) \quad \text{for all } x \in \mathcal{X},
\]
then $\nu_0(x) = (0, 1, 0)'$ for all $x$, and \eqref{eq:dr} reduces to the efficient influence function of \citet{Robins} for $s(0) = \E[s(0, X)] = \Pr(S=1 \mid D=0)$.
\end{remark}

\begin{remark}[\cite{levis2023covariateassisted} as a special case]
\cite{levis2023covariateassisted} studies the classical Balke--Pearl bounds \citep{BalkePearl1997}, which can be written as a special case of Example \ref{ex:hp1} with binary outcomes. In their formulation, the vector of latent shares is 
\(\bm{\beta}_0 = (\beta_{10}, \beta_{01}, \beta_{11}, \beta_{00})\), denoting the joint distribution of potential outcomes \((Y(1), Y(0))\). The revealed-preference consistency conditions with the observed marginals can be written compactly as
\[
A \bm{\beta}_0 = \bm{b}_0
\]
with
\[
A =
\begin{bmatrix}
1 & 0 & 1 & 0 \\
0 & 1 & 0 & 1 \\
1 & 1 & 0 & 0 \\
0 & 0 & 1 & 1
\end{bmatrix}, \quad
\bm{b}_0 =
\begin{bmatrix}
P(Y=1 \mid Z=0) \\
1-P(Y=1 \mid Z=0) \\
P(Y=1 \mid Z=1) \\
1-P(Y=1 \mid Z=1)
\end{bmatrix}.
\]
\end{remark}

Riesz Representation is often used to construct moment functions obeying orthogonality conditions, see, e.g., \cite{Newey1994}. 
The novelty of this work is to give an example of Riesz representer that is not available in a closed form but instead is given as a solution to linear program.

\begin{remark}[Riesz Representation of Support Function]
\label{lem:imp}
Let $P_{\theta}$ be a parametric submodel of $P_W$, and let $b_{\theta}(x)$ be the expectation function. Then
\begin{align}
\label{eq:adjustmentterm}
\partial_{\theta} \sigma_{\theta}(q) 
= \E \big[ \nu_0(X)' \partial_{\theta} b_{\theta}(X) \big],
\end{align}
so $\nu_0(x)$ is the Riesz representer of $\bm{b}_0(x)$. Adding the correction term and invoking strong duality gives
\[
\sigma(q, X) + \nu_0(X)' (\mathbf{B} - \bm{b}_0(X)) 
= \nu_0(X)' \mathbf{B}.
\]
\end{remark}
Remark \ref{lem:imp} shows that $\nu_0(x)$ is the Riesz representer of the nuisance vector-function $\bm{b}_0(X)$. The last element of representation \eqref{eq:adjustmentterm} provides the adjustment term needed for the construction of the influence function. Adding it to the original (primal) moment $\sigma(q) := \E[\sigma(q, X)]$ yields the influence function
\begin{align}
\psi_q(W) 
= \sigma(q, X) + \nu_0(X)' (\mathbf{B} - \bm{b}_0(X)) - \sigma(q) 
= \nu_0(X)' \mathbf{B} - \sigma(q).
\label{eq:influencefunction}
\end{align}

\section{Illustration: The Jobs First Welfare-to-Work Experiment}
\label{sec:illustration}

In this section, we develop the Jobs First linear program end-to-end, from
the MDRC randomized trial through the coefficient matrix~$A$ to the
finer earnings grid that the empirical analysis of
Section~\ref{sec:emp} exploits. The treatment follows \citet{KT}'s
data construction and discretization verbatim, and adds the mechanical
refinement of~$A$ that is new here. Numerical results appear in
Section~\ref{sec:emp}; the estimation and inference theory used below
is developed in Section~\ref{sec:overview}.

Jobs First was a welfare-to-work assistance program introduced in
Connecticut in the late 1990s as an alternative to the federal Aid to
Families with Dependent Children (AFDC) program. In 1996, the
Manpower Demonstration Research Corporation (MDRC) conducted a
randomized trial in which eligible female applicants were randomly
assigned either to Jobs First (treatment group; we abbreviate as
``JF'' throughout) or left eligible for AFDC (control group). JF
replaced AFDC's unlimited eligibility rules with a $21$-month time
limit and a more generous earnings disregard: recipients could keep
their welfare check while working until their earnings exceeded the
Federal Poverty Line (FPL). The program thus provided stronger work
incentives and enforced stricter participation limits.

\paragraph{Data description.}
The data structure is a special case of Example~\ref{ex:hp1}. The
exogenous treatment indicator is $D_i = 1$ for women assigned to JF
and $D_i = 0$ for women assigned to AFDC; since compliance with
random assignment is perfect in the MDRC trial, we set $Z_i = D_i$
throughout. For each woman $i$, let $E_i$ denote quarterly earnings
and $W_i$ an indicator of welfare receipt. Following \citet{KT}, we
form a discretized earnings outcome $Y^{\text{earn}}_i$ and a binary
welfare-participation outcome $Y^{\text{wel}}_i$ by
\begin{equation}
\label{eq:yearn}
Y^{\text{earn}}_i \;=\; \begin{cases}
0 & \text{if } E_i = 0, \\
1 & \text{if } 0 < E_i \le \text{FPL}, \\
2 & \text{if } E_i > \text{FPL},
\end{cases}
\qquad\qquad
Y^{\text{wel}}_i \;=\; \begin{cases}
p & \text{if } W_i = 1, \\
n & \text{if } W_i = 0.
\end{cases}
\end{equation}
The combined outcome $Y_i = (Y^{\text{earn}}_i, Y^{\text{wel}}_i)$
takes six values in $\{0p,\,1p,\,2p,\,0n,\,1n,\,2n\}$. Three
reporting conventions, inherited from \citet{KT}, govern how these
values map to latent response types:
\begin{enumerate}
\item $2p \equiv 2u$: anyone reporting above-FPL earnings while on
welfare is, by definition, under-reporting.
\item Under AFDC, $1p$ pools two latent types, which are observationally
indistinguishable, $1r$ (truthful) and
$1u$ (under-reporting below FPL).
\item Under JF, the earnings disregard removes the incentive to
under-report below FPL, so $1p = 1r$.
\end{enumerate}
The baseline covariate vector $X_i$ contains $28$ variables: age,
education level, number of children, family and marital status, and a
quarterly history of employment, earnings, AFDC participation, and
food-stamp receipt over the eight quarters prior to random
assignment. The continuously distributed earnings and AFDC-receipt
histories are central to the validity of the margin condition
(Assumption~\ref{ass:mamain}) in this application: the conditional
shares $\bm b_0(x)$ inherit non-degeneracy from these continuous
components, so near-ties between competing dual vertices occur with
probability zero (cf.\ Remark~\ref{rm:margin_verif}). Following
\citet{KT}, we focus on the sample of $N = 4{,}641$ women whose
child-count variable is not missing.

\paragraph{Linear-programming problem.}
We represent this problem as a special case of the linear
system~\eqref{eq:linearcons}. Absent restrictions there are
$7\times 6 = 42$ latent response margins, corresponding to every
pairing of an AFDC state with a JF state. Revealed-preference
arguments spelled out in detail in
\citet[Table~1 and Online Appendix~B]{KT} imply that only $10$ of
these margins are feasible, and one of them, $\beta_{1u,1r}$, is
degenerate at one. The number of free response probabilities
therefore reduces to $9$, which we collect into the vector
\[
\bm{\beta}_0 \equiv \big(
\beta_{0n,1r},\,
\beta_{0r,0n},\,
\beta_{2n,1r},\,
\beta_{0r,2n},\,
\beta_{0r,1r},\,
\beta_{0r,1n},\,
\beta_{1n,1r},\,
\beta_{0r,2u},\,
\beta_{2u,1r}
\big)'.
\]
These latent shares represent the fractions of women whose
counterfactual AFDC and JF states are linked by revealed preference:
$\beta_{s^a, s^j} = P(\text{AFDC state} = s^a,\ \text{JF state} = s^j)$
is the joint probability that a randomly selected woman would occupy
AFDC state $s^a$ and JF state $s^j$. The empirical moments
$\bm{b}_0$ are differences in observed shares under JF and AFDC,
\[
\bm{b}_0 \equiv \big(
p^{\,\text{JF}}_{0n} - p^{\,\text{AFDC}}_{0n},\,
p^{\,\text{JF}}_{1n} - p^{\,\text{AFDC}}_{1n},\,
p^{\,\text{JF}}_{2n} - p^{\,\text{AFDC}}_{2n},\,
p^{\,\text{JF}}_{0p} - p^{\,\text{AFDC}}_{0p},\,
p^{\,\text{JF}}_{2p} - p^{\,\text{AFDC}}_{2p}
\big)',
\]
where $p^{\,\text{AFDC}}_{s}$ and $p^{\,\text{JF}}_{s}$ denote
observed shares in state $s$ under AFDC and JF, respectively. Rather
than working with joint probabilities $\beta_{s^a, s^j}$ directly, we
reparametrize them as conditional probabilities
\[
\pi_{s^a,s^j} \;=\; P(\text{JF state} = s^j \mid \text{AFDC state} = s^a)
\;=\; \beta_{s^a,s^j} / p^{\,\text{AFDC}}_{s^a},
\]
where $p^{\,\text{AFDC}}_{s^a}$ is the IPW-adjusted probability of
state $s^a$ in the AFDC control group. This reparametrization
ensures that the coefficient matrix $A$ is non-stochastic, and
restrictions can be written as
\[
A \bm{\beta}_0 = \bm{b}_0,
\]
with the $5 \times 9$ coefficient matrix
\begin{equation}
\label{eq:Acoarse}
A \;=\; \begin{pmatrix}
-1 & 1 & 0 & 0 & 0 & 0 & 0 & 0 & 0 \\
0 & 0 & 0 & 0 & 0 & 1 & -1 & 0 & 0 \\
0 & 0 & -1 & 1 & 0 & 0 & 0 & 1 & 0 \\
0 & -1 & 0 & -1 & -1 & -1 & 0 & -1 & 0 \\
0 & 0 & 0 & 0 & 0 & 0 & 0 & 1 & -1
\end{pmatrix}.
\end{equation}
Each of the five rows enforces a conservation-of-mass condition: the
observed difference in the share of women in a given observable state
across the two regimes must be accounted for by the latent ``flows''
permitted by revealed preference. Unlike \citet{KT}, who solve this
system by deriving closed-form expressions of the conditional
probabilities (their Online Appendix~B), we work directly with the
linear-programming representation.

The function to be maximized reduces to interpretable expressions of
structural parameters. For example, to bound the transition
probability $\pi_{2n, 1r} = P(\text{JF state} = 1r \mid \text{AFDC state}
= 2n)$, the fraction of women who reduce their labor supply and
opt into welfare in response to Jobs First, we choose
$q = e_3 = (0, 0, 1, 0, 0, 0, 0, 0, 0)'$ in~\ref{eq:lpbasicq}, the third standard basis vector,
and obtain an upper bound on the joint probability $\beta_{2n, 1r}$;
the lower bound corresponds to $q = -e_3$. The combinatorial bound
$|\mathcal{T}(q)| \le \binom{d}{k}$ of Remark~\ref{rm:largeLP} is
$\binom{9}{5} = 126$, but the dual vertex set is small in practice:
for $q = \pm e_3$, $\mathcal{T}(q)$ contains only a handful of
vertices, well within the regime where Section~\ref{sec:overview}'s
fixed-$|\mathcal{T}|$ asymptotic theory applies.

\paragraph{Discretizing outcomes.}
\citet{KT} discretize earnings into three bins - zero earnings,
positive earnings at or below FPL, and earnings above FPL as shown in~\ref{eq:yearn}. They derive bounds for conditional probabilities analytically using closed-form
expressions (see their Online
Appendix~B).

In discussing their empirical results,
\citet[Section~X.A, p.~1009]{KT} themselves pose the following
question:
\begin{quote}\small
``[The] finding of a significant opt-in response could hypothetically
reflect trivial earnings reductions from \$1 above the poverty line
to exactly the poverty line.''
\end{quote}
Answering this question requires looking inside the coarse above-FPL
bin: if the measured opt-in response is concentrated among women who
reduced earnings by only a few dollars from just above FPL to the
FPL threshold, the response is consistent with a trivial rounding;
if instead the response is spread across larger reductions, it is
consistent with substantive labor-supply adjustments. This question
is beyond the scope of existing discretization and requires finer partition of the outcome bins.

Dividing the above-FPL category into two sub-bins proved to be analytically more involved than the baseline with three income bins as \citet{KT} show in Online Appendix, Section 6. The linear programming framework makes this and further refinements easy to implement. Each new sub-bin introduces additional transition parameters (columns) and potentially additional conservation-of-mass equations (rows), so that the coefficient matrix $A$ is replaced by a block-expanded matrix $\widetilde{A}$. One might argue that under this finer granularity regime the construction of the non-stochastic matrix $\widetilde A$ is challenging. However, the refined partition is induced from the main (coarse) one and directly inherits the revealed preference arguments. These arguments underlying each row depend only on the welfare participation and reporting status of the woman, and do not depend on the specific level of earnings within a bin. We will elaborate with a setup and give an example construction below.

\paragraph{Block expansion of $A$.}

Let $\mathcal{S} = \{s_1, \ldots, s_k\}$ denote the original partition of the outcome support into coarse bins resulting in matrix $A$. A \emph{refinement} $\widetilde{\mathcal{S}}$ is any partition that subdivides each $s \in \mathcal{S}$ into one or more disjoint sub-bins; we write $\tilde s \prec s$ to mean that $\tilde s \in \widetilde{\mathcal{S}}$ is contained in $s$. The refinement $\widetilde{\mathcal{S}}$ induces a refined linear system 
\begin{align}
\label{eq:linearcons-refined}
\widetilde A\, \tilde{\bm\beta}_0 = \tilde{\bm b}_0, \qquad \tilde{\bm\beta}_0 \geq 0,
\end{align}
where $\tilde{\bm b}_0$ collects the expectations of indicators for each refined observable state, $\widetilde A$ summarizes the flows into and out of each state at a finer partition level $\widetilde{\mathcal{S}}$ and $\tilde{\bm\beta}_0$ is the vector collecting corresponding finer transition probability parameters.

Revealed preference restricts transitions at the level of latent states, which do not depend on how the outcome is partitioned. Consequently, if the coarse model rules out the transition $s^a \to s^j$, then the refined model rules out every transition $\tilde s^a \to \tilde s^j$ with $\tilde s^a \prec s^a$ and $\tilde s^j \prec s^j$. The matrix $\widetilde A$ therefore inherits a block structure from $A$: each entry $A_{ij}$ expands into a block whose rows now should reflect the sub-bins of $s^a_i$ (if any) and columns should account for new transition parameters. In Section~\ref{sec:emp}, we report main results on a finer partition case with 9 income bins. For a coarse state $s \in \mathcal{S}$, let $m(s) = |\{\tilde s \in \widetilde{\mathcal{S}} : \tilde s \prec s\}|$ denote the number of sub-bins that refine it. Our main granular specification sets $m(s) = 1$ for zero earnings (which admits no further subdivision), $m(s) = 5$ for earnings below FPL, and $m(s) = 3$ for earnings above $\mathrm{FPL}$.

Index the rows of $A$ by coarse observable states $s_r$ and its columns by the pairs $(s^a, s^j)$ that label coarse $\beta$ parameters. The refined matrix $\widetilde A$ then decomposes into blocks $\widetilde A_{r, (s^a, s^j)}$ of size $m(s_r) \times m(s^a)\, m(s^j)$, one for each entry of $A$. Two cases determine the content of each block:
\begin{enumerate}
    \item If $A_{r, (s^a, s^j)} = 0$, the block is zero: no fine-level transition contributes to the flow equation for any sub-bin of $s_r$.
    \item If $A_{r, (s^a, s^j)} \neq 0$, the block is sparse, with entries equal to $A_{r, (s^a, s^j)}$ in positions dictated by fine-level flow conservation. When $m(s_r) = 1$, every fine transition $(\tilde s^a, \tilde s^j) \prec (s^a, s^j)$ enters the single equation, so the block is the row vector $A_{r, (s^a, s^j)} \cdot \mathbf{1}'_{m(s^a)\, m(s^j)}$. When $m(s_r) > 1$, each sub-bin $\tilde s_r \prec s_r$ has its own flow equation, and the coarse coefficient appears only in positions $(\tilde s_r, (\tilde s^a, \tilde s^j))$ for which $\tilde s_r \in \{\tilde s^a, \tilde s^j\}$, i.e. on sub-diagonals selected by the fine-level conservation accounting.
\end{enumerate}

To illustrate the block expansion of $A$ concretely, consider the
transition parameter~$\beta_{2n,\,1r}$, where state~$2n$
(above-FPL, not on welfare) is the source and state~$1r$
(below-FPL, on welfare) is the destination. In the main
granular specification of Section~\ref{sec:emp}, $2n$ is refined into three sub-bins
$\widetilde{\mathcal{S}}_{2n} = \{b_{6}n,\, b_{7}n,\, b_{8}n\}$ and
$1r$ is refined into five sub-bins
$\widetilde{\mathcal{S}}_{1r} = \{b_{1}r,\, b_{2}r,\, b_{3}r,\, b_{4}r,\, b_{5}r\}$.
Here and below, subscripts $1$--$5$ index the five below-FPL
sub-bins and subscripts $6$--$8$ index the three above-FPL
sub-bins.  We choose this transition to illustrate the general case
in which both source and destination states are divided into sub-bins.

Under the coarse partition, the conservation-of-mass constraint for
state~$2n$ reads
\begin{equation}
  \label{eq:coarse_row1}
  -\beta_{2n,\,1r} + \beta_{0r,\,2n} \;=\; p^{j}_{2n} - p^{a}_{2n}.
\end{equation}
The revealed-preference restriction encoded in this row permits
exactly one outflow from~$2n$, to $1r$: a woman earning above the poverty line
while not on welfare under AFDC may reduce her earnings below the
line and take up assistance under~JF.
The only permitted inflow to~$2n$ is from state~$0r$: a woman
on welfare who exits and earns above the poverty line.  Following
\citet{KT}, we drop the conservation constraint for the reference
state~$1r$; the $1r$ partition therefore affects only the column
structure of~$\widetilde{A}$, not its row structure\footnote{Women earning in range 1 on welfare face an incentive to underreport under AFDC, and hence the observable state $1r$ pools truthful reporters and underreporters ($1r$ and $1u$). A conservation-of-mass row for this group would require separating these latent types, which are empirically indistinguishable}.

Refining the partition introduces $3 \times 5 = 15$ new transition
parameters $\{\beta_{b_{k}n,\,b_{i}r}\}$ for ${k\,\in\,\{6, 7, 8\},\;\, 
i\,\in\,\{1, \dots, 5\}}$, one for each pair of a source sub-bin
of~$2n$ and a destination sub-bin of~$1r$.  Because the revealed
preference argument underlying equation~\eqref{eq:coarse_row1} ---
that a woman earning above the poverty line may reduce earnings to
qualify for assistance --- operates at the level of the \emph{coarse}
states $2n$ and $1r$ and is independent of how finely earnings are
classified within each bin, every sub-bin transition $b_{k}n \to
b_{i}r$ is admissible for all $k \in \{6, 7, 8\}$ and
$i \in \{1, \dots, 5\}$.  Consequently, the refined conservation
constraint for each source sub-bin $b_{k}n$ with $k \in \{6,7,8\}$
takes the form
\begin{equation}
  \label{eq:fine_row1}
  -\sum_{i=1}^{5}\beta_{b_{k}n,\,b_{i}r}
  + \beta_{0r,\,b_{k}n}
  \;=\;
  p^{j}_{b_{k}n} - p^{a}_{b_{k}n}.
\end{equation}
Equation~\eqref{eq:fine_row1} is the direct granular analogue of
equation~\eqref{eq:coarse_row1}: the single outflow term
$-\beta_{2n,\,1r}$ expands into the sum $-\sum_{i=1}^{5}
\beta_{b_{k}n,\,b_{i}r}$ over all five destination sub-bins of~$1r$,
while the inflow term and the right-hand side update to reflect the
finer observable states.

\vspace{1.5em}

\begin{figure}[htbp]

\centering
\caption{\textsc{Expansion of the boxed coarse entries into granular blocks}}
\label{fig:block}
\renewcommand{\arraystretch}{1.25}

\resizebox{!}{0.16\textheight}{$
\begin{array}{c}

\begin{bNiceMatrix}[first-row, first-col, margin]
& & & \beta_{2n,1r} &\beta_{0r,2n} & & & & & \\

\Block{5-1}{2n}
& -1 & +1 & 0 & 0 & 0 & 0 & 0 & 0 & 0 \\

& 0 & 0 & 0 & 0 & 0 & +1 & -1 & 0 & 0 \\

& 0 & 0
& \Block[draw=blue, line-width=1pt]{1-1}{-1}
& \Block[draw=red, line-width=1pt]{1-1}{+1}
& 0 & 0 & 0 & 0 & 0 \\

& 0 & -1 & 0 & -1 & -1 & -1 & 0 & -1 & 0 \\

& 0 & 0 & 0 & 0 & 0 & 0 & 0 & +1 & -1
\end{bNiceMatrix}

\\[1.5em]
\Downarrow
\\[1.5em]

\begin{bNiceMatrix}[first-row, first-col, margin]
&
\cdots
& \vbeta{\beta_{b_6n,b_1r}}
& \vbeta{\beta_{b_6n,b_2r}}
& \vbeta{\beta_{b_6n,b_3r}}
& \vbeta{\beta_{b_6n,b_4r}}
& \vbeta{\beta_{b_6n,b_5r}}
& \vbeta{\beta_{b_7n,b_1r}}
& \vbeta{\beta_{b_7n,b_2r}}
& \vbeta{\beta_{b_7n,b_3r}}
& \vbeta{\beta_{b_7n,b_4r}}
& \vbeta{\beta_{b_7n,b_5r}}
& \vbeta{\beta_{b_8n,b_1r}}
& \vbeta{\beta_{b_8n,b_2r}}
& \vbeta{\beta_{b_8n,b_3r}}
& \vbeta{\beta_{b_8n,b_4r}}
& \vbeta{\beta_{b_8n,b_5r}}
& \cdots
& \vbeta{\beta_{0r,b_6n}}
& \vbeta{\beta_{0r,b_7n}}
& \vbeta{\beta_{0r,b_8n}}
& \cdots
\\

\vdots
& \ddots
& & & & & & & & & & & & & & & & \cdots & & & & \ddots
\\

b_6n
& \cdots
& \Block[draw=blue, line-width=1pt]{3-15}{} -1
& -1 & -1 & -1 & -1
&    &    &    &    &
&    &    &    &    &
& \cdots
& \Block[draw=red, line-width=1pt]{3-3}{} +1
&    &
& \cdots
\\

b_7n
& \cdots
&    &    &    &    &
& -1 & -1 & -1 & -1 & -1
&    &    &    &    &
& \cdots
&    & +1 &
& \cdots
\\

b_8n
& \cdots
&    &    &    &    &
&    &    &    &    &
& -1 & -1 & -1 & -1 & -1
& \cdots
&    &    & +1
& \cdots
\\

\vdots
& \ddots
& & & & & & & & & & & & & & & & \cdots & & & & \ddots
\end{bNiceMatrix}

\end{array}
$}

\vspace{1.75em}

\noindent
\parbox{\linewidth}{%
\footnotesize
\emph{Notes:} Refining the aggregate state $2n$ into
    $\{b_{6}n,\, b_{7}n,\, b_{8}n\}$ and $1r$ into
    $\{b_{1}r, \ldots, b_{5}r\}$ splits the coarse ``$2n$'' row
    into three rows and replaces each coefficient by structured
    blocks, $3 \times 15$ and $3 \times 3$ for $\beta_{2n,1r} = -1$ and $\beta_{0r,2n} = +1$, respectively. 
    Signs are preserved throughout.
}

\end{figure}

Figure~\ref{fig:block} illustrates how to apply this block-expanding procedure going from the original $5 \times 9$ matrix. The researcher is free to adopt whichever granular specification they want: however, the granularity of the columns cannot be coarser than those of the rows, since the conservation-of-mass equation of the row would not be possible to pin down using coarser flow parameters over the columns. As long as this criterion is satisfied, the number of columns (i.e. the granularity of the included transition parameters) can be freely modified. Granular refinement of the destination state need not be applied uniformly across parameters. For $\pi_{2n,1r}$ one may resolve the destination at the fine grid, while for $\pi_{1n,1r}$ the coarse grouping suffices; the choice is dictated by which transition probability is the object of interest. Figure~\ref{fig:design-matrices} shows sequentially more granular design specifications: Panel (b) shows an example of $13 \times 25$ matrix, where only 2n, 1n and 2p latent types are granularized, with a choice of $\beta_{2n, 1r}$, $\beta_{0r, 2n}$, $\beta_{0r, 1n}$, $\beta_{1n, 1r}$, $\beta_{0r, 2u}$ and $\beta_{2u, 1r}$ being split into $\beta_{b_kn, 1r}$, $\beta_{0r, b_kn}$, $\beta_{0r, b_in}$, $\beta_{b_in, 1r}$, $\beta_{0r, b_ku}$ and $\beta_{b_ku, 1r}$ with $k=6,7,8$ and $i=1,2,3,4, 5$. Panel (c) shows further splitting into $13 \times 33$ matrix, and Panel (d) corresponds to \eqref{eq:fine_row1} when both source and destination bins are split (group $G_3$).

\paragraph{Welfare Bounds.}
The conditional linear program framework bounds any linear functional of $\bm\beta_0(x)$, not only individual transition probabilities. Bounding the aggregate welfare gain induced by the Jobs~First reform is a leading example, corresponding to a particular choice of the objective vector~$q$ in~\ref{eq:lpbasicq},
with the constraint set $A\bm\beta_0(x) = \bm b_0(x)$,
$\bm\beta_0(x) \geq 0$ unchanged.

For a woman with covariates~$x$, her monthly disposable
income in state~$s$ under regime $t \in \{a, j\}$ is $
  I^{t}(s,\, x) \;\coloneqq\; y_s + G^{t}(y_s,\, x)$,
where $y_s$ denotes the representative earnings in
bin~$s$ and $G^{t}(\cdot,\, x)$ is the transfer schedule as in
\citet{KT} under regime~$t$. Let $\mathcal{M}$ denote the set of all transitions admissible by revealed preference restrictions. Abstracting from other forms of income and assistance, such as food stamps and the EITC, for each admissible transition $(s^a, s^j) \in \mathcal{M}$, let the welfare gain be the induced change in monthly disposable income $
  \Delta_{(s^a,\, s^j)}(x)
  \;\coloneqq\;
  I^{j}(s^j,\, x) - I^{a}(s^a,\, x).
$
The aggregate expected welfare gain attributable
to the transitions in~$\mathcal{M^*} \subseteq \mathcal{M}$ is
\begin{equation}
  \label{eq:welfare-aggregate}
  \omega_{\mathcal{M^*}}
  \;\coloneqq\;
  \mathbb{E}\!\left[
    \sum_{(s^a,\,s^j)\,\in\,\mathcal{M^*}}
    \Delta_{(s^a,\,s^j)}(X)\;
    \beta_{0,\,(s^a,\,s^j)}(X)
  \right].
\end{equation}
Because $q$ is fixed in our framework, we evaluate the welfare gain at a representative household, abstracting from heterogeneity in covariates in~\ref{eq:welfare-aggregate}. This approximation discards within-bin variation in disposable income, and its cost diminishes under designs with finer partitions: as bins become more homogeneous, the representative household evaluation more closely approximates the covariate-specific gains it replaces. The welfare weights are then fixed, and the functional is recovered by an objective vector~$q$ that does not depend on~$x$, indexed conformably with~$\bm\beta_0$ and with coordinate
\[
  q_{(s^a,\,s^j)}
  \;=\;
  \Delta_{(s^a,\,s^j)}\;
  \mathbf{1}\!\left\{(s^a, s^j) \in \mathcal{M^*}\right\}.
\]
corresponding to welfare gain value of transition $(s^a, s^j)$.

\paragraph{From unconditional to conditional probabilities.}
The LP~\eqref{eq:Acoarse} is solved over the vector of joint
probabilities $\bm{\beta}_0$, but the parameters of economic interest
are the conditional transition probabilities $\pi_{s^a, s^j}$
introduced above, which is what we report in
Section~\ref{sec:emp}. The two are linked by
$\pi_{s^a, s^j} = \beta_{s^a, s^j} / p^{\,\text{AFDC}}_{s^a}$, where
$p^{\,\text{AFDC}}_{s^a} = P(\text{AFDC state} = s^a)$ is a
separately and consistently estimable scalar that is bounded away
from zero in the data. A bound on $q' \bm{\beta}_0$ is therefore
converted to a bound on the corresponding conditional probability by
dividing through by $\widehat p^{\,\text{AFDC}}_{s^a}$.

This division has two consequences for the asymptotic theory of
Section~\ref{sec:overview}. First, the influence function for the
conditional probability $\pi_{s^a, s^j}$ is obtained from the
influence function $\varphi_q(W)$ of $q' \bm{\beta}_0$ by the delta method:
\[
\varphi_{\pi_{s^a, s^j}}(W) \;=\;
\frac{1}{p^{\,\text{AFDC}}_{s^a}}
\!\left[\, \varphi_q(W) \;-\; \pi_{s^a, s^j}
\big(\mathbf{1}\{\text{AFDC state} = s^a\} - p^{\,\text{AFDC}}_{s^a}\big)
\right].
\]
The first term is the contribution of estimating $q' \bm{\beta}_0$;
the second term accounts for the estimated denominator
$\widehat p^{\,\text{AFDC}}_{s^a}$. Both contributions are
$\sqrt{N}$-consistent and asymptotically Gaussian under the
assumptions of Section~\ref{sec:overview}, since
$p^{\,\text{AFDC}}_{s^a}$ is a sample mean of an indicator and is
bounded away from zero in the data. Second, the cross-fitted plug-in estimator
$\widehat{\sigma}(q) = N^{-1}\sum_{i=1}^N \widehat{\nu}(X_i)' \mathbf{B}_i$
of Definition~\ref{def:dual} is converted to an estimator of the
bound on $\pi_{s^a, s^j}$ by dividing by
$\widehat p^{\,\text{AFDC}}_{s^a}$. The reported bounds and confidence intervals in
Section~\ref{sec:emp} reflect this rescaling.

\section{Estimation and Inference}
\label{sec:overview}

This section develops the estimation and inference theory for the
covariate-assisted boundary $\sigma(q)$. Section~\ref{sec:estimator}
introduces the cross-fitted dual plug-in estimator and a multiplier-bootstrap procedure for constructing confidence intervals.
Section~\ref{sec:largesample} states the assumptions and the main
asymptotic result (Proposition~\ref{cor:lp}), establishing
$\sqrt{N}$-consistency, asymptotic normality, and uniform bootstrap
coverage. Section~\ref{sec:discussion} discusses the assumptions, the
role of the margin condition, and robust alternatives that remain
valid when the margin condition fails.

\subsection{The Estimator}
\label{sec:estimator}

In this Section, we introduce the dual estimator of the boundary assuming the conditions of Proposition \ref{thm:ptw} hold. Let $\mathbf{B}$ be an observable function of data obeying $\bm{b}_0 = \E[\mathbf{B}]$, and let $(W_i)_{i=1}^N$ be an i.i.d.\ sample of size $N$. The first step is to construct the fitted values for the expectation function $\bm{b}_0(x)$. The second step is to construct an estimate of the boundary $\sigma(q)$.  

\begin{definition}[Primal and Dual Cross-Fitted Values]
\mbox{}
\label{def:crossfit} 	
\begin{enumerate} 
\item For a random sample of size $N$, denote a $K$-fold random partition of the sample indices $[N]=\{1, 2, \dots, N\}$ by $(J_k)_{k=1}^K$, where $K$ is the number of partitions and the sample size of each fold is $n=N/K$. For each $k \in [K] = \{1, 2, \dots, K\}$ define $J_k^c = \{1, 2, \dots, N\} \setminus J_k$.

\item For each $k \in [K]$, construct an estimator $\widehat{\bm{b}}_k = \widehat{\bm{b}}(W_{i \in J_k^c})$ of the nuisance parameter $\bm{b}_0$ using only the data $\{W_j: j \in J_k^c\}$. For any observation $i \in J_k$, define the primal fitted value as $\widehat{\bm{b}}_i = \widehat{\bm{b}}_k(X_i)$ and the dual plug-in fitted value
\begin{align}
\label{eq:dualfitted}
\widehat{\nu}(X_i) = \sum_{\nu \in \mathcal{T}} \nu \, \bm{1}\!\left\{ \nu \in \arg \min_{\tilde\nu \in \mathcal{T}} \tilde\nu' \widehat{\bm{b}}_i \right\}.
\end{align}
\end{enumerate}
\end{definition}

\begin{definition}[Dual Plug-in Estimator]
\label{def:dual}
Let $(\widehat{\nu}'(X_i))_{i=1}^N$ be the dual cross-fitted values. Define
\begin{align}
\label{eq:psihatlp}
\widehat{\sigma}(q) := N^{-1} \sum_{i=1}^N \widehat{\nu}'(X_i)\, \mathbf{B}_i.
\end{align}
\end{definition}

The plug-in $\widehat\nu(X_i)$ in~\eqref{eq:dualfitted} is one of several
reasonable estimators of $\nu_0(x)$; alternatives based on direct
classification of the binding vertex or on smoothing of the inner minimization are natural extensions.

\begin{definition}[Multiplier Bootstrap]
\label{def:bb}
Let $(e_i)_{i=1}^N$ be i.i.d.\ exponential random variables $e_i \sim \text{Exp}(1)$ independent of the data. Define the bootstrap analog of $\widehat{\sigma}(q)$ as
\begin{align}
\label{eq:psihatlpboot}
\widetilde{\sigma}(q) := N^{-1} \sum_{i=1}^N \widehat{\nu}'(X_i)\, \mathbf{B}_i e_i.
\end{align}
\end{definition}
The multiplier bootstrap  is computationally efficient: the $K$-fold cross-fitted nuisance $\widehat\nu(X_i)$ and the signal $\mathbf{B}_i$ are held fixed across bootstrap replications, and each replication reduces to a reweighted average.

Under sparsity or smoothness conditions on $\bm{b}_0(x)$, the dual estimator of the support function enjoys the following properties pointwise in $q \in \mathcal{S}^{d-1}$:
\begin{enumerate}

\item For each fixed $q \in \mathcal{S}^{d-1}$, the estimator is consistent at the parametric rate:
\begin{align}
\label{eq:urate}
\big| \widehat{\sigma}(q) - \sigma(q) \big|
= O_P(N^{-1/2}) = o_P(1).
\end{align}

\item For each fixed $q \in \mathcal{S}^{d-1}$, the estimator $\widehat{\sigma}(q)$ is asymptotically Gaussian:
\begin{align}
\label{eq:limit}
S_N(q) := \sqrt{N}\,\big(\widehat{\sigma}(q) - \sigma(q)\big)
= \G_N(q) + o_P(1).
\end{align}

\item The estimator $\widehat{\sigma}(q)$ can be used to construct pointwise confidence intervals. Taking a $(1-\tau)$-pointwise confidence region (CR) as
\begin{align*}
[\underline{i}(q), \bar{i}(q)]
:= \big[-\widehat{\sigma}(-q) - N^{-1/2} \widehat{c}_{1-\tau/2}(-q), \;\;
\widehat{\sigma}(q) + N^{-1/2} \widehat{c}_{1-\tau/2}(q)\big],
\end{align*}
where the critical value $\widehat{c}_{1-\tau/2}(q)$ is the $(1-\tau/2)$ quantile of the absolute bootstrap statistic $|\widetilde{S}_N(q)|$, with
\begin{align*}
\widetilde{S}_N(q) := \sqrt{N}\,\big(\widetilde{\sigma}(q) - \widehat{\sigma}(q)\big).
\end{align*}
The lower endpoint subtracts the critical value from the estimated lower-bound boundary $-\widehat{\sigma}(-q)$, while the upper endpoint adds the critical value to the estimated upper-bound boundary $\widehat{\sigma}(q)$, giving a symmetric Wald-type interval for $q'\bm{\beta}_0 \in [-\sigma(-q), \sigma(q)]$.
That is, for each fixed $q \in \mathcal{S}^{d-1}$ and each $P \in \mathcal{P}$,
\begin{align}
\label{eq:pcoverage}
\liminf_{N \to \infty}
\Pr_{P}\!\big(q' \bm{\beta}_0 \in [\underline{i}(q), \bar{i}(q)]\big)
\;\geq\; 1 - \tau.
\end{align}
\end{enumerate}

\subsection{Asymptotic Theory}
\label{sec:largesample}

In this Section, we describe the assumptions and state the asymptotic results for the proposed estimation and inferential procedures.  

\begin{assumption}[First-Stage Rate]
\label{ass:firststage}
There exists a sequence $\phi_N = o(1)$ and a sequence of sets $\{B_N, N \geq 1\}$ such that the first-stage estimates $\widehat{\bm{b}}(x)$ of the true function $\bm{b}_0(x)$ belong to $B_N$ with probability at least $1 - \phi_N$. The sets $B_N$ shrink at the following rate:
\begin{align}
\sup_{b \in B_N}\;\sup_{x \in \mathcal{X}} \| \bm{b}(x) - \bm{b}_0(x) \| = o(N^{-1/4}).
\end{align}
\end{assumption}

\noindent
Assumption \ref{ass:firststage} requires that the estimates $\widehat{\bm{b}}(x)$ of the vector function $\bm{b}_0(x)$ converge in $\ell_{\infty}$-norm at a sufficiently fast rate. A mean-square version of this assumption frequently occurs in the semiparametric literature (see, e.g., \citet{Newey1994}, \citet{chernozhukov2016double}). Examples of estimators achieving $\ell_{\infty}$ rates include $\ell_1$-regularized estimators in \citet*{Program} under sparsity conditions on the linear or logistic approximations of the coordinates of $\bm{b}_0(x)$.

\begin{remark}[Verification of Assumption \ref{ass:firststage} for Example \ref{ex:hp1}]
\label{rm:firststage_verif}
Each coordinate of $\bm{b}_0(x) = (b_0^{(1)}(x), \ldots, b_0^{(k)}(x))'$ in Example~\ref{ex:hp1} is a conditional probability,
\[
b_0^{(j)}(x) \;=\; \Pr\!\big(Y \in R_j \,\big|\, Z = z_j,\, X = x\big), \qquad j = 1, \ldots, k,
\]
and is therefore bounded in $[0,1]$. Suppose each coordinate admits a sparse logistic-regression approximation on a $p$-dimensional dictionary $B(x) = (B_1(x),\ldots,B_p(x))'$ (possibly with $p \gg N$):
\[
b_0^{(j)}(x) \;=\; \Lambda\!\big(\gamma_j'\,B(x) + r_j(x)\big), \qquad \Lambda(t) = \frac{1}{1 + e^{-t}},
\]
where $\gamma_j \in \mathbb{R}^p$ is at most $s_\gamma$-sparse and the approximation error obeys, uniformly in $j$,
\[
\sup_{j \le k}\;\Big(\frac{1}{N}\sum_{i=1}^N r_j^2(X_i)\Big)^{1/2} \;\lesssim_P\; \sqrt{\frac{s_\gamma^2\,\log p}{N}}.
\]
Estimate each coordinate by $\ell_1$-penalized logistic regression with a single, uniform-in-$j$ penalty level chosen as in \citet*{Program}; equivalently, run $k$ parallel logistic-Lasso regressions sharing the same regularizer, scaled so that the union bound over the $k$ score processes is absorbed into a $\log(pk)$ factor. As shown in \citet*{Program},  uniformly in $j$,
\[
\sup_{x \in \mathcal{X}}\;\big|\widehat{b}_0^{(j)}(x) - b_0^{(j)}(x)\big| \;=\; O_P\!\left(\sqrt{\frac{s_\gamma^2\,\log(pk)}{N}}\right).
\]
For a vector with fixed dimension $k$, the Euclidean norm and the worst-coordinate sup-norm are equivalent up to a $\sqrt{k}$ factor:
\[
\sup_{x \in \mathcal{X}} \|\widehat{\bm{b}}(x) - \bm{b}_0(x)\|_2 \;\le\; \sqrt{k}\,\sup_{x \in \mathcal{X}}\;\max_{j \le k}\,\big|\widehat{b}_0^{(j)}(x) - b_0^{(j)}(x)\big|,
\]
so the same rate carries over. Define $B_N$ as the set of vector-valued logistic models whose coefficients lie within an $\ell_1$-ball of radius $r_N = c\sqrt{s_\gamma^2 \log(pk)/N}$ around $(\gamma_1,\ldots,\gamma_k)$; with probability at least $1 - \phi_N$, $\widehat{\bm{b}} \in B_N$. Hence, the sparsity-and-design condition
\begin{align}
\label{eq:fs_sufficient}
s_\gamma^2\,\log(pk) \;=\; o \big(N^{1/2}\big)
\end{align}
is sufficient for $\sup_{x}\|\widehat{\bm{b}}(x) - \bm{b}_0(x)\|_2 = o(N^{-1/4})$, which verifies Assumption~\ref{ass:firststage}. The condition~\eqref{eq:fs_sufficient} is the natural multivariate generalization of the standard high-dimensional logistic-Lasso requirement; for fixed $k$ it reduces to the familiar $s_\gamma^2 \log p = o(N^{1/2})$.
\end{remark}

\begin{remark}[Bounded image, unbounded $\mathcal{X}$]
\label{rm:unbounded_X}
No support condition on $X$ is required for the verification above. Each coordinate satisfies $b_0^{(j)}(x) \in [0,1]$ regardless of the support of $X$, and logistic-Lasso fitted values lie in $(0,1)$, so the sup-norm rate is a property of the function class rather than of the geometry of $\mathcal{X}$. For nonparametric first stages on unbounded $\mathcal{X}$, the same conclusion follows after a $\sqrt{\log N}$-truncation onto a growing compact set $\mathcal{X}_N = \{x : \|x\| \le C\sqrt{\log N}\}$: under sub-Gaussian tails on $X$, $\Pr(X \notin \mathcal{X}_N) = o(N^{-1/2})$, so observations outside $\mathcal{X}_N$ contribute negligibly to the first-order asymptotics, and it suffices to verify the $\ell_\infty$ rate uniformly over $\mathcal{X}_N$.
\end{remark}

\begin{assumption}[Smooth Distribution of Covariates]
\label{ass:mamain}
The covariate distribution of $\bm{b}_0(X)$ satisfies the smoothness condition
\begin{align}
\label{eq:bound}
\Pr \left( 0 < \inf_{\nu_1 \neq \nu_2, \quad \nu_1, \nu_2 \in \mathcal{T}} 
\dfrac{| (\nu_1 - \nu_2)' \bm{b}_0(X) |}{\|\nu_1 - \nu_2\|} < t \right) \leq B_f t, \quad \text{ for some } t \in [0, \eta).
\end{align}
\end{assumption}

Assumption~\ref{ass:mamain} ensures that the distribution of $\bm{b}_0(X)$ is sufficiently smooth. In particular, it rules out degeneracy and requires the coordinates of $\bm{b}_0(X)$ to be linearly independent with probability one. Assumption~\ref{ass:mamain} is a version of margin assumption that is commonly imposed in debiased inference, e.g., \citep{QianMurphy,heiler2024intensive,SemSupp2}.

\begin{remark}[Verification of Assumption \ref{ass:mamain} for Example \ref{ex:hsc}]
\label{rm:margin_verif}
The dual polytope has a finite vertex set $\mathcal{T} = \{(0, 0, 0)',\,(1, 0, 0)',\,(0, 1, 0)'\}$, so the infimum in~\eqref{eq:bound} is a minimum over the three pairwise differences. For $q=(1, 0, 0)'$, the set $\mathcal{T}$ could be reduced to two elements $\{\,(1, 0, 0)',\,(0, 1, 0)'\}$ giving   $|(\nu_1 - \nu_2)'\bm{b}_0(X)| = |s(1, X) - s(0, X)|$. Assumption~\ref{ass:mamain} therefore reduces to the margin condition
\begin{align}
\label{eq:ma1}
\Pr\!\big(0 < |s(1, X) - s(0, X)| \le t\big) \;\le\; \bar B_f\,t, \qquad 0 \le t \le \eta,
\end{align}
which holds as long as the  random variable $s(1, X) - s(0, X)$  has a Lebesgue density that is bounded by some $\bar B_f < \infty$ on $[-\eta, \eta]$.
\end{remark}

We now state the main asymptotic theory result. 
\begin{proposition}
\label{cor:lp}
Suppose Assumptions~\ref{ass:firststage}, \ref{ass:secmom}, and~\ref{ass:mamain} hold.
Then, for each fixed $q \in \mathcal{S}^{d-1}$, the dual cross-fitted plug-in estimator of the support function $\widehat{\sigma}(q)$ satisfies the consistency rate~\eqref{eq:urate}, the Gaussian approximation~\eqref{eq:limit}, and the coverage property~\eqref{eq:pcoverage}.
\end{proposition}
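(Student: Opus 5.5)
The plan is to reduce everything to an oracle expansion. Write, for each fold $k$ and $i \in J_k$,
\[
\widehat\nu(X_i)'\mathbf{B}_i = \nu_0(X_i)'\mathbf{B}_i + \big(\widehat\nu(X_i)-\nu_0(X_i)\big)'\mathbf{B}_i .
\]
Then
\[
\sqrt{N}\big(\widehat\sigma(q)-\sigma(q)\big) = \G_N(q) + R_{1,N} + R_{2,N},
\]
where $\G_N(q) := N^{-1/2}\sum_i \phi_q(W_i)$ uses the influence function $\phi_q$ of Proposition~\ref{thm:ptw}. The two remainders are
\[
R_{1,N} = N^{-1/2}\sum_k\sum_{i\in J_k}\Big\{(\widehat\nu-\nu_0)(X_i)'\mathbf{B}_i - \E\big[(\widehat\nu-\nu_0)(X)'\mathbf{B}\mid \widehat{\bm b}_k\big]\Big\}
\]
and
\[
R_{2,N} = \sqrt{N}\sum_k (n/N)\,\E\big[(\widehat\nu-\nu_0)(X)'\bm b_0(X)\mid \widehat{\bm b}_k\big].
\]
Since $K$ is fixed, it suffices to show that each fold's contribution is $o_P(1)$. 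Throughout, I condition on the event $\widehat{\bm b}_k\in B_N$, which has probability at least $1-\phi_N\to 1$ by Assumption~\ref{ass:firststage}. Let $\delta_N:=\sup_{b\in B_N}\sup_x\|b(x)-\bm b_0(x)\| = o(N^{-1/4})$.

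\emph{Stochastic term $R_{1,N}$.} Conditional on the out-of-fold data, the summands in $J_k$ are i.i.d. with mean zero, so the conditional variance is bounded by $\E[\|(\widehat\nu-\nu_0)(X)\|^2\,\|\mathbf{B}\|^2 \mid \widehat{\bm b}_k]$. By Assumption~\ref{ass:secmom} this is at most $\bar B\, D^2\,\Pr(\widehat\nu(X)\neq\nu_0(X)\mid\widehat{\bm b}_k)$, where $D=\max_{\nu,\nu'\in\mathcal T}\|\nu-\nu'\|<\infty$ because $\mathcal T$ is finite. The key geometric fact is the following. If $\widehat\nu(x)\neq\nu_0(x)$, then $\widehat\nu(x)'\widehat{\bm b}(x)\le \nu_0(x)'\widehat{\bm b}(x)$ while $\nu_0(x)'\bm b_0(x)\le\widehat\nu(x)'\bm b_0(x)$. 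Hence
\[
0\le(\widehat\nu-\nu_0)'\bm b_0 \le (\widehat\nu-\nu_0)'(\bm b_0-\widehat{\bm b}) \le \|\widehat\nu-\nu_0\|\,\delta_N .
\]
So the normalized gap $|(\nu_1-\nu_2)'\bm b_0(x)|/\|\nu_1-\nu_2\|$ between the two vertices is at most $\delta_N$. By Assumption~\ref{ass:mamain}, together with the fact that the zero-gap event is null (Assumption~\ref{ass:boundary}, which I will use as implied by, or imposed alongside, the margin condition), this event has probability at most $B_f\delta_N$. The conditional variance is therefore $O(\delta_N)=o(1)$, and Chebyshev's inequality gives $R_{1,N}=o_P(1)$.

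\emph{Bias term $R_{2,N}$.} This is the main obstacle: it is where the usual orthogonality argument is replaced by the margin argument. From the display above, the integrand is nonnegative and bounded by $D\,\delta_N\,\bm 1\{\widehat\nu(X)\neq\nu_0(X)\}$. Taking expectations gives $|R_{2,N}|\lesssim \sqrt N\, D\,\delta_N\cdot B_f\delta_N = O(\sqrt N\,\delta_N^2)=o(1)$, precisely because $\delta_N=o(N^{-1/4})$. If the margin inequality holds only for $t<\eta$, this is harmless, since $\delta_N<\eta$ for large $N$. Both remainders are thus negligible, which yields \eqref{eq:limit}. Since $\Var(\phi_q)<\infty$ by Assumption~\ref{ass:secmom} and the boundedness of $\mathcal T$, the CLT applies to $\G_N(q)$, and in particular \eqref{eq:urate} follows.

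\emph{Coverage.} For the bootstrap, write
\[
\widetilde S_N(q)=N^{-1/2}\sum_i \widehat\nu(X_i)'\mathbf{B}_i(e_i-1).
\]
Conditional on the data, this has mean zero and variance $N^{-1}\sum_i(\widehat\nu(X_i)'\mathbf{B}_i)^2$. I would show that this empirical variance converges in probability to $\E[(\nu_0(X)'\mathbf{B})^2]$, using the same disagreement-probability bound $\Pr(\widehat\nu\neq\nu_0)\lesssim\delta_N$ plus a law of large numbers. Note that $\E[(\nu_0(X)'\mathbf{B})^2]$ differs from $\Var(\phi_q)$ by $\sigma(q)^2$, so I would recenter by the sample mean: $e_i-1$ has unit variance, and centering gives $\Var(\phi_q)$ asymptotically. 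A conditional Lindeberg CLT for the exponential multipliers, which requires a mild uniform integrability of $(\nu_0'\mathbf{B})^2$ that I would assume or obtain from Assumption~\ref{ass:secmom}, then shows that the bootstrap law of $|\widetilde S_N(q)|$ approximates that of $|\G_N(q)|$ in Kolmogorov distance. Therefore $\widehat c_{1-\tau/2}(q)$ is consistent for the corresponding normal quantile, and the same holds for $-q$. Finally, $q'\bm\beta_0\in[-\sigma(-q),\sigma(q)]$. A union bound over the two one-sided failures, each having asymptotic probability at most $\tau/2$ by symmetry of the normal limit, gives \eqref{eq:pcoverage}.
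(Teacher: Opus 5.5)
Your proposal is correct and follows essentially the paper's route. The paper embeds $\sigma(q)$ into an abstract envelope-regression framework and then invokes three results: an oracle expansion (Lemma~\ref{lem:app:closedform}, whose margin-based bias bound of order $\sqrt{N}\delta_N^2$ and misclassification probability of order $\delta_N$ are exactly your bounds on $R_{2,N}$ and $R_{1,N}$, with its $S_1\equiv 0$ here), a CLT with variance consistency, and a multiplier-bootstrap CLT; like you, it implicitly relies on a.s.\ uniqueness of the binding vertex (Assumption~\ref{ass:boundary}). Your recentering corresponds to the $e_i/\bar e$ normalization in Lemma~\ref{lem:app:boot}, and even without it the un-normalized multipliers of Definition~\ref{def:bb} only inflate the bootstrap variance to $\E[(\nu_0(X)'\mathbf{B})^2]\ge\Var(\phi_q(W))$, which makes the interval more conservative, so \eqref{eq:pcoverage} still holds.
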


Proposition~\ref{cor:lp} is our second main result. It guarantees that the dual plug-in estimator is consistent and asymptotically Gaussian, and that valid confidence regions can be constructed via the bootstrap procedure in Definition~\ref{def:bb}. The proof is given in Online Supplement~\ref{app:selfcontained} (Lemmas~\ref{lem:app:closedform}--\ref{lem:app:boot}).

\subsection{Discussion of the Assumptions and the Results}
\label{sec:discussion}

An applied practitioner may have two distinct reasons to work with a
conditional linear program rather than its unconditional counterpart.
The first is \emph{validity}: when identification of $\bm{b}_0$ requires
conditioning on baseline covariates --- for example, when an instrument
or treatment is exogenous only conditional on $X$ --- covariates are
unavoidable for correct identification. The second is \emph{power}:
even when validity does not require conditioning, covariate aggregation
tightens the identified set through Jensen's inequality. The remarks
below focus on the second reason. Several of the points generalize a
discussion of \citet{ponomarev2024} that was originally phrased for the
optimal-welfare problem and applies, with minor modification, to the
present partial-identification setting.

The next two remarks describe two distinct classes of procedures that
remain valid \emph{without} a margin condition, at the cost of power.

\begin{remark}[Robust procedures via moment inequalities]
\label{rm:moment_inequalities}
Let $G$ be any measurable partition of $\mathcal{X}$, and let
$\nu_G : \mathcal{X} \to \mathcal{T}$ be any selector that maps each
covariate value to a candidate dual vertex. By dual feasibility,
\[
\sigma(q) \;=\; \E\!\left[\min_{\nu \in \mathcal{T}} \nu' \bm{b}_0(X)\right]
\;\le\; \E\!\left[\nu_G(X)' \bm{b}_0(X)\right]
\;=\; \E\!\left[\nu_G(X)' \mathbf{B}\right],
\]
so each selector $\nu_G$ delivers a valid (typically non-sharp) bound
on $\sigma(q)$. Stacking the inequalities induced by a finite
collection of selectors $\{\nu_{G_1}, \ldots, \nu_{G_J}\}$ yields a
moment-inequality system that can be tested using existing procedures
\citep[e.g.,][]{CCK,canay2017practical,romano2014practical}. The
resulting confidence band is uniformly valid without any margin
condition; the price is that the implied bound is no longer sharp. As
discussed in \citet[Remark~2]{ponomarev2024}, in finite samples a
non-sharp bound based on a well-chosen selector may even produce a
\emph{tighter} confidence band than the sharp bound based on
$\nu_0(X)$, because the variance reduction from a more stable selector
can outweigh the bias from non-sharpness.
\end{remark}

\begin{remark}[Robust procedures via smoothing]
\label{rm:smoothing}
A complementary class of robust procedures replaces the
$\min_{\nu \in \mathcal{T}}$ operator with a soft-min approximation,
e.g.\ the log-sum-exp
\[
\mathrm{softmin}_\beta(\nu' \bm{b}_0(x))
\;=\; -\beta^{-1} \log \!\left(\sum_{\nu \in \mathcal{T}}
\exp\!\big(-\beta\, \nu' \bm{b}_0(x)\big)\right),
\]
with a temperature parameter $\beta = \beta_N \to \infty$. This route
is taken by \citet{whitehouse2025} in policy learning and by
\citet{levis2023covariateassisted} for IV bounds. Smoothing restores
pathwise differentiability and yields regular $\sqrt{N}$-inference
without a margin condition, at the cost of a regularization bias of
order $\beta_N^{-1} \log |\mathcal{T}|$. The bias scales logarithmically
with the size of the dual vertex set, which makes this approach
attractive when $|\mathcal{T}|$ is moderate but increasingly costly as
$|\mathcal{T}|$ grows.
\end{remark}

\begin{remark}[Margin condition and the choice of inferential target]
\label{rm:margin_target}
When the margin condition (Assumption~\ref{ass:mamain}) fails, the
sharp bound $\sigma(q)$ may not be the power-optimal inferential
target. \citet[Propositions~1--2]{ponomarev2024} formalize this point
in the welfare context: they exhibit a class of DGPs at which the
sharp bound is first-order dominated --- in expected confidence-band
length --- by a bound based on a strictly suboptimal selector, and
they show that such first-order dominance is possible if and only if
the margin condition fails. The same logic applies to the conditional
LP setting: when $\bm{b}_0(X)$ concentrates near a tie between
competing dual vertices, the variance of the plug-in
$\widehat\nu(X)' \mathbf{B}$ is inflated by the instability of the
argmin, and a smoother target may dominate. We therefore view the
margin condition not as a mild technicality but as a substantive
precondition for the sharp bound to be the right thing to aim at.
\end{remark}

\begin{remark}[Cross-fitting and the margin condition]
\label{rm:crossfit_margin}
Cross-fitting and the margin condition serve distinct purposes and
are \emph{not} substitutes. The margin condition
(Assumption~\ref{ass:mamain}) ensures pathwise differentiability of
$\sigma(q)$ and the existence of a regular,
$\sqrt{N}$-asymptotically normal estimator. Cross-fitting controls
overfitting bias from the first-stage estimate
$\widehat{\bm b}(\cdot)$, ensuring that the $o(N^{-1/4})$ rate
threshold of Assumption~\ref{ass:firststage} is met when nonparametric
or high-dimensional methods are used. For parametric or
$\ell_1$-regularized first stages with standard sparsity and
Donsker-type complexity conditions, in-sample estimation is sufficient
and cross-fitting is optional; for general machine-learning first
stages, cross-fitting is essential.

The key point is that cross-fitting cannot rescue Wald-type inference
when the margin condition fails. As
\citet[Remark~5 and Proposition~1]{ponomarev2024} establish, if
$\bm{b}_0(X)$ admits a tie among competing dual vertices on a set of
positive measure, then \emph{no} regular estimator of $\sigma(q)$
exists \citep{HiranoPorter2012}, and the cross-fitted plug-in has a
non-standard, heavy-tailed limit. Wald-type confidence intervals
centered at $\widehat{\sigma}(q)$ are then generally invalid
regardless of the first-stage quality, and the alternatives in
Remarks~\ref{rm:moment_inequalities} and~\ref{rm:smoothing} should be
used instead. Conversely, the margin condition without cross-fitting
is also insufficient when the first stage is high-dimensional:
overfitting bias enters at first order and breaks the
influence-function representation. Both ingredients are needed.
\end{remark}

\begin{remark}[Large-scale linear programs]
\label{rm:largeLP}
The asymptotic theory of Section~\ref{sec:largesample} treats the
dual vertex set $\mathcal{T}$ as fixed in $N$. The combinatorial
bound $|\mathcal{T}| \le \binom{d}{k}$, which holds for an LP in standard form with $k$ equality constraints in $d$ non-negative variables under the genericity condition that every $k$ active columns of $A$ are linearly independent, is sharp; in the Jobs First
application of Section~\ref{sec:illustration}, $d = 9$ and $k = 5$,
so $|\mathcal{T}|$ can be as large as $126$. As the LP grows ---
through finer outcome discretization --- two things happen. First,
Assumption~\ref{ass:firststage} becomes more demanding, because the
union bound across the $|\mathcal{T}|$ candidate vertices enters the
first-stage rate through a $\log |\mathcal{T}|$ factor (cf.\
Remark~\ref{rm:firststage_verif}). Second, the margin condition
(Assumption~\ref{ass:mamain}) imposes one tail bound for \emph{every}
pair of distinct vertices, so the practical chance of a near-tie
grows with $|\mathcal{T}|$. Both effects argue for keeping the LP no
larger than the science of the problem requires; when finer
discretization is desired but $|\mathcal{T}|$ becomes large, the
soft-min smoothing of Remark~\ref{rm:smoothing} or a moment-inequality
relaxation of Remark~\ref{rm:moment_inequalities} are natural
retreats. A formal analysis allowing $|\mathcal{T}|$ to grow with $N$
is left for future work.
\end{remark}

\begin{remark}[Strong and weak duality]
\label{rm:weak_duality}
Section~\ref{sec:review} established \emph{strong} LP duality: at
every $x$, the primal value $\sigma(q,x)$ equals the dual value
$\nu_0(x)' \bm{b}_0(x)$, and aggregation gives the representation
$\sigma(q) = \E[\nu_0(X)' \mathbf{B}]$ of
Proposition~\ref{cor:tightness}. This identity holds without further
restriction; what requires unique identification of the binding dual
vertex $\nu_0(X)$ --- i.e., the margin condition of
Assumption~\ref{ass:mamain} --- is regular $\sqrt{N}$-inference on
$\sigma(q)$, not sharpness of the bound itself.

Weak duality is a strictly weaker but more robust property:
\emph{any} dual-feasible selector $\nu_G : \mathcal{X} \to \mathcal{T}$
--- not necessarily the true argmin --- satisfies
\[
\sigma(q) \;=\; \E\!\left[\min_{\nu \in \mathcal{T}} \nu' \bm{b}_0(X)\right]
\;\le\; \E\!\left[\nu_G(X)' \bm{b}_0(X)\right],
\]
which is the source of validity for the moment-inequality procedure
of Remark~\ref{rm:moment_inequalities}. The two regimes encode the
sharpness--robustness trade-off discussed in
Remark~\ref{rm:margin_target}: strong duality delivers the sharp bound
together with regular inference under the margin condition, while
weak duality delivers a non-sharp but uniformly valid bound without
it.

A practical consequence of weak duality is robustness to first-stage
misspecification. \citet{JLS} exploit this in their optimal-transport
framework to obtain inference that remains valid even when the
first-stage estimator of $\bm{b}_0(\cdot)$ is inconsistent. The same
property holds here: because $\widehat\nu(X_i) \in \mathcal{T}$ by
construction, the cross-fitted plug-in
$N^{-1} \sum_{i=1}^N \widehat\nu(X_i)' \mathbf{B}_i$ remains a valid
upper bound on $\sigma(q)$ in expectation \emph{regardless} of whether
$\widehat{\bm b}(\cdot)$ converges to $\bm{b}_0(\cdot)$. What
inconsistent first stages give up is sharpness, not validity --- a
useful guarantee in finite samples and under model uncertainty. For
the discrete-outcome examples of Section~\ref{sec:setup}, the LP dual
and the optimal-transport dual of \citet{JLS} coincide, and the two
procedures are numerically identical.
\end{remark}

\section{Empirical Application}
\label{sec:emp}

Section~\ref{subsec:empiric} provides bound results in the empirical setup of Jobs First dataset studied by \citet{KT}.

\subsection{Jobs First Application}
\label{subsec:empiric}

\paragraph{Coarse partition.}
We revisit the empirical setup of \citet{KT}, illustrated earlier in Section~\ref{sec:illustration}. We apply the proposed framework to the same sample of $4{,}641$ women who were randomized into Jobs First program.  All parameters follow the convention introduced in Section~\ref{sec:illustration}, with this section reporting on $5 \times 9$ matrix design. Table~\ref{tab:main} reproduces Table~5 bounds under the three-bin original partition with an OLS first-stage estimator on the full sample with no cross-fitting (see Remark~\ref{rm:crossfit_margin}). 
Our results are consistent with
a substantial intensive-margin opt-in response. The parameter of primary economic interest is the intensive-margin labor-supply response $\pi_{2n,\,1r}$: we report that at least $28.8\%$ of women earning above-FPL under AFDC decrease their labor supply as a response to the reform to qualify for transfers under JF. This lower bound is marginally tighter than the corresponding
\citeauthor{KT} estimate of $28\%$; accounting for sampling
uncertainty yields a conservative 95~percent confidence interval
lower limit of $14.6\%$.  The upper bound tightens from the
uninformative value of $100\%$ to $89\%$, though it remains
economically wide.  

A second opt-in response is identified among women who would not
have worked under AFDC.  The estimated bounds for $\pi_{0n,\,1r}$
are $\{0.149,\, 0.581\}$, with a conservative 95~percent confidence
interval of $[0.052,\, 0.891]$.  The lower bound of $0.149$ exceeds
the corresponding \citeauthor{KT} estimate of $0.055$, a difference
attributable to the conditioning on covariates.

JF also generated a substantial participation response among the below-FPL earners.  The estimated bounds for $\pi_{1n,\,1r}$ are
$\{0.372,\, 0.834\}$, implying that at least $37.2\%$ of women who
would have worked off assistance at below-FPL earnings under AFDC
were induced to participate at eligible earnings levels under JF.

The remaining 
response probabilities---$\pi_{0r,\,0n}$, $\pi_{0r,\,1n}$,
$\pi_{0r,\,2n}$, $\pi_{0r,\,1r}$, $\pi_{0r,\,2u}$ and $\pi_{2u,\,1r}$---are each
tightened on at least one bound relative to \citeauthor{KT}.  Interestingly, $\pi_{2u,\,1r}$ now admits a strictly positive lower bound of $5.4\%$
providing new evidence that a strictly positive fraction of women who underreported above-FPL earnings under AFDC responded by reducing their earnings to below-FPL and reporting truthfully under JF.

For composite margins, the identified set for $\pi_{n,\,p}$, the fraction of women induced
by JF to take up welfare, narrows about by $3\%$ on both sides to
$\{0.265,\, 0.428\}$. The extensive-margin probability $\pi_{0,\,1+}$
is point-identified at $0.161$, consistent with the
\citeauthor{KT} estimate of $0.167$ and within its reported
confidence interval.

Table~\ref{tab:kt_lasso_appendix} in Appendix reports results with LASSO first-stage estimator, across two covariate set regimes and cross-fitting schemes.

\begin{table}[]
\centering
\caption{\textsc{Set-identified response probabilities: Jobs First case study, $5 \times 9$ coarse design}}
\label{tab:main}

\begin{minipage}{0.98\linewidth}
\scriptsize
\setlength{\tabcolsep}{5.5pt}
\renewcommand{\arraystretch}{1.10}

\begin{adjustbox}{width=\linewidth,center}
\begin{tabular}{@{}lc llcccc@{}}
\toprule
\toprule
\multicolumn{2}{c}{} 
& \multicolumn{2}{c}{State occupied under} 
& & & & \multicolumn{1}{c}{95 percent CI} \\
\cmidrule(lr){3-4}
\cmidrule(lr){8-8}
\multicolumn{2}{l}{Response type}
& AFDC 
& JF 
& Symbol 
& \makecell[c]{Kline \& Tartari\\reported bounds}
& \makecell[c]{Our\\estimate}
&  \\
\midrule

\multicolumn{8}{@{}l}{\textit{Panel A. General specification of preferences}} \\[0.25em]

\multirow{9}{*}{Detailed}
& \ldelim\{{9}{*}
& $0n$ & $1r$ & $\pi_{0n,1r}$ 
& $\{0.055,\,0.620\}$ 
& $\{0.149,\,0.581\}$ 
& $[0.052,\,0.891]$ \\

&
& $1n$ & $1r$ & $\pi_{1n,1r}$ 
& $\{0.382,\,0.987\}$ 
& $\{0.372,\,0.834\}$ 
& $[0.278,\,1.000]$ \\

&
& $2n$ & $1r$ & $\pi_{2n,1r}$ 
& $\{0.280,\,1.000\}$ 
& $\{0.288,\,0.894\}$ 
& $[0.146,\,1.000]$ \\

&
& $0r$ & $0n$ & $\pi_{0r,0n}$ 
& $\{0.000,\,0.170\}$ 
& $\{0.031,\,0.161\}$ 
& $[0.009,\,0.233]$ \\

&
& $\prime\prime$ & $1n$ & $\pi_{0r,1n}$ 
& $\{0.000,\,0.170\}$ 
& $\{0.000,\,0.129\}$ 
& $[0.000,\,0.199]$ \\

&
& $\prime\prime$ & $2n$ & $\pi_{0r,2n}$ 
& $\{0.000,\,0.154\}$ 
& $\{0.003,\,0.133\}$ 
& $[0.000,\,0.217]$ \\

&
& $\prime\prime$ & $1r$ & $\pi_{0r,1r}$ 
& $\{0.000,\,0.170\}$ 
& $\{0.003,\,0.130\}$ 
& $[0.000,\,0.201]$ \\

&
& $\prime\prime$ & $2u$ & $\pi_{0r,2u}$ 
& $\{0.031,\,0.051\}$ 
& $\{0.032,\,0.162\}$ 
& $[0.020,\,0.244]$ \\

&
& $2u$ & $1r$ & $\pi_{2u,1r}$ 
& $\{0.000,\,1.000\}$ 
& $\{0.054,\,1.000\}$ 
& $[0.000,\,1.000]$ \\

\addlinespace[0.9em]

\multirow{3}{*}{Composite}
& \ldelim\{{3}{*}
& \makecell[l]{Not working} 
& \makecell[l]{Working} 
& $\pi_{0,1+}$ 
& $0.167$ 
& $0.161$ 
& $[0.093,\,0.232]$ \\

&
& \makecell[l]{Off welfare} 
& \makecell[l]{On welfare} 
& $\pi_{n,p}$ 
& $\{0.231,\,0.445\}$ 
& $\{0.265,\,0.428\}$ 
& $[0.191,\,0.552]$ \\

&
& \makecell[l]{On welfare,\\not working} 
& \makecell[l]{Off welfare} 
& $\pi_{0r,n}$ 
& $\{0.000,\,0.170\}$ 
& $\{0.032,\,0.162\}$ 
& $[0.006,\,0.227]$ \\

\bottomrule
\end{tabular}
\end{adjustbox}

\vspace{0.45em}

\noindent
\parbox{\linewidth}{%
\footnotesize
\textit{Notes:} Number of state refers to earnings level, with $0$ indicating no earnings, $1$ indicating earnings below three times the monthly FPL, $2$ indicating earnings above three times the monthly FPL, and $1+$ indicating positive earnings. $n$ indicates welfare nonparticipation, $r$ indicates welfare participation with truthful reporting of earnings, $u$ indicates welfare participation with underreporting of earnings, and $p$ indicates welfare participation irrespective of reporting. Numbers in braces are estimated lower and upper bounds and clipped to $[0,1]$ when needed. Kline \& Tartari reported bounds column reproduces the corresponding estimates from their Table 5. For set-identified rows, the reported 95 percent confidence interval is the union of the lower- and upper-bound confidence intervals, each following the conservative multiplier-bootstrap procedure
described in Section~\ref{sec:overview}, via the person-clustered \(\operatorname{Exp}(1)\) multiplier bootstrap with 200 draws. The results report the full-panel, no-split OLS estimates. Each composite bound is obtained by setting the vector $q$ to one on the coordinates of the sub-bins that compose the coarse bin and to zero
elsewhere.
}

\end{minipage}
\end{table}

\paragraph{Granular partition.}
Table~\ref{tab:granular-lasso-bounds} presents identified sets under the nine-bin partition of
Section~\ref{sec:illustration}, estimated with a LASSO first stage, GroupKFold cross-fitting and two covariate regimes, baseline set of 28 and an
extended set of 255 variables constructed from pairwise
interactions and polynomial transformations.

The granular exercise is motivated by the concern, articulated by
\citeauthor{KT} themselves, that the opt-in response identified
under the coarse partition ``could hypothetically reflect trivial
earnings reductions from \$1 above the poverty line to exactly the
poverty line'' \citep[p.~1009]{KT}.  This can be tackled by
decomposing the identified set for the above-FPL nonparticipation
state ($2n$) across three sub-bins indexed $b_6n$, $b_7n$, and
$b_8n$, corresponding to monthly earnings in the ranges $[1.0,\,
1.2)\times\mathrm{FPL}$, $[1.2,\, 1.4)\times\mathrm{FPL}$, and
${\geq}\,1.4\times\mathrm{FPL}$, respectively. Section~\ref{sec:illustration} shows
how to adjust the linear system to account for finer partitions. We find that at least $30.0\%$, $35.5\%$, and
$20.6\%$ of women who, under AFDC, worked off assistance in each
respective sub-bin reduced their earnings below the poverty line in
response to the JF reform. Under the baseline covariate
regime, the corresponding lower bounds are
$30.0\%$, $32.8\%$, and $20.4\%$; the qualitative ordering is
preserved, with the attenuation consistent with the extended
covariate set delivering higher first-stage predictive power and
thereby tightening the identified set, though the tightening is not drastic, which can point to covariates being not very informative.

The ordering of lower bounds across sub-bins is consistent with
trivial earnings rounding as the primary behavioral mechanism. 
The opt-in behavior, however, reflects almost similar scale in 
$[1,\, 1.2)\times\mathrm{FPL}$ and $[1.2,\, 1.4)\times\mathrm{FPL}$ 
range. These findings jointly imply that
the identified opt-in response is concentrated among women who
undertook both marginal and somewhat substantive ($20-40\%$) labor-supply
adjustments.

For the five sub-bins of the below-FPL nonparticipation state
($1n$), indexed $b_1n$ through $b_5n$ and corresponding to quintiles
of the interval $(0,\, \mathrm{FPL}]$, the lower bounds exhibit
substantial dispersion across the earnings distribution with strongest 
participation response at intermediate below-FPL earnings levels of 
$b_3n$ sub-bin with $55\%$ of women taking up welfare. Table~\ref{tab:pi_bounds_modeD} in Appendix reports results with different granularity specifications.

\begin{table}[htbp]
\centering
\caption{\textsc{Set-identified response probabilities: Jobs First case study, $13 \times 33$ granular design}}
\label{tab:granular-lasso-bounds}
\begin{minipage}{0.88\linewidth}
\scriptsize
\setlength{\tabcolsep}{6pt}
\renewcommand{\arraystretch}{1.10}
\begin{adjustbox}{width=\linewidth,center}
\begin{tabular}{@{}lc llccc@{}}
\toprule
\toprule
\multicolumn{2}{c}{}
& \multicolumn{2}{c}{State occupied under}
& & \multicolumn{2}{c}{Estimated bounds} \\
\cmidrule(lr){3-4}
\cmidrule(lr){6-7}
\multicolumn{2}{l}{Response type}
& AFDC
& JF
& Symbol
& Base
& Extended \\
\midrule
\multirow{3}{*}{$\pi_{2n, 1r}$}
& \ldelim\{{3}{*}
& $b_{6}n$ & $1r$ & $\pi_{b_{6}n,1r}$
& $\{0.300,\,1.000\}$ & $\{0.300,\,1.000\}$ \\
&
& $b_{7}n$ & $1r$ & $\pi_{b_{7}n,1r}$
& $\{0.328,\,1.000\}$ & $\{0.355,\,1.000\}$ \\
&
& $b_{8}n$ & $1r$ & $\pi_{b_{8}n,1r}$
& $\{0.204,\,1.000\}$ & $\{0.206,\,1.000\}$ \\
\addlinespace[0.5em]
\multirow{5}{*}{$\pi_{1n, 1r}$}
& \ldelim\{{5}{*}
& $b_{1}n$ & $1r$ & $\pi_{b_{1}n,1r}$
& $\{0.000,\,1.000\}$ & $\{0.000,\,1.000\}$ \\
&
& $b_{2}n$ & $1r$ & $\pi_{b_{2}n,1r}$
& $\{0.235,\,1.000\}$ & $\{0.245,\,1.000\}$ \\
&
& $b_{3}n$ & $1r$ & $\pi_{b_{3}n,1r}$
& $\{0.550,\,1.000\}$ & $\{0.550,\,1.000\}$ \\
&
& $b_{4}n$ & $1r$ & $\pi_{b_{4}n,1r}$
& $\{0.474,\,1.000\}$ & $\{0.492,\,1.000\}$ \\
&
& $b_{5}n$ & $1r$ & $\pi_{b_{5}n,1r}$
& $\{0.365,\,1.000\}$ & $\{0.380,\,1.000\}$ \\
\addlinespace[0.55em]
\cmidrule(lr){3-7}
\addlinespace[0.30em]
\multirow{2}{*}{Composite}
& \ldelim\{{2}{*}
& $2n$ & $1r$ & $\pi_{2n,1r}$
& $\{0.265,\,1.000\}$ & $\{0.274,\,1.000\}$ \\
&
& $1n$ & $1r$ & $\pi_{1n,1r}$
& $\{0.343,\,1.000\}$ & $\{0.354,\,1.000\}$ \\
\bottomrule
\end{tabular}
\end{adjustbox}

\vspace{0.45em}

\noindent
\parbox{\linewidth}{%
\footnotesize
\textit{Notes:} The table reports set-identified bounds on response probabilities under the granular state partition (spec8 shown in Appendix~\ref{subsec:specs} and Figure~\ref{fig:design-matrices}, granular design 2), comparing two covariate specifications, base (28) vs. extended (255), with LASSO first-stage and GroupKFold cross-fitting per woman-id. States $b_{j}n$ denote welfare nonparticipation in granular earnings bin $j$ under AFDC, $1r$ denotes welfare participation with truthful reporting under JF. The bracketed detailed transitions $b_{6}n$--$b_{8}n$ and $b_{1}n$--$b_{5}n$ aggregate to the composite response probabilities $\pi_{2n,1r}$ and $\pi_{1n,1r}$, which are reported in the composite block. Each composite bound is obtained by setting the vector $q$ to one on the coordinates of the sub-bins that compose the coarse bin and to zero elsewhere. Numbers in braces are estimated lower and upper bounds and clipped to $[0,1]$ when needed. In the coarse model the dual feasible region has few enough vertices that they can be enumerated, and each bound is obtained by checking over this finite set. In the granular model the number of vertices grows combinatorially with the additional states and parameters—so enumeration is costly, and we instead obtain each bound by solving the dual program directly with a numerical solver, imposing $\|\nu\|_{\infty} \leq 200$ on the dual variable. We exclude any program that fails to solve and any program whose optimal $\nu$ attains the box boundary. In practice this filter removes only $32$ of $29,662$ observations under the base and 6 under the extended regimes.
}
\end{minipage}
\end{table}

\paragraph{Welfare bounds.}

Two competing forces govern the choice of partition for welfare
analysis.  Finer bins reduce the discretization error in
$\Delta_{(s^{a},\,s^{j})}$ defined in Section~\ref{sec:illustration}. When earnings are evaluated at bin
midpoints or IPW-weighted means, a coarser partition averages over a wider interval of true
earnings, potentially conflating transitions with substantially
different welfare consequences.  At the same time, a finer partition
expands the matrix~$\widetilde{A}$ further and introduces additional transition
parameters, which widens the identified set. Importantly, even under very granular regimes the lower bound on an individual granular cell can be zero even though the lower bound on the composite flow it belongs to, such as $2n \to 1r$, can be strictly positive. This reflects a general property of the LP that the lower bound on a sum weakly exceeds the sum of the component lower bounds, since the joint minimum must hold at a single feasible point. Any cell can be zeroed by shifting mass to other destination sub-bins, but the conservation-of-mass constraints forbid zeroing the aggregate since the observed contraction of the $2n$ population can only be rationalized by opting into welfare. For confirmation, we report the composite bounds for granular specifications in Table~\ref{tab:pi_bounds_modeD} in Appendix, which includes examples of trivial individual bounds but non-trivial composite ones. Because of this we opt for using $11 \times 53$ design to bound the composite welfare gains. Among the specifications we examine, its cells deviate the least from the actual monetary welfare value of each transition. For the opt-in margin $2n \to 1r$ the resulting bound on the welfare gain is of indeterminate sign. The identified interval is [−-$178.6\$$, $14.6\$$] per month and includes zero, so the data are uninformative as to whether opting into the program raises or lowers the welfare as we define it in Section~\ref{sec:illustration}. Table~\ref{tab:welfare_bounds} in Appendix shows welfare bounds for both granular bins and composites under different granularity specifications. Under less granular specifications, welfare losses for $2n \to 1r$ range from at least --$21.05\$$ to --$61.47\$$ per month.

\bibliographystyle{chicagoa}	
\bibliography{my_new_bibtex}

\newpage

\begin{center}

This online supplement, intended for online-only publication alongside the main article, contains a self-contained treatment of the asymptotic theory used in the proof of Proposition~\ref{cor:lp} (Supplement~\ref{app:selfcontained}) and additional empirical results (Supplement~\ref{sec:figsrobust}). 

\end{center}

\clearpage

\appendix

\renewcommand{\theequation}{A.\arabic{equation}}
\renewcommand{\thelemma}{A.\arabic{lemma}}
\renewcommand{\theassumption}{A.\arabic{assumption}}
\renewcommand{\theremark}{A.\arabic{remark}}
\renewcommand{\thesection}{A}
\setcounter{equation}{0}
\setcounter{assumption}{0}
\setcounter{lemma}{0}
\setcounter{remark}{0}

\section[Online Supplement: Self-Contained Asymptotic Theory]{Self-Contained Asymptotic Theory for Envelope-Regression Estimators}
\label{app:selfcontained}

This online supplement develops a self-contained asymptotic theory for cross-fitted envelope-regression estimators. The three lemmas below --- oracle expansion (Lemma~\ref{lem:app:closedform}), Gaussian approximation with variance consistency (Lemma~\ref{lem:app:consistency}), and multiplier-bootstrap validity (Lemma~\ref{lem:app:boot}) --- jointly deliver the three conclusions of Proposition~\ref{cor:lp} in the main text. The theory is stated at the level of an abstract finite index set $T$ and a vector-valued nuisance function $\nu_0(\cdot)$; in particular, it strictly generalizes Theorem~1 of \citet{LuedtkeLaan}, which treats the scalar binary-treatment case $|T| = 2$. No external result beyond standard empirical-process machinery \citep{chernozhukov2016double} is required.

\subsection{Framework}

The abstract target is
\begin{equation}
\label{eq:app:psi}
\psi_0 \;=\; \E_X\!\Big[\min_{t\in T}\,\phi(t, \nu_0(X))\Big],
\end{equation}
where $T$ is a finite index set, $\phi(t, \cdot)$ is a known scalar function of a nuisance function $\nu_0:\mathcal{X}\to\mathbb{R}^k$, and $\phi(t, \nu_0(x)) = \E[\rho(W, t, \xi_0)\mid X=x]$ for some observed signal $\rho(W, t, \xi_0)$ with nuisance $\xi_0$. For the CLP application, $T = \overline{\mathcal{T}}$, $\nu_0(x) = \bm{b}_0(x)$, $\phi(\nu, \bm{b}) = \nu'\bm{b}$, $\rho(W, \nu, \xi_0) = \nu'\mathbf{B}$, and $\psi_0 = \sigma(q)$ for each fixed $q$.

\smallskip
\noindent
Assume the true minimizer $t_0(x) := \arg\min_{t\in T}\phi(t, \nu_0(x))$ is unique a.s.\ in $P_X$. Given a $K$-fold random partition $(J_k)_{k=1}^K$ of $[N]$ and cross-fitted nuisance estimates $\widehat\xi_k$, $\widehat\nu_k$, define
\begin{equation}
\label{eq:app:est}
\widehat t_i = \arg\min_{t\in T}\phi(t, \widehat\nu_k(X_i)),\qquad
\widehat\psi = \frac{1}{N}\sum_{i=1}^N \rho(W_i, \widehat t_i, \widehat\xi_i),\qquad i\in J_k.
\end{equation}
The multiplier-bootstrap analog uses i.i.d.\ $e_i\sim\mathrm{Exp}(1)$ independent of the data:
\begin{equation}
\label{eq:app:boot}
\widetilde\psi = \frac{1}{N}\sum_{i=1}^N \frac{e_i}{\bar e}\,\rho(W_i, \widehat t_i, \widehat\xi_i),\qquad \bar e = N^{-1}\sum_i e_i.
\end{equation}

\subsection{Assumptions}

\begin{assumption}[Small Bias Condition]
\label{ass:app:smallbias}
There exists a sequence $\varepsilon_N = o(1)$ such that, with probability at least $1-\varepsilon_N$, for every partition index $k\in[K]$ the first-stage estimate $\widehat\xi_k$ belongs to a shrinking neighborhood $\Xi_N$ of $\xi_0$. Uniformly over $\Xi_N$,
\begin{align}
\label{eq:app:bn}
B_N &\;=\; \sup_{\xi\in\Xi_N}\sup_{t\in T}\sup_{x\in\mathcal{X}} \sqrt{N}\,\big|\E[\rho(W, t, \xi) - \rho(W, t, \xi_0)\mid X=x]\big| \;=\; o(1),\\
\label{eq:app:lambdan}
\Lambda_N &\;=\; \sup_{\xi\in\Xi_N}\sup_{t\in T}\sup_{x\in\mathcal{X}} \E[(\rho(W, t, \xi) - \rho(W, t, \xi_0))^2\mid X=x] \;=\; o(1).
\end{align}
\end{assumption}

\begin{assumption}[Rate, pointwise in $q$]
\label{ass:app:rate}
Fix $q\in\mathcal{S}^{d-1}$. There exists a sequence $\varepsilon_N=o(1)$ and a sequence of shrinking neighborhoods $\mathcal{T}^\nu_N(q)$ of $\nu_0(\cdot, q)$ --- the appendix-level analog of the neighborhood $B_N$ in Assumption~\ref{ass:firststage} of the main text --- such that, with probability at least $1-\varepsilon_N$, $\widehat\nu_k(\cdot, q)\in\mathcal{T}^\nu_N(q)$ for every $k\in[K]$, and
\[
\sup_{\nu\in\mathcal{T}^\nu_N(q)}\sup_{x\in\mathcal{X}} \big\|\nu(x, q) - \nu_0(x, q)\big\| \;\leq\; \nu^\infty_N \;=\; o(N^{-1/4}).
\]
\end{assumption}

\begin{assumption}[Rate, uniform in $q$]
\label{ass:app:rateunif}
There exists a sequence $\varepsilon_N=o(1)$ such that, with probability at least $1-\varepsilon_N$, $\widehat\nu_k(\cdot,q)\in\mathcal{T}^\nu_N(q)$ for every $k\in[K]$ and every $q\in\mathcal{S}^{d-1}$, and
\[
\sup_{q\in\mathcal{S}^{d-1}}\,\sup_{\nu\in\mathcal{T}^\nu_N(q)}\,\sup_{x\in\mathcal{X}} \big\|\nu(x, q) - \nu_0(x, q)\big\| \;\leq\; \nu^\infty_N \;=\; o(N^{-1/4}).
\]
\end{assumption}

\noindent
\emph{Remark on the remaining assumptions.} Assumptions~\ref{ass:app:smallbias} (Small Bias), \ref{ass:app:reg} (Regularity), and~\ref{ass:app:ma} (Margin) are stated with a supremum over the abstract index set $T$, which in the conditional LP specialization of~\S A.1 equals the $q$-free union $\overline{\mathcal{T}}=\bigcup_{q\in\mathcal{S}^{d-1}}\mathcal{T}(q)$. Since $\overline{\mathcal{T}}$ does not depend on $q$, these three assumptions already hold uniformly in $q$ and require no modification for the uniform-in-$q$ results developed in~\S A.7 below.

\begin{assumption}[Regularity]
\label{ass:app:reg}
(i) Uniformly bounded moments:
\[
\sup_{\xi\in\Xi_N}\sup_{t\in T}\sup_{x\in\mathcal{X}} |\rho(W, t, \xi)| \leq B_\rho \quad a.s. 
\]
(ii) Bounded derivative:
\[
\sup_{\nu\in\mathcal{T}^\nu_N}\sup_{x\in\mathcal{X}}\sup_{t\in T}\|\partial\phi(t, \nu(x))/\partial\nu\| \leq B_\phi.
\]
\end{assumption}

\begin{assumption}[Margin]
\label{ass:app:ma}
There exist finite constants $\bar{B}, \delta > 0$ such that for all $t\in(0, \delta)$,
\[
\sup_{(j, k)\in T,\,j\neq k}\Pr\!\big(0\leq \phi(j, \nu_0(X)) - \phi(k, \nu_0(X)) \leq t\big) \;\leq\; \bar{B}\,t.
\]
\end{assumption}

\subsection{Lemmas}

\begin{lemma}[Oracle expansion for cross-fitted envelope-regression estimators]
\label{lem:app:closedform}
Under Assumptions~\ref{ass:app:smallbias}--\ref{ass:app:ma},
\begin{equation}
\label{eq:app:bias}
\sup_{\xi\in\Xi_N}\sqrt{N}\,\big|\E[S_1+S_2]\big| \;=\; O\!\big(B_N + \sqrt{N}(\nu^\infty_N)^2\big) \;=\; o(1),
\end{equation}
where $S_1+S_2 = \rho(W, \widehat t, \widehat\xi) - \rho(W, t_0, \xi_0)$ is the estimation-error decomposition defined in the proof. Consequently, the cross-fitted estimator $\widehat\psi$ satisfies the oracle expansion
\begin{equation}
\label{eq:app:oracle}
\sqrt{N}\,\big(\widehat\psi - N^{-1}\textstyle\sum_{i=1}^N \rho(W_i, t_0(X_i), \xi_0)\big) \;=\; o_P(1).
\end{equation}
\end{lemma}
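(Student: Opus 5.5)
We split the per-observation error, for $i\in J_k$, into a signal-estimation term and an argmin-misclassification term:
\[
\rho(W_i,\widehat t_i,\widehat\xi_i)-\rho(W_i,t_0(X_i),\xi_0)
= \underbrace{\rho(W_i,\widehat t_i,\widehat\xi_i)-\rho(W_i,\widehat t_i,\xi_0)}_{S_1}
+\underbrace{\rho(W_i,\widehat t_i,\xi_0)-\rho(W_i,t_0(X_i),\xi_0)}_{S_2}.
\]
We work on the event, of probability at least $1-2\varepsilon_N$, on which $\widehat\xi_k\in\Xi_N$ and $\widehat\nu_k\in\mathcal{T}^\nu_N$ for every $k$. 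By cross-fitting, conditional on the auxiliary sample $W_{J_k^c}$, the pair $(\widehat\xi_k,\widehat\nu_k)$ is fixed and the $W_i$, $i\in J_k$, are i.i.d. Hence it suffices to bound, for each fold, the conditional mean and conditional variance of $\sum_{i\in J_k}(S_{1,i}+S_{2,i})$. Since $K$ is fixed, summing over folds costs only a constant.

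\textbf{Bias of $S_1$.} Because $\widehat t_i$ is a function of $X_i$ and of the auxiliary sample only, we can condition on $X_i$ first. This gives
\[
|\E[S_1\mid X, W_{J_k^c}]|\le \sup_{t}\sup_x|\E[\rho(W,t,\widehat\xi)-\rho(W,t,\xi_0)\mid X=x]| \le B_N/\sqrt N .
\]

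\textbf{Bias of $S_2$.} Conditionally on $X$, we have $\E[S_2\mid X]=\phi(\widehat t,\nu_0(X))-\phi(t_0,\nu_0(X))\ge 0$. This term is nonzero only when $\widehat t\ne t_0$. Since $\widehat t$ minimizes $\phi(\cdot,\widehat\nu(X))$, we can write
\[
0\le \phi(\widehat t,\nu_0)-\phi(t_0,\nu_0)
\le [\phi(\widehat t,\nu_0)-\phi(\widehat t,\widehat\nu)]+[\phi(t_0,\widehat\nu)-\phi(t_0,\nu_0)]
\le 2B_\phi\nu_N^\infty,
\]
using the mean-value theorem and Assumption~\ref{ass:app:reg}(ii). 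So the event $\{\widehat t\neq t_0\}$ forces the gap $\phi(\widehat t,\nu_0(X))-\phi(t_0,\nu_0(X))$ between two distinct indices into $[0,2B_\phi\nu^\infty_N]$. Taking a union over the finitely many pairs $(j,k)$ in Assumption~\ref{ass:app:ma}, this event has probability at most $|T|^2\bar B\,2B_\phi\nu^\infty_N$. Multiplying by the size of the gap on that event gives
\[
|\E S_2|\le 4|T|^2\bar B B_\phi^2(\nu^\infty_N)^2 .
\]
Combining the two bias bounds yields \eqref{eq:app:bias}: $\sqrt N|\E[S_1+S_2]|=O(B_N+\sqrt N(\nu_N^\infty)^2)=o(1)$, where $\sqrt N(\nu^\infty_N)^2=o(1)$ by the $o(N^{-1/4})$ rate.

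\textbf{Variance.} Conditional on the auxiliary sample, the summands are i.i.d. It therefore suffices to show $\E[(S_1+S_2)^2]=o(1)$. For $S_1$, we have $\E[S_1^2]\le\Lambda_N=o(1)$. For $S_2$, boundedness $|\rho|\le B_\rho$ gives $\E S_2^2\le 4B_\rho^2\Pr(\widehat t\ne t_0)=O(\nu_N^\infty)=o(1)$. Chebyshev's inequality then gives
\[
N^{-1/2}\sum_{i\in J_k}\big(S_i-\E[S_i\mid W_{J_k^c}]\big)=o_P(1),
\]
and together with the bias bound this proves \eqref{eq:app:oracle}.

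The main obstacle is the margin step for $S_2$: we must translate the sup-norm error in $\widehat\nu$ into the probability of misclassifying the argmin, and recover the quadratic (rather than linear) rate. This hinges on noticing that the gap is small exactly on the misclassification event, so the two factors of $\nu^\infty_N$ multiply. We also need to allow ties of probability zero; the uniqueness of $t_0$ a.s. handles this.
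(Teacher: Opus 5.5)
Your proposal is correct and follows essentially the same route as the paper. It uses the same $S_1/S_2$ split, the $B_N$ bound for $S_1$, and the same key observation: on the misclassification event the gap $\tau_0$ is at most $2B_\phi\nu^\infty_N$, so the margin probability and the gap size multiply to give $(\nu^\infty_N)^2$. It also uses the same $O(\Lambda_N+\nu^\infty_N)$ second-moment bound. Your explicit union bound over index pairs (the $|T|^2$ factor), the conditioning on $X_i$ and the auxiliary sample in the $S_1$ bias step, and the fold-wise conditional Chebyshev argument only spell out details that the paper leaves implicit or delegates to a standard cross-fitting lemma.
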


\begin{proof}
We proceed by a standard two-step decomposition: an error decomposition that separates the contribution of the first-stage estimation error from the contribution of the mis-optimization error, followed by a margin-based bound on the bias.

Define the true and estimated minimizers $t_0(X) = \arg\min_{t\in T}\phi(t, \nu_0(X))$ and $\widehat t(X) = \arg\min_{t\in T}\phi(t, \widehat\nu(X))$, and the mis-optimization errors
\[
\tau_0(X) := \phi(\widehat t(X), \nu_0(X)) - \phi(t_0(X), \nu_0(X)),\qquad
\tau(X) := \phi(\widehat t(X), \widehat\nu(X)) - \phi(t_0(X), \widehat\nu(X)).
\]
Decompose
\[
\rho(W, \widehat t, \widehat\xi) - \rho(W, t_0, \xi_0) = \underbrace{[\rho(W, \widehat t, \widehat\xi) - \rho(W, \widehat t, \xi_0)]}_{S_1} + \underbrace{[\rho(W, \widehat t, \xi_0) - \rho(W, t_0, \xi_0)]}_{S_2}.
\]

By construction of $t_0$, $\tau_0(X)\geq 0$ a.s., and uniqueness implies $\tau_0(x)=0\iff \widehat t(x)=t_0(x)$. Assumption~\ref{ass:app:ma} therefore gives
\begin{equation}
\label{eq:app:margin}
\Pr(0 < \tau_0(X) < t) \;\leq\; \bar{B}\,t\qquad\forall t\in(0, \delta).
\end{equation}

By definition of $\widehat t$, $\tau(X)\leq 0$ a.s., so $0<\tau_0(X)\leq \tau_0(X)-\tau(X)$ whenever $\widehat t\neq t_0$. For any $\nu\in\mathcal{T}^\nu_N$ and any $x$,
\[
|\tau(x)-\tau_0(x)| \leq |\phi(\widehat t, \nu) - \phi(\widehat t, \nu_0)| + |\phi(t_0, \nu) - \phi(t_0, \nu_0)| \leq 2B_\phi \|\nu(x)-\nu_0(x)\| \leq 2B_\phi \nu^\infty_N
\]
by Assumption~\ref{ass:app:reg}(ii) and Assumption~\ref{ass:app:rate}. Define the misclassification event $\mathcal{E}_\tau := \{0 < \tau_0(X) \leq 2B_\phi\nu^\infty_N\}$. Then
\begin{align*}
\sqrt{N}|\E[S_1]| &\leq \sqrt{N}\sup_{t\in T, \xi\in\Xi_N}|\E[\rho(W, t, \xi)-\rho(W, t, \xi_0)]| \;\leq\; B_N \;=\; o(1),\\
|\E[S_2]| &= \E[\tau_0(X)] \leq \E[(\tau_0-\tau)\mathbf{1}\{\mathcal{E}_\tau\}] \leq 2B_\phi\nu^\infty_N\cdot \Pr(\mathcal{E}_\tau)\\
&\leq 2B_\phi\nu^\infty_N \cdot 2B_\phi\bar{B}\nu^\infty_N \;=\; 4B_\phi^2\bar{B}(\nu^\infty_N)^2.
\end{align*}
The second line uses~\eqref{eq:app:margin} with $t=2B_\phi\nu^\infty_N$. Combining yields~\eqref{eq:app:bias}.

Assumption~\ref{ass:app:smallbias} gives $\sup\E[S_1^2] = O(\Lambda_N)$. For $S_2$:
\[
\sup_{\xi\in\Xi_N}\E[S_2^2] \leq 2B_\rho^2 \Pr(\mathcal{E}_\tau) = O(\nu^\infty_N).
\]
Thus $\sup\E[(S_1+S_2)^2] = O(\Lambda_N + \nu^\infty_N) = o(1)$. Combined with the bias bound, standard cross-fitting arguments (e.g., Lemma~A.3 of \citealp{chernozhukov2016double}) yield the oracle expansion~\eqref{eq:app:oracle}.
\end{proof}

\begin{lemma}[Gaussian approximation and variance consistency]
\label{lem:app:consistency}
Under Assumptions~\ref{ass:app:smallbias}--\ref{ass:app:ma},
\begin{equation}
\label{eq:app:clt}
\sqrt{N}\,(\widehat\psi - \psi_0) \;\Rightarrow\; \mathcal{N}(0, V_0),
\end{equation}
with $V_0 = \E[\rho^2(W, t_0(X), \xi_0)] - \psi_0^2$. The sample-variance estimator
\[
\widehat V \;=\; N^{-1}\sum_{i=1}^N \rho^2(W_i, \widehat t_i, \widehat\xi_i) - \widehat\psi^2
\]
is consistent: $\widehat V \xrightarrow{P} V_0$.
\end{lemma}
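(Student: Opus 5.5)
Building on Lemma~\ref{lem:app:closedform}, the plan is to reduce both claims to statements about the oracle average. By \eqref{eq:app:oracle}, $\sqrt{N}(\widehat\psi-\psi_0) = N^{-1/2}\sum_{i=1}^N\big(\rho(W_i,t_0(X_i),\xi_0)-\psi_0\big)+o_P(1)$. The summands are i.i.d.\ with mean $\E[\rho(W,t_0(X),\xi_0)] = \E[\phi(t_0(X),\nu_0(X))] = \E[\min_{t\in T}\phi(t,\nu_0(X))] = \psi_0$, using the tower property and the definition of $t_0$. By Assumption~\ref{ass:app:reg}(i) they are bounded by $B_\rho$, so their variance is finite and equals $V_0=\E[\rho^2(W,t_0(X),\xi_0)]-\psi_0^2$. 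The Lindeberg--L\'evy CLT plus Slutsky then gives \eqref{eq:app:clt}. If $V_0=0$, the limit is degenerate and the statement holds trivially.

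For variance consistency, write $\widehat V - V_0 = (\widehat M_2 - M_2) - (\widehat\psi^2-\psi_0^2)$, with $\widehat M_2 = N^{-1}\sum_i\rho^2(W_i,\widehat t_i,\widehat\xi_i)$ and $M_2=\E[\rho^2(W,t_0,\xi_0)]$. The second term is $o_P(1)$ by \eqref{eq:app:clt} and continuity. For the first, split it as
\[
\widehat M_2 - M_2 = \Big(N^{-1}\textstyle\sum_i \rho_{0,i}^2 - M_2\Big) + N^{-1}\textstyle\sum_i\big(\widehat\rho_i^2-\rho_{0,i}^2\big),
\]
where $\rho_{0,i}=\rho(W_i,t_0(X_i),\xi_0)$ and $\widehat\rho_i=\rho(W_i,\widehat t_i,\widehat\xi_i)$. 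The first piece is $o_P(1)$ by the weak LLN, since the summands are bounded. For the second, use $|\widehat\rho_i^2-\rho_{0,i}^2|\le 2B_\rho|\widehat\rho_i-\rho_{0,i}|$ together with $\widehat\rho_i-\rho_{0,i}=S_{1,i}+S_{2,i}$ from the proof of Lemma~\ref{lem:app:closedform}. Then
\[
\E\big[N^{-1}\textstyle\sum_i|\widehat\rho_i-\rho_{0,i}|\big]\le \sup\big(\E[(S_1+S_2)^2]\big)^{1/2}=O\big((\Lambda_N+\nu_N^\infty)^{1/2}\big)=o(1).
\]

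The main obstacle is making this last expectation rigorous, because $\widehat t_i$ and $\widehat\xi_i$ are random. I would handle it by conditioning on the complementary-fold data $\{W_j:j\in J_k^c\}$ fold by fold. On the event of probability at least $1-\varepsilon_N$ where $\widehat\xi_k\in\Xi_N$ and $\widehat\nu_k\in\mathcal{T}^\nu_N$, the observations in $J_k$ are i.i.d.\ conditionally, so the bound on $\E[(S_1+S_2)^2]$ derived in the proof of Lemma~\ref{lem:app:closedform} applies conditionally. Conditional Markov then gives $o_P(1)$ for each fold. Because $K$ is fixed, summing over folds and adding $\varepsilon_N$ for the complement event finishes the argument. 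One detail to confirm is that the bound $\E[S_2^2]\le 2B_\rho^2\Pr(\mathcal{E}_\tau)$ only needs $S_2\neq0$ on $\mathcal{E}_\tau$. This holds because $\widehat t\ne t_0$ forces $0<\tau_0\le 2B_\phi\nu_N^\infty$, with ties excluded a.s.\ by uniqueness of $t_0$.
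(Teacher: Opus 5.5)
Your proposal is correct and follows essentially the same route as the paper: the CLT comes directly from the oracle expansion of Lemma~\ref{lem:app:closedform}, and variance consistency comes from splitting $\rho^2(W,\widehat t,\widehat\xi)-\rho^2(W,t_0,\xi_0)$ along the same $S_1/S_2$ decomposition (the paper's $T_1/T_2$). That difference is then controlled via $|a^2-b^2|\le 2B_\rho|a-b|$, $\Lambda_N$, and the margin event $\mathcal{E}_\tau$, and passed to the sample average fold by fold, conditional on the hold-out data. The differences are only in presentation: you reuse the $L^2$ bound on $S_1+S_2$ from Lemma~\ref{lem:app:closedform}, and you write out the conditional Markov step and the $\widehat\psi^2-\psi_0^2$ term, which the paper covers with a one-line appeal to a fold-wise LLN.
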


\begin{proof}
The Gaussian approximation~\eqref{eq:app:clt} is an immediate consequence of Lemma~\ref{lem:app:closedform}: the oracle expansion $\sqrt N(\widehat\psi - \psi_0) = N^{-1/2}\sum_i[\rho(W_i, t_0, \xi_0) - \psi_0] + o_P(1)$ combined with the central limit theorem (finite variance by Assumption~\ref{ass:app:reg}(i)) gives convergence to $\mathcal{N}(0, V_0)$.

For variance consistency, decompose
\[
\rho^2(W, \widehat t, \widehat\xi) - \rho^2(W, t_0, \xi_0) = \underbrace{[\rho^2(W, \widehat t, \widehat\xi) - \rho^2(W, \widehat t, \xi_0)]}_{T_1} + \underbrace{[\rho^2(W, \widehat t, \xi_0) - \rho^2(W, t_0, \xi_0)]}_{T_2}.
\]
For any $B$-bounded $P, Q$, $\E|P^2-Q^2|\leq 2B\|P-Q\|_{P, 2}$. With $B=B_\rho$ and Assumptions~\ref{ass:app:smallbias}, \ref{ass:app:reg}(i):
\[
\sup_{\xi\in\Xi_N}|\E T_1| \leq 2B_\rho \Lambda_N^{1/2}.
\]
For $T_2$, Assumption~\ref{ass:app:reg}(i) and~\eqref{eq:app:margin} give
\[
\sup_{\nu\in\mathcal{T}^\nu_N}|\E T_2| \leq 2B_\rho^2 \Pr(\widehat t(X)\neq t_0(X)) \leq 2B_\rho^2 \Pr(\mathcal{E}_\tau) \leq 4\bar{B}B_\phi B_\rho^2\nu^\infty_N.
\]
Combining, $|\E[T_1+T_2]| = O(\Lambda_N^{1/2} + \nu^\infty_N) = o(1)$. A standard LLN on each cross-fitted fold (summands are i.i.d.\ conditional on the hold-out data) gives $\widehat V \to V_0$ in probability.
\end{proof}

\begin{lemma}[Multiplier-bootstrap validity]
\label{lem:app:boot}
Under Assumptions~\ref{ass:app:smallbias}--\ref{ass:app:ma}, the multiplier-bootstrap statistic $\widetilde S_N = \sqrt{N}(\widetilde\psi - \widehat\psi)$ converges conditionally (on the data) in distribution to $\mathcal{N}(0, V_0)$ in probability, i.e.,
\begin{equation}
\label{eq:app:bootclt}
\sup_{z\in\mathbb{R}}\big|\Pr_e(\widetilde S_N \leq z\mid\{W_i\}) - \Pr(Z\leq z)\big| \;\xrightarrow{P}\; 0,\qquad Z\sim\mathcal{N}(0, V_0).
\end{equation}
Consequently, the bootstrap confidence interval  $\widehat\psi + N^{-1/2}[\widehat C_{\alpha/2}, \widehat C_{1-\alpha/2}]$ has asymptotic coverage $1-\alpha$.
\end{lemma}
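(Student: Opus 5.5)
The plan is to reduce the bootstrap statistic to a multiplier-weighted empirical process in the oracle signal, and then apply a conditional central limit theorem.

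Write $\widehat\rho_i := \rho(W_i,\widehat t_i,\widehat\xi_i)$ and $\rho_{0i} := \rho(W_i,t_0(X_i),\xi_0)$. Because $\bar e$ enters as a normalization, we can use $\sum_i (e_i/\bar e)=N$ to obtain the exact identity
\[
\widetilde S_N=\sqrt N(\widetilde\psi-\widehat\psi)=\frac{1}{\bar e}\,\frac{1}{\sqrt N}\sum_{i=1}^N (e_i-\bar e)\,\widehat\rho_i=\frac{1}{\bar e}\,\frac{1}{\sqrt N}\sum_{i=1}^N (e_i-1)(\widehat\rho_i-\widehat\psi).
\]
We then split $\widehat\rho_i-\widehat\psi$ as $(\rho_{0i}-\psi_0)+(\widehat\rho_i-\rho_{0i})-(\widehat\psi-\psi_0)$. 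This gives three pieces.

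\textbf{The $\widehat\psi$ term.} It equals $(\widehat\psi-\psi_0)\,N^{-1/2}\sum_i(e_i-1)$. This is $O_P(N^{-1/2})\cdot O_P(1)=o_P(1)$ by Lemma~\ref{lem:app:consistency}.

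\textbf{The nuisance-error term.} Conditional on the data, $R_N:=N^{-1/2}\sum_i(e_i-1)(\widehat\rho_i-\rho_{0i})$ has mean zero and variance $N^{-1}\sum_i(\widehat\rho_i-\rho_{0i})^2$, since $\Var(e_i)=1$. By the proof of Lemma~\ref{lem:app:closedform}, the expectation of $(S_1+S_2)^2$ is $O(\Lambda_N+\nu^\infty_N)=o(1)$ on the event of probability $1-\varepsilon_N$ where the cross-fitted nuisances lie in $\Xi_N$ and $\mathcal{T}^\nu_N$. This step uses the cross-fitting structure: within fold $k$, the summands are i.i.d.\ given the auxiliary sample. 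Markov's inequality then gives $N^{-1}\sum_i(\widehat\rho_i-\rho_{0i})^2=o_P(1)$, and Chebyshev's inequality conditional on the data gives $R_N=o_P(1)$ in the conditional-probability sense. By the LLN, $\bar e\to1$ almost surely, so the factor $1/\bar e$ is harmless.

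\textbf{The oracle term.} What remains is $N^{-1/2}\sum_i(e_i-1)(\rho_{0i}-\psi_0)$. The oracle signals $\rho_{0i}$ are i.i.d., bounded by $B_\rho$ (Assumption~\ref{ass:app:reg}(i)), and have variance $V_0$. Conditional on the data, the summands are independent with variances $(\rho_{0i}-\psi_0)^2$. Their average converges to $V_0$ almost surely by the SLLN. The Lindeberg condition holds because the weights are bounded by $2B_\rho$ and $e_i-1$ has all moments, so the Lyapunov ratio is $O(N^{-1/2})$. The conditional Lindeberg--Feller CLT, applied along almost every data sequence, therefore yields conditional weak convergence to $\mathcal N(0,V_0)$. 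The limit is continuous, so Pólya's theorem upgrades this to the uniform Kolmogorov-distance statement~\eqref{eq:app:bootclt}. Conditional Slutsky, with perturbations that are $o_P(1)$ in conditional probability, transfers the result to $\widetilde S_N$.

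\textbf{Coverage.} By Lemma~\ref{lem:app:consistency}, $\sqrt N(\widehat\psi-\psi_0)\Rightarrow\mathcal N(0,V_0)$. The bootstrap quantiles $\widehat C_\alpha$ converge in probability to the quantiles of $\mathcal N(0,V_0)$, provided $V_0>0$; in the degenerate case the interval is trivially conservative. By symmetry of the normal limit, the stated interval therefore has asymptotic coverage $1-\alpha$.

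The main obstacle is making the nuisance term negligible \emph{conditionally}, which requires an in-probability bound on the empirical second moment of $\widehat\rho_i-\rho_{0i}$ rather than a bound on its expectation. The plan is to handle this by conditioning on the fold-complement data and applying the second-moment bound $O(\Lambda_N+\nu^\infty_N)$ from Lemma~\ref{lem:app:closedform}. That bound relies on the margin bound $\Pr(\widehat t\neq t_0)\le 2B_\phi\bar B\nu^\infty_N$.
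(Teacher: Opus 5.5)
Your proof is correct, but it is organized differently from the paper's.

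\textbf{The paper's route.} The paper never separates oracle and plug-in signals. It applies a conditional (Lyapunov) CLT directly to the data-dependent triangular array $(e_i-1)(\widehat\rho_i-\widehat\psi)$. Two facts make this work. First, the plug-in summands are bounded by $B_\rho$ uniformly over $\Xi_N$ (Assumption~\ref{ass:app:reg}(i)). Second, their empirical variance $V_0^*=N^{-1}\sum_i\widehat\rho_i^2-\widehat\psi^2$ converges to $V_0$ by the variance-consistency part of Lemma~\ref{lem:app:consistency}. Together these make the conditional Kolmogorov distance to $\mathcal N(0,V_0^*)$ vanish, and $V_0^*\to V_0$ in probability finishes the argument.

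\textbf{Your route.} You reduce $\widetilde S_N$ to the oracle multiplier sum plus a discrepancy. You then kill the discrepancy with a conditional Chebyshev bound driven by $N^{-1}\sum_i(\widehat\rho_i-\rho_{0i})^2=o_P(1)$, which you obtain from cross-fitting and the second-moment bound $O(\Lambda_N+\nu^\infty_N)$ in the proof of Lemma~\ref{lem:app:closedform}.

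\textbf{Comparison.} The paper's route is shorter and needs only variance consistency. Yours needs empirical $L^2$ closeness of plug-in and oracle signals, which is slightly stronger but equally available from the same bounds. In exchange, it relies only on the classical almost-sure multiplier CLT for a fixed i.i.d.\ sequence. It also delivers asymptotic equivalence of the plug-in and oracle bootstrap processes, which is exactly the $\Delta_N(q)$ argument the paper later uses for the uniform-in-$q$ bootstrap in Proposition~\ref{prop:app:uniformq}. In the coverage step, you explicitly invoke symmetry of the normal limit. The paper's proof passes over this point, even though the interval is written in percentile form.

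\textbf{One small correction.} Your aside that the degenerate case $V_0=0$ is ``trivially conservative'' is not justified. In that case both $\sqrt N(\widehat\psi-\psi_0)$ and the bootstrap quantiles are $o_P(1)$, with no ordering between them. Moreover, \eqref{eq:app:bootclt} itself fails when the limit is a point mass. It is cleaner simply to assume $V_0>0$, as the lemma implicitly does.
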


\begin{proof}
The argument follows Theorem~3.2 of \citet{SemJoE} (multiplier bootstrap for orthogonal moment conditions), specialized to the scalar envelope-regression setting. Write
\[
\widetilde S_N = N^{-1/2}\sum_{i=1}^N \Big(\tfrac{e_i}{\bar e} - 1\Big)\rho(W_i, \widehat t_i, \widehat\xi_i).
\]
Because $e_i\sim\mathrm{Exp}(1)$ is independent of the data with $\E[e_i/\bar e]=1+o_P(1)$ and $\Var(e_i/\bar e)=1+o_P(1)$, the conditional CLT applied to the bounded summands $\rho(W_i, \widehat t_i, \widehat\xi_i)$ yields convergence of the conditional law to $\mathcal{N}(0, V_0^*)$, where $V_0^* = N^{-1}\sum_i\rho^2(W_i, \widehat t_i, \widehat\xi_i) - \widehat\psi^2 \xrightarrow{P} V_0$ by Lemma~\ref{lem:app:consistency}. Since the limiting variance matches $V_0$ in~\eqref{eq:app:clt}, the multiplier-bootstrap distribution consistently approximates the sampling distribution, giving~\eqref{eq:app:bootclt}. Coverage of the bootstrap CI then follows by the continuous mapping theorem applied to quantile functions.
\end{proof}

\subsection{Proofs of Propositions~\ref{lem:identification} and~\ref{cor:tightness}}

\begin{proof}[Proof of Proposition~\ref{lem:identification}]
\textit{Convexity and compactness.} By \eqref{eq:idset}, $\mathcal{B}$ is the intersection of half-spaces indexed by $q \in \mathcal{S}^{d-1}$, hence convex. Since $\sigma(q)$ is finite for every $q \in \mathcal{S}^{d-1}$ and continuous in $q$ (as the support function of a bounded conditional LP aggregated over $X$), the intersection is closed and bounded, hence compact.

\textit{Support function.} By definition, the support function of $\mathcal{B}$ at direction $q$ is $\sup_{\bm{\beta}\in\mathcal{B}} q'\bm{\beta}$. The representation \eqref{eq:idset} makes this equal to $\sigma(q)$ directly, since each half-space $\{b : q'b \le \sigma(q)\}$ is active at $q$ and inactive at every other direction.

\textit{Random-set characterization.} For each $x$, the conditional identified set $\mathcal{B}(x) = \{\bm{\beta} : A\bm{\beta} = \bm{b}_0(x),\ \bm{\beta} \ge 0\}$ is a nonempty convex polytope. By Artstein's inequality \citep{Artstein} and the random-set / aggregation argument of \citet*{BM} (Definition~5, p.~771), the set of expectations $\{\E[\bm{\beta}_0(X)] : \bm{\beta}_0(x) \in \mathcal{B}(x)\ \text{measurable}\}$ is itself convex and has support function $q \mapsto \E[\sup_{\bm{\beta} \in \mathcal{B}(x)} q'\bm{\beta}] = \E[\sigma(q, X)] = \sigma(q)$. The two characterizations coincide.
\end{proof}

\begin{proof}[Proof of Proposition~\ref{cor:tightness}]
(1)~The basic (no-covariate) LP $\max \{q'\bm{\beta} : A\bm{\beta} = \bm{b}_0,\ \bm{\beta}\ge 0\}$ with $\bm{b}_0 = \E[\bm{b}_0(X)]$ has dual $\min\{\nu'\bm{b}_0 : A'\nu \ge q\}$. Since the primal value $\bar{\sigma}(q)$ is finite by assumption, strong LP duality gives $\bar{\sigma}(q) = \min_{\nu \in \mathcal{T}} \nu'\bm{b}_0$, where the minimum over the feasible polytope is attained at a vertex, establishing the first claim.

(2)~Applying strong LP duality pointwise in $x$ to the conditional LP at $x$, $\sigma(q, x) = \inf_{\nu \in \mathcal{T}} \nu'\bm{b}_0(x)$; aggregating over $P_X$ and using the law of iterated expectations $\E[\nu_0(X)'\bm{b}_0(X)] = \E[\nu_0(X)'\mathbf{B}]$ gives \eqref{eq:mainpsi3}.

(3)~The function $\bm{b} \mapsto \inf_{\nu \in \mathcal{T}} \nu'\bm{b}$ is the pointwise minimum of a finite collection of linear functions, hence concave. Jensen's inequality applied to a concave function yields $\E[\inf_{\nu} \nu'\bm{b}_0(X)] \le \inf_\nu \nu' \E[\bm{b}_0(X)]$, which is \eqref{eq:weakinequality}. Since the upper-bound inequality $q'\bm{\beta}_0 \le \sigma(q)$ defining $\mathcal{B}$ in \eqref{eq:idset} is weakly tighter than the corresponding inequality $q'\bm{\beta}_0 \le \bar{\sigma}(q)$ defining $\bar{\mathcal{B}}$, we have $\mathcal{B} \subseteq \bar{\mathcal{B}}$.
\end{proof}

\subsection{Proof of Proposition~\ref{thm:ptw}}
\label{sec:proofs}

The first Lemma is Theorem 1 in \citet{LuedtkeLaan}. 
\begin{lemma}
\label{lem:LvdL}
Suppose Assumption \ref{ass:boundary} holds for Example \ref{ex:hsc}. Then, the parameter  $\E[\min (s(1, X), s(0, X))]$ is pathwise differentiable with efficient influence function 
\begin{align}
\label{eq:efficient}
\psi^{*}(W) &= \bigg(s(1, X) + \dfrac{D}{\pi(X)} (S - s(1, X)) \bigg)  \bm{1}\{ \tau(X)<0 \} \\
&+ \bigg(s(0, X) +  \dfrac{1-D}{1-\pi(X)} (S - s(0, X)) \bigg) \bm{1}\{ \tau(X)>0 \}. \nonumber
\end{align}
\end{lemma}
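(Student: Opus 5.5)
The plan is a direct pathwise-differentiability computation along an arbitrary regular submodel. Write $\theta(P)=\E_P[\min(s_P(1,X),s_P(0,X))]$ and let $\{P_\varepsilon\}$ be a regular parametric submodel with score $S(W)$. Factor the score as $S(W)=S_X(X)+S_{D\mid X}(D\mid X)+S_{S\mid D,X}(S\mid D,X)$, where each conditional score has conditional mean zero. Differentiating $\theta(P_\varepsilon)$ at $\varepsilon=0$ produces two pieces.

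The first piece comes from perturbing the marginal of $X$. It equals $\E[\min(s(1,X),s(0,X))S_X(X)]$, which is the usual covariance term and is centred by subtracting $\theta$.

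The second piece comes from perturbing $s_\varepsilon(d,x)=\E_\varepsilon[S\mid D=d,X=x]$. By the standard computation,
\[
\partial_\varepsilon s_\varepsilon(d,x)=\E\big[(S-s(d,X))S_{S\mid D,X}\mid D=d,X=x\big].
\]
Using $\Pr(D=d\mid X)$ weights, this can be written as $\E[\frac{\bm 1\{D=d\}}{\Pr(D=d\mid X)}(S-s(d,X))S(W)\mid X]$.

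The key step is to pass the derivative through the minimum. Set $\tau(x)=s(1,x)-s(0,x)$. At every $x$ with $\tau(x)\neq0$, the map $(a,b)\mapsto\min(a,b)$ is differentiable at $(s(1,x),s(0,x))$, with gradient $(\bm 1\{\tau<0\},\bm 1\{\tau>0\})$. Assumption~\ref{ass:boundary}, which reduces to $\Pr(\tau(X)=0)=0$ in Example~\ref{ex:hsc}, makes this hold $P_X$-a.s. To interchange derivative and expectation, I would use dominated convergence on the difference quotients. This works because $|\min(a,b)-\min(a',b')|\le\max(|a-a'|,|b-b'|)$, and the difference quotients of $s_\varepsilon$ are dominated by an $L^2$ envelope under the regularity of the submodel (bounded $S$, quadratic-mean differentiability).

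This interchange is the main obstacle. The difficulty is that the indicator of the active branch must be stable as $\varepsilon\to0$. That stability is exactly where the null-tie condition enters, since ties on a positive-mass set would give one-sided directional derivatives and break linearity in $S$.

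Collecting terms gives
\[
\left.\frac{d}{d\varepsilon}\right|_{0}\theta(P_\varepsilon)=\E[\psi^{*}(W)S(W)]-\theta\,\E[S]=\E[(\psi^{*}(W)-\theta)S(W)],
\]
with $\psi^*$ as in \eqref{eq:efficient}. To verify this, substitute the IPW forms, note $\pi(X)=\Pr(D=1\mid X)$, and check that the cross terms with $S_X$ and $S_{D\mid X}$ vanish, because the residual terms have conditional mean zero given $(D,X)$. Finally, $\psi^*-\theta$ is mean-zero with finite variance provided $\pi$ is bounded away from $0$ and $1$. The model is nonparametric, so the tangent space is saturated and this influence function is the efficient one.
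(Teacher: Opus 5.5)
Your argument is correct, and it takes a different route from the paper. The paper gives no proof of this lemma of its own; it invokes Theorem~1 of \citet{LuedtkeLaan}. Its general analogue, the proof of Proposition~\ref{thm:ptw}, works differently. It splits the increment into a fixed-rule term and an argmin-switching term. It sets the switching term to zero on a ``safe set'' $\{\Delta(x)\ge\delta\}$, bounds it by a constant times $P_{X,\varepsilon}(\Delta(X)<\delta)$ elsewhere, and then couples $\delta$ to $\varepsilon$.

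You instead differentiate $\min(s_\varepsilon(1,x),s_\varepsilon(0,x))$ pointwise off the tie set and swap limit and expectation by dominated convergence, using that $\min$ is $1$-Lipschitz. This gains something real. The Lipschitz bound makes the switching contribution $O(|\varepsilon|)$ pointwise, so $\Pr(\tau(X)=0)=0$ suffices with no rate.

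The paper's version bounds the switching integrand only by a constant. It therefore needs both $|\varepsilon|<\varepsilon_0(\delta(\varepsilon))$ and $P_X(\Delta(X)<\delta(\varepsilon))=o(|\varepsilon|)$. Since $\varepsilon_0(\delta)$ is at most of order $\delta$, these two requirements conflict whenever $\Delta(X)$ has positive density at zero. Your observation is exactly what closes that decomposition; equivalently, $|(\nu_\varepsilon-\nu_0)'\bm{b}_\varepsilon(x)|\le|(\nu_\varepsilon-\nu_0)'(\bm{b}_\varepsilon(x)-\bm{b}_0(x))|$ on the switching set. The paper's framing covers a general finite vertex set, but yours extends directly: $\bm{b}\mapsto\min_{\nu\in\mathcal{T}}\nu'\bm{b}$ is Lipschitz with constant $\max_{\nu\in\mathcal{T}}\|\nu\|$ and differentiable off ties.

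Three tightenings:
\begin{itemize}
\item Work with paths $dP_\varepsilon=(1+\varepsilon h)\,dP$, where $h$ is bounded and mean zero. There $|s_\varepsilon(d,x)-s(d,x)|\le 2\|h\|_\infty|\varepsilon|$ for small $\varepsilon$, which is the envelope you need. General QMD paths do not give pointwise differentiability of $s_\varepsilon(d,x)$. Also, using $S$ for both the score and the outcome invites confusion.
\item State overlap explicitly ($\pi$ bounded away from $0$ and $1$); the lemma leaves it implicit.
\item As you note, the influence function is $\psi^*-\theta$. If $\pi$ were known by design, saturation would fail but efficiency would not: $\E[\psi^*\mid D,X]$ depends on $X$ only, so $\psi^*-\theta$ still lies in the restricted tangent space.
\end{itemize}
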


\begin{proof}[Proof of Proposition \ref{thm:ptw}]
Fix $q \in \mathcal{S}^{d-1}$. Write the target as an {envelope} functional:
\[
\sigma(q)
= \E\!\left[\min_{\nu \in \mathcal{T}} \nu'\bm{b}_0(X)\right]
= \E\!\left[\nu_0(X)'\bm{b}_0(X)\right],
\]
where $\nu_0(x) \in \arg\min_{\nu\in\mathcal{T}} \nu'\bm{b}_0(x)$ and uniqueness holds a.s. by Assumption~\ref{ass:boundary}.

Let $\mathcal{T}=\{\nu^{(1)}, \ldots, \nu^{(m)}\}$ be the (finite) vertex set. Define the selection rule
\[
d_{\bm{b}}(x)\;=\;\sum_{j=1}^m \bm{1}\!\left\{\nu^{(j)}\in\arg\min_{\nu\in\mathcal{T}}\nu'\bm{b}(x)\right\}\,\nu^{(j)}.
\]
Then
\[
\sigma(q;P)\;=\;\E_P\!\left[d_{\bm{b}_P}(X)'\,\bm{b}_P(X)\right],
\qquad
\bm{b}_P(x)=\E_P[\mathbf{B}\mid X=x].
\]
Consider a regular submodel $\{P_\varepsilon:\varepsilon\}$ through $P_0$ with score $S$. Telescope the increment as
\[
\sigma(P_\varepsilon)-\sigma(P_0)
=
\E_{P_\varepsilon}\!\big[\{d_{\bm{b}_\varepsilon}(X)-d_{\bm{b}_0}(X)\}'\,\bm{b}_\varepsilon(X)\big]
\;+\;
\E_{P_\varepsilon}\!\big[d_{\bm{b}_0}(X)'\,\bm{b}_\varepsilon(X)-d_{\bm{b}_0}(X)'\,\bm{b}_0(X)\big].
\tag{$\star$}
\]
The second term treats the rule $d_{\bm{b}_0}$ as fixed and is a standard smooth (Gateaux) variation in $\bm{b}_\varepsilon$; the first term accounts for changes in the {indicator/argmin} (the “kink”).

We show that the first term in $(\star)$ vanishes upon differentiation, i.e.\
\begin{equation}
\label{eq:kinkvanish}
\lim_{\varepsilon\to 0}\frac{1}{\varepsilon}\,
\E_{P_\varepsilon}\!\big[\{d_{\bm{b}_\varepsilon}(X)-d_{\bm{b}_0}(X)\}'\,\bm{b}_\varepsilon(X)\big]
\;=\;0.
\end{equation}
 
\noindent\emph{Margin and tie set.}
Since $\mathcal{T}$ is finite, define the \emph{margin} at $x$ by
\[
\Delta(x)
\;=\;
\min_{\nu \in \mathcal{T}\setminus\{\nu_0(x)\}}
\nu'\bm{b}_0(x)
\;-\;
\nu_0(x)'\bm{b}_0(x)
\;\geq\; 0,
\]
and the tie set $\mathcal{B}_0 = \{x : \Delta(x)=0\}$.
By Assumption~\ref{ass:boundary}, $P_X(\mathcal{B}_0) = 0$, so $\Delta(X)>0$ a.s.
 
\smallskip
\noindent\emph{Partition.}
For $\delta > 0$, define the safe set $\mathcal{X}_\delta = \{x : \Delta(x) \geq \delta\}$ and the near-tie set $\mathcal{X}_\delta^c = \{x : \Delta(x) < \delta\}$.
Since $\Delta(X)>0$ a.s., we have $P_X(\mathcal{X}_\delta^c) \to 0$ as $\delta \downarrow 0$.
 
\smallskip
\noindent\emph{Safe region $\mathcal{X}_\delta$ (the minimizer does not switch).}
On $\mathcal{X}_\delta$, every competing vertex $\nu \neq \nu_0(x)$ satisfies
$\nu'\bm{b}_0(x) - \nu_0(x)'\bm{b}_0(x) \geq \delta$.
Along the regular submodel, $\bm{b}_\varepsilon(x) = \bm{b}_0(x) + \varepsilon\,\dot{\bm{b}}(x) + o(\varepsilon)$
for each $x$, where $\dot{\bm{b}}(x) = \frac{d}{d\varepsilon}\E_{P_\varepsilon}[\mathbf{B}\mid X=x]\big|_{\varepsilon=0}$.
Hence, for any competing vertex $\nu \neq \nu_0(x)$,
\[
\nu'\bm{b}_\varepsilon(x) - \nu_0(x)'\bm{b}_\varepsilon(x)
\;=\;
\underbrace{\nu'\bm{b}_0(x) - \nu_0(x)'\bm{b}_0(x)}_{\geq\,\delta}
\;+\;\varepsilon\,(\nu - \nu_0(x))'\dot{\bm{b}}(x)
\;+\; o(\varepsilon).
\]
Since $\mathcal{T}$ is finite, the second and third terms are uniformly $O(\varepsilon)$ over $\nu \in \mathcal{T}$,
so for all $|\varepsilon|$ small enough (depending on $\delta$), the gap remains positive.
In particular, for each $\delta>0$ there exists $\varepsilon_0(\delta)>0$ such that
$d_{\bm{b}_\varepsilon}(x) = d_{\bm{b}_0}(x) = \nu_0(x)$ for all $x \in \mathcal{X}_\delta$ and $|\varepsilon| < \varepsilon_0(\delta)$.
The integrand is therefore identically zero on $\mathcal{X}_\delta$.
 
\smallskip
\noindent\emph{Near-tie region $\mathcal{X}_\delta^c$ (uniformly bounded integrand on a vanishing set).}
Since every $\nu^{(j)} \in \mathcal{T}$ has finite norm and
$\bm{b}_\varepsilon(x) \to \bm{b}_0(x)$, there exists $C<\infty$ (independent of $\varepsilon, \delta$) such that
$|\{d_{\bm{b}_\varepsilon}(x) - d_{\bm{b}_0}(x)\}'\bm{b}_\varepsilon(x)| \leq C$ for all $x$ and $|\varepsilon|$ sufficiently small.
Thus, for $|\varepsilon| < \varepsilon_0(\delta)$,
\begin{equation}
\label{eq:kinkbound}
\bigg|
\E_{P_\varepsilon}\!\big[\{d_{\bm{b}_\varepsilon}(X) - d_{\bm{b}_0}(X)\}'\bm{b}_\varepsilon(X)\big]
\bigg|
\;\leq\;
C\, P_{X, \varepsilon}\!\left(\mathcal{X}_\delta^c\right).
\end{equation}
 
\smallskip
\noindent\emph{Bounding $P_{X, \varepsilon}(\mathcal{X}_\delta^c)$.}
Along a regular submodel with score $S$, the marginal density of $X$ under $P_\varepsilon$ satisfies
$dP_{X, \varepsilon}/dP_X = 1 + \varepsilon\,\E[S\mid X] + o(\varepsilon)$ in $L^1(P_X)$, so
\[
P_{X, \varepsilon}(\mathcal{X}_\delta^c)
\;=\;
P_X(\mathcal{X}_\delta^c)
\;+\;\varepsilon \int_{\mathcal{X}_\delta^c} \E[S\mid X\!=\!x]\,dP_X(x)
\;+\; o(\varepsilon).
\]
 
\smallskip
\noindent\emph{Coupled limit.}
Choose a sequence $\delta = \delta(\varepsilon) \downarrow 0$ slowly enough that
(i) $|\varepsilon| < \varepsilon_0(\delta(\varepsilon))$ and
(ii) $P_X(\mathcal{X}_{\delta(\varepsilon)}^c) = o(|\varepsilon|)$.
Condition (ii) is achievable because $P_X(\Delta(X) < \delta) \downarrow P_X(\mathcal{B}_0) = 0$
as $\delta \downarrow 0$, so the convergence to zero can be made faster than any prescribed
rate $|\varepsilon| \to 0$ by choosing $\delta(\varepsilon)$ to decrease sufficiently slowly.
Then from \eqref{eq:kinkbound}:
\[
\frac{1}{|\varepsilon|}
\bigg|
\E_{P_\varepsilon}\!\big[\{d_{\bm{b}_\varepsilon}(X) - d_{\bm{b}_0}(X)\}'\bm{b}_\varepsilon(X)\big]
\bigg|
\;\leq\;
C\,\frac{P_X(\mathcal{X}_{\delta(\varepsilon)}^c)}{|\varepsilon|}
\;+\;
C\bigg|\int_{\mathcal{X}_{\delta(\varepsilon)}^c} \E[S\mid X\!=\!x]\,dP_X(x)\bigg|
\;+\; o(1).
\]
The first term vanishes by (ii).
The second vanishes by dominated convergence: $|\E[S\mid X]|$ is integrable
(indeed square-integrable), and the domain $\mathcal{X}_{\delta(\varepsilon)}^c$ shrinks to the $P_X$-null
set $\mathcal{B}_0$.
This establishes \eqref{eq:kinkvanish} and the pathwise derivative is given by the second term in $(\star)$:
\begin{equation}
\label{eq:pathderiv}
\dot\sigma_{P_0}(S)
=
\left.\frac{d}{d\varepsilon}\right|_{\varepsilon=0}
\E_{P_\varepsilon}\!\big[\nu_0(X)'\bm{b}_\varepsilon(X)\big]
=
\E\!\Big[\nu_0(X)'\,\dot{\bm{b}}(X)\Big],
\end{equation}
where $\dot{\bm{b}}(x)=\frac{d}{d\varepsilon}\E_{P_\varepsilon}[\mathbf{B}\mid X=x]\big|_{\varepsilon=0}$ is the (regular) pathwise derivative of the conditional mean along the submodel. Standard calculations yield the influence function for $\sigma(q)$:
\[
\phi(W)
=
\nu_0(X)'\,\big\{\mathbf{B}-\bm{b}_0(X)\big\}
\;+\;
\big\{\nu_0(X)'\bm{b}_0(X)-\sigma(q)\big\}
\;=\;
\nu_0(X)'\mathbf{B}-\sigma(q),
\]
since $\E[\nu_0(X)'\bm{b}_0(X)]=\sigma(q)$.

Assumption~\ref{ass:secmom} ensures $\E[\|\mathbf{B}\|^2\mid X]\le \bar B$, so $\Var(\nu_0(X)'\mathbf{B})<\infty$ and $\phi_q(W) = \nu_0(X)'\mathbf{B} - \sigma(q)$ is a mean-zero, finite-variance function. Combining $(\star)$, ~\ref{eq:kinkvanish} and ~\ref{eq:pathderiv}, for every regular submodel $\{P_\varepsilon\}$ with score $S$,
\[
\left.\frac{d}{d\varepsilon}\right|_{\varepsilon=0}\sigma(P_\varepsilon) \;=\; \E_{P_0}[\phi_q(W)\,S(W)],
\]
so $\sigma(q)$ is pathwise differentiable with influence function $\phi_q(W)$ by Definition~\eqref{eq:pathwise_def}. Any regular, asymptotically linear estimator with this influence function satisfies~\eqref{eq:alinear}; the existence of such an estimator is a separate claim, established for the cross-fitted plug-in $\widehat\sigma(q)$ of Definition~\ref{def:dual} in Proposition~\ref{cor:lp} through the oracle expansion of Lemma~\ref{lem:app:closedform}.

This conclusion mirrors Theorem~1 of \citet{LuedtkeLaan}: under a zero-mass tie (margin) condition, the non-smooth selection step contributes no first-order term and the influence function equals the ``oracle score'' with the true rule plugged in.
\end{proof}

\subsection{Proof of Proposition~\ref{cor:lp}}

\begin{proof}[Proof of Proposition \ref{cor:lp}]
The proof proceeds in two parts. First, we embed the conditional linear program into the general envelope-regression framework introduced in~\S A.1. Second, we verify Assumptions~\ref{ass:app:smallbias}--\ref{ass:app:ma} of that supplement under the conditions of Proposition~\ref{cor:lp} and invoke Lemmas~\ref{lem:app:closedform}--\ref{lem:app:boot} therein to obtain the three conclusions.

\paragraph{Embedding into the envelope-regression framework.}
The abstract target is
\[
\psi_0 = \E_X\!\Big[\min_{t\in T}\,\phi(t, \nu_0(X))\Big],
\]
where $T$ is a finite index set, $\phi(t, \cdot)$ is a known scalar function of a nuisance vector $\nu_0(x)$, and $\phi(t, \nu_0(x)) = \E[\rho(W, t, \xi_0)\mid X=x]$ for an observed signal $\rho$. Our parameter $\sigma(q) = \E[\min_{\nu \in \mathcal{T}} \nu'\bm{b}_0(X)]$ is a special case with (i) index set $T \equiv \mathcal{T}$, the finite set of dual vertices; (ii) nuisance vector $\nu_0(x) \equiv \bm{b}_0(x) = \E[\mathbf{B} \mid X=x]$; (iii) projection function $\phi(\nu, \bm{b}_0(x)) = \nu'\bm{b}_0(x)$ for each $\nu \in \mathcal{T}$; (iv) unbiased signal $\rho(W, \nu, \xi_0) = \nu'\mathbf{B}$, so that $\E[\rho(W, \nu, \xi_0)\mid X=x] = \nu'\bm{b}_0(x) = \phi(\nu, \bm{b}_0(x))$.

\paragraph{Verification of Assumption~\ref{ass:app:smallbias} (Small Bias Condition).}
Assumption~\ref{ass:app:smallbias} requires that the moment functions $\rho(W, t, \xi)$ are robust to first-order biases in the nuisance parameter $\xi$, uniformly over the index set $T$. In the appendix notation, the conditions are:
\begin{align*}
B_N &= \sup_{\xi \in \Xi_N}\,\sup_{t \in T}\,\sup_{x \in \mathcal{X}}\;\sqrt{N}\,\big|\E[\rho(W, t, \xi) - \rho(W, t, \xi_0) \mid X=x]\big| = o(1),\\
\Lambda_N &= \sup_{\xi \in \Xi_N}\,\sup_{t \in T}\,\sup_{x \in \mathcal{X}}\;\E\big[(\rho(W, t, \xi) - \rho(W, t, \xi_0))^2 \mid X=x\big] = o(1).
\end{align*}
In our setting, the signal $\rho(W, \nu, \xi_0) = \nu'\mathbf{B}$ does not depend on any nuisance parameter $\xi$: it depends only on the observed data vector $\mathbf{B}$ and the fixed vertex $\nu \in \mathcal{T}$. Therefore, $\rho(W, \nu, \xi) = \rho(W, \nu, \xi_0)$ for all $\xi$, and both conditions hold trivially with $B_N = \Lambda_N = 0$.

\paragraph{Verification of Assumption~\ref{ass:app:rate} (Rate Condition).}
Assumption~\ref{ass:app:rate} requires that the first-stage nuisance estimate $\widehat{\bm{b}}_k(\cdot)$ converges uniformly to $\bm{b}_0(\cdot)$ at rate $o(N^{-1/4})$: there exists a sequence $\phi_N = o(1)$ and shrinking neighborhoods $B_N$ such that, with probability at least $1-\phi_N$,
\[
\sup_{\bm{b} \in B_N}\;\sup_{x \in \mathcal{X}}\;\|\bm{b}(x) - \bm{b}_0(x)\| \;=\; o(N^{-1/4}).
\]
This is precisely Assumption~\ref{ass:firststage} of the present paper. Hence Assumption~\ref{ass:app:rate} is satisfied.

\noindent
\emph{Note.} Assumption~\ref{ass:app:rateunif} (the uniform-in-$q$ rate) is \emph{not} verified at this step because the present proof targets the pointwise-in-$q$ Proposition~\ref{cor:lp}; in the LP application the first stage $\widehat{\bm{b}}(x)$ does not depend on $q$, so when the uniform-in-$q$ Proposition~\ref{prop:app:uniformq} of~\S A.7 is invoked, the same Assumption~\ref{ass:firststage} of the main text trivially supplies Assumption~\ref{ass:app:rateunif} as well.

\paragraph{Verification of Assumption~\ref{ass:app:reg} (Regularity Conditions).}
Assumption~\ref{ass:app:reg} requires two conditions:
\begin{enumerate}
\item[(i)] \emph{Bounded moments:} $\sup_{\xi \in \Xi_N}\,\sup_{t \in T}\,\sup_{x\in\mathcal{X}}\,|\rho(W, t, \xi)| \leq B_\rho$ almost surely.
 
In our setting, $|\rho(W, \nu, \xi_0)| = |\nu'\mathbf{B}| \leq \|\nu\|\cdot\|\mathbf{B}\|$. Since $\mathcal{T}$ is a finite vertex set, $\max_{\nu\in\mathcal{T}}\|\nu\| \leq C_\nu < \infty$. By Assumption~\ref{ass:secmom}, $\E[\|\mathbf{B}\|^2\mid X=x] \leq \bar{B}$ a.s.
The condition (i) is therefore satisfied with $B_\rho = C_\nu \cdot \bar{B}^{1/2}$ when $\mathbf{B}$ is bounded a.s.\ (as in the Jobs First application, where $\mathbf{B}$ consists of indicator-based IPW signals). More generally, Assumption~\ref{ass:app:reg}(i) can be replaced by the weaker condition $\sup_{t\in T}\sup_{x\in\mathcal{X}} \E[\rho^2(W, t, \xi_0)\mid X=x] \leq B_\rho$, which follows directly from Assumption~\ref{ass:secmom}.
 
\item[(ii)] \emph{Bounded derivative:} $\sup_{\bm{b}\in B_N}\,\sup_{x\in\mathcal{X}}\,\sup_{\nu\in\mathcal{T}}\,\|\partial\phi(\nu, \bm{b}(x))/\partial \bm{b}\| \leq B_\phi$.
 
Since $\phi(\nu, \bm{b}) = \nu'\bm{b}$ is linear in $\bm{b}$, the gradient is $\partial\phi/\partial\bm{b} = \nu$. Hence $\|\partial\phi/\partial\bm{b}\| = \|\nu\| \leq C_\nu$ for all $\nu \in \mathcal{T}$, and the condition holds with $B_\phi = C_\nu = \max_{\nu\in\mathcal{T}}\|\nu\|$.
\end{enumerate}
 
\paragraph{Verification of Assumption~\ref{ass:app:ma} (Margin Condition).}
Assumption~\ref{ass:app:ma} requires that there exist finite positive constants $\bar{B}, \delta > 0$ such that for all $t \in (0, \delta)$,
\[
\sup_{(\nu_1, \nu_2)\in\mathcal{T},\;\nu_1\neq\nu_2}
\Pr\!\big(0 \leq \phi(\nu_1, \bm{b}_0(X)) - \phi(\nu_2, \bm{b}_0(X)) \leq t\big)
\;\leq\; \bar{B}\, t.
\]
Since $\phi(\nu, \bm{b}_0(x)) = \nu'\bm{b}_0(x)$, this becomes
\[
\sup_{\nu_1\neq\nu_2\in\mathcal{T}}
\Pr\!\big(0 < (\nu_1 - \nu_2)'\bm{b}_0(X) \leq t\big)
\;\leq\; \bar{B}\,t.
\]
For any $\nu_1\neq\nu_2 \in \mathcal{T}$, define $\delta_{\nu} = (\nu_1-\nu_2)/\|\nu_1-\nu_2\|$ (a unit vector) and $\underline{c} = \min_{\nu_1\neq\nu_2\in\mathcal{T}}\|\nu_1-\nu_2\| > 0$ (strictly positive since $\mathcal{T}$ is finite with distinct elements). Then
\begin{align*}
\Pr\!\big(0 < (\nu_1-\nu_2)'\bm{b}_0(X) \leq t\big)
&= \Pr\!\Big(0 < \delta_\nu'\bm{b}_0(X) \leq \frac{t}{\|\nu_1-\nu_2\|}\Big)\\
&\leq \Pr\!\Big(0 < \delta_\nu'\bm{b}_0(X) \leq \frac{t}{\underline{c}}\Big)\\
&\leq \sup_{v\in\mathbf{R}^k,\,\|v\|=1}\,\Pr\!\Big(0 < v'\bm{b}_0(X) \leq \frac{t}{\underline{c}}\Big)\\
&\leq \frac{B_f}{\underline{c}}\, t,
\end{align*}
where the last inequality uses Assumption~\ref{ass:mamain} with argument $t/\underline{c}$. Therefore, Assumption~\ref{ass:app:ma} is satisfied with $\bar{B} = B_f/\underline{c}$.

Having verified Assumptions~\ref{ass:app:smallbias}--\ref{ass:app:ma}, we invoke the self-contained Lemmas~\ref{lem:app:closedform}--\ref{lem:app:boot}.

\end{proof}

\subsection{Uniform-in-$q$ asymptotic theory}
\label{sec:uniformq}

This subsection develops a uniform-in-$q$ analog of Proposition~\ref{cor:lp}. The statement specializes the envelope-regression framework of~\S A.1 to $T=\overline{\mathcal{T}}$, $\nu_0(x)=\bm{b}_0(x)$, and $\psi_0=\sigma(q)$, and strengthens Assumption~\ref{ass:app:rate} to its uniform counterpart, Assumption~\ref{ass:app:rateunif}. All other assumptions (\ref{ass:app:smallbias}, \ref{ass:app:reg}, \ref{ass:app:ma}) are used in their existing form: as noted in~\S A.2. The result is stated for the LP setting so that it can be invoked directly as a uniform-in-$q$ strengthening of Proposition~\ref{cor:lp} of the main text.

\begin{proposition}[Uniform-in-$q$ analog of Proposition~\ref{cor:lp}]
\label{prop:app:uniformq}
Suppose Assumptions~\ref{ass:app:smallbias}, \ref{ass:app:rateunif}, \ref{ass:app:reg}, and~\ref{ass:app:ma} hold, and that Assumptions~\ref{ass:boundary} and~\ref{ass:mamain} hold uniformly over $q\in\mathcal{S}^{d-1}$. Then the cross-fitted plug-in estimator $\widehat\sigma(q)$ of Definition~\ref{def:dual} satisfies:

\begin{enumerate}
\item \emph{Uniform consistency:}
\begin{align}
\label{eq:app:urate}
\sup_{q\in\mathcal{S}^{d-1}}\big|\widehat\sigma(q)-\sigma(q)\big| \;=\; O_P(N^{-1/2}).
\end{align}

\item \emph{Uniform Gaussian approximation:} the process
\[
S_N(q)\;:=\;\sqrt{N}\big(\widehat\sigma(q)-\sigma(q)\big),\qquad q\in\mathcal{S}^{d-1},
\]
converges weakly in $\ell^\infty(\mathcal{S}^{d-1})$ to a tight, centered Gaussian process $\mathbb{G}$ with covariance kernel
\[
\Cov\!\big(\mathbb{G}(q), \mathbb{G}(q')\big) \;=\; \E\!\big[(\nu_0(X, q)'\mathbf{B}-\sigma(q))\,(\nu_0(X, q')'\mathbf{B}-\sigma(q'))\big].
\]

\item \emph{Uniform bootstrap validity:} the multiplier-bootstrap process
\[
\widetilde S_N(q)\;:=\;\sqrt{N}\big(\widetilde\sigma(q)-\widehat\sigma(q)\big),\qquad q\in\mathcal{S}^{d-1},
\]
converges conditionally (in probability) weakly in $\ell^\infty(\mathcal{S}^{d-1})$ to the same Gaussian process $\mathbb{G}$.
\end{enumerate}
\end{proposition}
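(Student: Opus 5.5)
The plan is to upgrade the pointwise argument of Lemmas~\ref{lem:app:closedform}--\ref{lem:app:boot} to a statement in $\ell^\infty(\mathcal{S}^{d-1})$. The device that makes this possible is that the whole process depends on $q$ only through which vertex is selected, and the vertex set $\overline{\mathcal{T}}$ is finite.

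For each $q$ and $x$, let $\nu_0(x,q)\in\mathcal{T}(q)\subseteq\overline{\mathcal{T}}$ be the binding vertex, and let $\widehat\nu(X_i,q)$ be its cross-fitted plug-in. I decompose
\[
S_N(q)=\mathbb{G}_N(q)+R_N(q),\qquad \mathbb{G}_N(q)=N^{-1/2}\sum_{i=1}^N\big(\nu_0(X_i,q)'\mathbf{B}_i-\sigma(q)\big),
\]
where
\[
R_N(q)=N^{-1/2}\sum_{i}\big(\widehat\nu(X_i,q)-\nu_0(X_i,q)\big)'\mathbf{B}_i .
\]

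\textbf{Step 1: uniform control of $R_N(q)$.} Work on the event of probability at least $1-\varepsilon_N$ from Assumption~\ref{ass:app:rateunif}, which holds for all $q$ at once. Within fold $k$, and conditional on the hold-out data, the summands are i.i.d. I split $R_N(q)$ into its conditional mean and a centered part.

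For the conditional mean, I reuse the margin argument of Lemma~\ref{lem:app:closedform}. The bias is bounded by $4B_\phi^2\bar B(\nu_N^\infty)^2$. The constants $B_\phi=\max_{\nu\in\overline{\mathcal{T}}}\|\nu\|$ and $\bar B$ are $q$-free because the margin condition is stated over pairs in $\overline{\mathcal{T}}$, and $\nu_N^\infty$ is uniform by Assumption~\ref{ass:app:rateunif}. Hence
\[
\sup_q\sqrt N\,|\text{bias}|=O\big(\sqrt N(\nu_N^\infty)^2\big)=o(1).
\]

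For the centered part, I exploit that $q\mapsto(\widehat\nu(\cdot,q)-\nu_0(\cdot,q))$ ranges over differences of selectors into a finite set. Write the centered part as
\[
\sum_{\nu,\nu'\in\overline{\mathcal{T}}}(\nu-\nu')'\,N^{-1/2}\sum_i\big[\mathbf{1}\{\widehat\nu_i=\nu,\nu_{0i}=\nu'\}\mathbf{B}_i-\E(\cdot)\big].
\]
The indicators are unions of cells of a hyperplane arrangement in $(\widehat{\bm b},\bm b_0)$-space indexed by $q$. Each such cell is defined by finitely many linear inequalities among the $\nu'\bm b$, so these classes are VC with dimension depending only on $|\overline{\mathcal{T}}|$ and $k$. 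The envelope is $\|\mathbf{B}\|\,\mathbf{1}\{\widehat\nu\neq\nu_0\text{ for some }q\}$. Its second moment is $O(\nu_N^\infty)$ by the uniform margin bound, summed over the finitely many pairs in $\overline{\mathcal{T}}$. A maximal inequality then gives $\sup_q|\cdot|=O_P(\sqrt{\nu_N^\infty\log(1/\nu_N^\infty)})=o_P(1)$.

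\textbf{Step 2: weak convergence of $\mathbb{G}_N$.} The class $\{x,\mathbf{B}\mapsto\nu_0(x,q)'\mathbf{B}:q\in\mathcal{S}^{d-1}\}$ equals $\sum_{\nu\in\overline{\mathcal{T}}}\nu'\mathbf{B}\,\mathbf{1}\{\nu\in\arg\min_{\tilde\nu\in\mathcal{T}(q)}\tilde\nu'\bm b_0(x)\}$. For each candidate vertex subset, this is again a VC-type arrangement class. The envelope $C\|\mathbf{B}\|$ is square integrable by Assumption~\ref{ass:secmom}. The class is therefore Donsker, and $\mathbb{G}_N\Rightarrow\mathbb{G}$ in $\ell^\infty(\mathcal{S}^{d-1})$, with the stated covariance.

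I will also need $q\mapsto\sigma(q)$ and the selectors to be $L^2$-continuous, to get a tight limit with continuous paths under the intrinsic semimetric. Ties in $q$ occur only where $\mathcal{T}(q)$ changes, and uniform Assumptions~\ref{ass:boundary} and~\ref{ass:mamain} make these transitions null. Uniform consistency, part (1), then follows immediately from parts (2) and Step 1.

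\textbf{Step 3: bootstrap.} Write
\[
\widetilde S_N(q)=N^{-1/2}\sum_i(e_i/\bar e-1)\,\widehat\nu(X_i,q)'\mathbf{B}_i .
\]
Replacing $\widehat\nu$ by $\nu_0$ costs at most $\sup_q N^{-1/2}\sum_i|e_i-1|\,\|\mathbf{B}_i\|\mathbf{1}\{\widehat\nu_i\neq\nu_{0i}\}$. Its conditional second moment is $O_P(\nu_N^\infty)$ by the same envelope bound as in Step 1, so this term is $o_P(1)$. The multiplier CLT for Donsker classes (conditional weak convergence in probability, e.g.\ van der Vaart--Wellner Thm 2.9.6) then yields the limit $\mathbb{G}$.

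\textbf{Main obstacle.} I expect the hardest part to be Step 1's uniform empirical-process bound. The selector class depends on the estimated $\widehat{\bm b}$, and it must be handled conditionally on the hold-out fold while keeping the VC dimension and the margin-based envelope uniform in $q$. I would first try reducing to the finite collection of vertex pairs in $\overline{\mathcal{T}}$ and treating $q$ only through which subset $\mathcal{T}(q)$ is active. This reduction would make the supremum over $q$ a maximum over finitely many subsets, avoiding chaining altogether.
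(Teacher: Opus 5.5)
Your architecture is the paper's. You use the same split $S_N=\mathbb{G}_N+R_N$, a $q$-free margin bound on the bias, a Donsker argument for the oracle class based on $\nu_0(\cdot,q)\in\overline{\mathcal T}$, and a multiplier CLT plus a plug-in correction for the bootstrap. Your Step~1 splits $R_N$ into a conditional bias and a centered, conditionally i.i.d.\ empirical process whose envelope has second moment $O(\nu_N^\infty)$. This is a more explicit version of the paper's envelope argument via $\bar R_N$, and it is fine given finiteness of $\overline{\mathcal T}$.

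\textbf{The gap is in Step~3.} Bounding the discrepancy by $\sup_q N^{-1/2}\sum_i|e_i-1|\,\|\mathbf B_i\|\,\mathbf 1\{\widehat\nu_i\neq\nu_{0i}\}$ discards the conditional centering of the multipliers, and the resulting quantity is not small.
\begin{itemize}
\item Write $Z_i=\|\mathbf B_i\|\,\mathbf 1\{\widehat\nu(X_i,q)\neq\nu_0(X_i,q)\text{ for some }q\}$ and $\bar Z=N^{-1}\sum_i Z_i$.
\item The conditional mean of your bound is $\E|e-1|\cdot\sqrt N\,\bar Z$, so its conditional second moment is at least $(\E|e-1|)^2 N\bar Z^2$.
\item Your claimed $O_P(\nu_N^\infty)$ covers only the diagonal terms $N^{-1}\sum_i\E(e_i-1)^2Z_i^2$ and drops the cross terms.
\item Generically $\bar Z$ is of the order of the misclassification probability, which the margin condition bounds only by $O(\nu_N^\infty)$. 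Your bound is therefore of order $\sqrt N\,\nu_N^\infty$.
\item Assumption~\ref{ass:app:rateunif} gives only $\nu_N^\infty=o(N^{-1/4})$, so $\sqrt N\,\nu_N^\infty$ need not vanish. For the logistic-Lasso rate of Remark~\ref{rm:firststage_verif} it actually diverges.
\end{itemize}
In the bias calculation you obtained a second factor of $\nu_N^\infty$ from $\tau_0\le 2B_\phi\nu_N^\infty$ on the misclassification event. Here $\|\mathbf B_i\|$ supplies no such factor.

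The repair, which is what the paper does, is to keep the signed sum $\Delta_N(q)=N^{-1/2}\sum_i(e_i/\bar e-1)(\widehat\nu_i-\nu_{0i})'\mathbf B_i$. Conditionally on the data it has mean zero (by exchangeability of the $e_i/\bar e$) and variance at most $N^{-1}\sum_i\{(\widehat\nu_i-\nu_{0i})'\mathbf B_i\}^2=O_P(\nu_N^\infty)$. The supremum over $q$ is then handled exactly as in your Step~1: either by a conditional maximal inequality over the same class, or by a union bound over the finitely many configurations $\mathcal T(q)\subseteq\overline{\mathcal T}$. This works because $\widehat\nu(\cdot,q)$ and $\nu_0(\cdot,q)$ depend on $q$ only through $\mathcal T(q)$.

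Two smaller remarks.
\begin{itemize}
\item In Step~2 you do not need Euclidean $L^2$-continuity of $q\mapsto\nu_0(\cdot,q)$. A Donsker class already yields a tight limit with paths continuous in the intrinsic semimetric. The continuity you propose to establish generally fails anyway, since the selector jumps wherever $\mathcal T(q)$ changes.
\item As in the paper, the whole reduction rests on $\overline{\mathcal T}$ being finite. This is built into the framework in which the statement is posed (the index set $T$ of \S A.1 is finite). It is not automatic for the LP: a vertex of $\{\nu:A'\nu\ge q\}$ solves $A_S'\nu=q_S$ for some set $S$ of $k$ linearly independent columns of $A$, and therefore moves with $q$.
\end{itemize}
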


\begin{proof}
The proof extends the three-step argument underlying Lemmas~\ref{lem:app:closedform}--\ref{lem:app:boot} by verifying that the oracle expansion, the CLT, and the bootstrap remainder can each be controlled uniformly in $q\in\mathcal{S}^{d-1}$.

\paragraph{Uniform oracle expansion.}
For each fixed $q$, Lemma~\ref{lem:app:closedform} gives the oracle expansion
\[
\sqrt{N}\,\big(\widehat\sigma(q)-\sigma(q)\big)
\;=\; N^{-1/2}\sum_{i=1}^N\!\big(\nu_0(X_i, q)'\mathbf{B}_i - \sigma(q)\big) + R_N(q),\qquad R_N(q)=o_P(1).
\]
Inspecting the proof of Lemma~\ref{lem:app:closedform}, $R_N(q)$ decomposes as a bias term and a second-order stochastic term:
\[
R_N^{\mathrm{bias}}(q) \;=\; O\big(B_N + \sqrt{N}(\nu^\infty_N)^2\big),
\qquad R_N^{\mathrm{var}}(q) \;=\; O_P\big(\sqrt{\Lambda_N + \nu^\infty_N}\big).
\]
Under Assumption~\ref{ass:app:rateunif} (uniform rate), the rate $\nu^\infty_N$ is $q$-free; under Assumption~\ref{ass:app:smallbias} (stated uniformly over $t\in T=\overline{\mathcal{T}}$), $B_N$ and $\Lambda_N$ are likewise $q$-free. The margin condition Assumption~\ref{ass:app:ma} holds uniformly over $T=\overline{\mathcal{T}}$ and therefore controls the event $\{\widehat\nu_0(X_i, q)\neq\nu_0(X_i, q)\}$ uniformly in $q$. The bias term is then automatically uniform in $q$ because it depends on $q$ only through these $q$-free quantities; the second-order stochastic term inherits uniform-in-$q$ control from the fact that $\nu_0(\cdot, q)$ takes values in the finite set $\overline{\mathcal{T}}$, so $R_N^{\mathrm{var}}(\cdot)$ is bounded above by a sample-independent envelope $\bar R_N := \max_{\nu\in\overline{\mathcal{T}}} R_N^{\mathrm{var}}\bigl|_{\nu_0\equiv\nu}$ that itself satisfies $\bar R_N = O_P\bigl(\sqrt{\Lambda_N + \nu^\infty_N}\bigr) = o_P(1)$. Hence
\[
\sup_{q\in\mathcal{S}^{d-1}}\big|R_N(q)\big| \;=\; o_P(1).
\]

\paragraph{Donsker property of the leading empirical process.}
Define the indexing class
\[
\mathcal{F} \;=\; \big\{\, f_q(W) \;=\; \nu_0(X, q)'\mathbf{B} - \sigma(q) \;:\; q\in\mathcal{S}^{d-1}\,\big\}.
\]
For each fixed $x$, the map $q\mapsto\nu_0(x, q)=\arg\min_{\nu\in\mathcal{T}(q)}\nu'\bm{b}_0(x)$ is piecewise constant on $\mathcal{S}^{d-1}$ and takes values in the finite set $\overline{\mathcal{T}}$; the $q$-partition on which it is constant is determined by finitely many hyperplane crossings in $q$-space. Hence
\[
f_q(W) \;=\; \sum_{\nu\in\overline{\mathcal{T}}} \big(\nu'\mathbf{B} - \sigma(q)\big)\,\mathbf{1}\!\big\{\nu_0(X, q)=\nu\big\},
\]
and the class $\mathcal{F}$ is contained in a finite union of products of indicator classes (one per vertex $\nu\in\overline{\mathcal{T}}$) and affine functions of $q$. Each indicator class $\{\mathbf{1}\{\nu_0(\cdot, q)=\nu\}:q\in\mathcal{S}^{d-1}\}$ is a VC subgraph class with VC dimension bounded by a constant depending only on the dimensions $(d, k)$ of the LP. The envelope $|f_q|\le \max_{\nu\in\overline{\mathcal{T}}}\|\nu\|\cdot\|\mathbf{B}\| + |\sigma(q)|$ has finite second moment under Assumption~\ref{ass:secmom}. By Donsker theorem for VC subgraph classes combined with  stability under finite linear operations and finite unions, $\mathcal{F}$ is $P$-Donsker, so
\begin{equation}
\label{eq:gaussapprox}
\mathbb{G}_N \;=\; N^{-1/2}\sum_{i=1}^N\!\big(f_q(W_i) - \E f_q(W)\big) \;\rightsquigarrow\; \mathbb{G}\ \ \text{in}\ \ \ell^\infty(\mathcal{S}^{d-1}),
\end{equation}
with the covariance kernel stated in 2.

\paragraph{Uniform consistency.}
immediately follows from uniform Gaussian approximation: weak convergence to a tight Gaussian process implies $\sup_{q\in\mathcal{S}^{d-1}}|S_N(q)|=O_P(1)$, and hence $\sup_q|\widehat\sigma(q)-\sigma(q)| = O_P(N^{-1/2})$.

\paragraph{Uniform multiplier-bootstrap validity.}
Because $\mathcal{F}$ is a $P$-Donsker class with a square-integrable envelope, the multiplier CLT for VC classes delivers, conditionally on the data in probability,
\[
\widetilde{\mathbb{G}}_N \;=\; N^{-1/2}\sum_{i=1}^N \Big(\tfrac{e_i}{\bar e}-1\Big) f_q(W_i) \;\rightsquigarrow\; \mathbb{G} \ \ \text{in}\ \ \ell^\infty(\mathcal{S}^{d-1}),
\]
with the same Gaussian limit $\mathbb{G}$ as in~\ref{eq:gaussapprox}. It remains to show that replacing the oracle $f_q(W_i)$ by its cross-fitted plug-in $\widehat\nu(X_i, q)'\mathbf{B}_i$ leaves the bootstrap process unchanged in the limit. The discrepancy is
\[
\Delta_N(q) \;=\; N^{-1/2}\sum_{i=1}^N\!\Big(\tfrac{e_i}{\bar e}-1\Big)\,\bigl[\widehat\nu(X_i, q) - \nu_0(X_i, q)\bigr]'\mathbf{B}_i.
\]
Conditional on the data, $\Delta_N(q)$ is mean zero (since $\E[e_i/\bar e - 1\,|\,\{W_j\}] = 0$), and its conditional variance satisfies
\[
\Var\!\bigl(\Delta_N(q)\,\big|\,\{W_j\}\bigr)
\;\le\; \tfrac{1}{N}\sum_{i=1}^N \bigl|[\widehat\nu(X_i, q) - \nu_0(X_i, q)]'\mathbf{B}_i\bigr|^2 \cdot \widehat{\Var}\!\bigl(\tfrac{e_i}{\bar e}\bigr).
\]
The factor $\widehat{\Var}(e_i/\bar e) = 1 + o_P(1)$ by the LLN. The summand is non-zero only on the misclassification event $\{\widehat\nu(X_i, q)\neq\nu_0(X_i, q)\}$, which is contained in the margin event $\mathcal{E}_\tau = \{0<\tau_0(X_i, q)\le 2 B_\phi \nu_N^\infty\}$ defined in the proof of Lemma~\ref{lem:app:closedform}; on $\mathcal{E}_\tau$, the integrand is bounded by $4(\max_{\nu\in\overline{\mathcal{T}}}\|\nu\|)^2\,\|\mathbf{B}_i\|^2$, an envelope with finite second moment under Assumption~\ref{ass:secmom}. By Assumption~\ref{ass:app:ma}, $\Pr(\mathcal{E}_\tau) = O(\nu_N^\infty)$ uniformly over the finite vertex set $\overline{\mathcal{T}}$; hence the empirical analog satisfies $N^{-1}\sum_i \mathbf{1}\{\mathcal{E}_\tau\} = O_P(\nu_N^\infty)$ uniformly in $q$. Combining these bounds,
\[
\sup_{q\in\mathcal{S}^{d-1}}\Var\bigl(\Delta_N(q)\,\big|\,\{W_j\}\bigr) \;=\; O_P\bigl(\nu_N^\infty\bigr) \;=\; o_P(1),
\]

and a conditional Markov inequality on the finite-cell envelope yields $\sup_{q\in\mathcal{S}^{d-1}}|\Delta_N(q)| = o_P(1)$ in probability over the data. The plug-in bootstrap process therefore shares the same weak limit $\mathbb{G}$ as the oracle bootstrap process, giving 3.
\end{proof}

\renewcommand{\theequation}{B.\arabic{equation}}
\renewcommand{\thelemma}{B.\arabic{lemma}}
\renewcommand{\theassumption}{B.\arabic{assumption}}
\renewcommand{\theremark}{B.\arabic{remark}}
\renewcommand{\thesection}{B}
\setcounter{equation}{0}
\setcounter{assumption}{0}
\setcounter{lemma}{0}
\setcounter{remark}{0}
\setcounter{figure}{0}
\renewcommand{\thefigure}{B.\arabic{figure}}

\setcounter{table}{0}
\renewcommand{\thetable}{B.\arabic{table}}

\section[Online Supplement B: Additional Figures and Empirical Results]{Additional Figures and Empirical Results}
\label{sec:figsrobust}

\subsection{Simulation Exercise}
\label{subsec:simu}
We build a synthetic dataset similar to the Connecticut Jobs First experiment, consisting of baseline covariates, a randomly assigned treatment and a categorical observable state defined by the woman's earnings bin and on/off-welfare participation for each observation unit. The covariate vector \(X=(X_1,\ldots,X_{28})\) includes demographic indicators, eight pre-RA earnings quarters, eight pre-RA AFDC payment quarters, and applicant-status/years-employed. The treatment is assigned independently of covariates and potential outcomes, with $D \sim \mathrm{Bernoulli}(0.5)$, mirroring the balanced randomization of the Jobs First experiment. More details on the data generating process can be found in \href{https://github.com/gev26/clpbounds}{Github Repository}. The population set size is $N = 2500$ with $T=5$ per-person quarters.

Table~\ref{tab:simulation_bounds} reports the simulation results for relevant parameters, where the first-stage conditional moment $\bm b_0(x)$ is estimated with three estimators and base set of covariates. Inferential coverage of 95\% confidence intervals is computed via the person-clustered \(\operatorname{Exp}(1)\) multiplier bootstrap with 200 draws. Across all configurations the set-identified bounds include the true parameter and are tighter than the analytical bound, which we compute following the closed-form expressions of \citet{KT}.

\begin{table}[htbp]

\caption{\textsc{Set-identified response probabilities: simulation exercise}}
\label{tab:simulation_bounds}

\begin{minipage}{0.98\linewidth}
\setlength{\tabcolsep}{5.5pt}
\renewcommand{\arraystretch}{1.15}

\resizebox{\linewidth}{!}{%
\begin{tabular}{@{}cccccccc@{}}
\toprule
\toprule
\multicolumn{2}{c}{State occupied under}
& Symbol
& True
& Analytical
& OLS
& Ridge
& LASSO \\
\cmidrule(lr){1-2}
AFDC
& JF
&
& $\pi_0$
& Bounds
& Estimate
& Estimate
& Estimate \\
\midrule

\multicolumn{8}{l}{\emph{GroupKFold cross-fitted first-stage estimation}} \\
\addlinespace[0.2em]

$0n$ & $1r$ & $\pi_{0n,1r}$ & $0.511$ & $\{0.000, 1.000\}$ & $\{0.137, 1.000\}$ & $\{0.144, 1.000\}$ & $\{0.309, 1.000\}$ \\

$2n$ & $1r$ & $\pi_{2n,1r}$ & $0.559$ & $\{0.209, 1.000\}$ & $\{0.259, 1.000\}$ & $\{0.254, 1.000\}$ & $\{0.249, 1.000\}$ \\

$1n$ & $1r$ & $\pi_{1n,1r}$ & $0.542$ & $\{0.310, 1.000\}$ & $\{0.363, 1.000\}$ & $\{0.378, 1.000\}$ & $\{0.421, 1.000\}$ \\

\bottomrule
\end{tabular}%
}

\vspace{0.45em}

\noindent
\parbox{\linewidth}{%
\footnotesize
\emph{Notes:} Number of state refers to earnings level, with \(0\) indicating no earnings, \(1\) indicating earnings below three times the monthly FPL, and \(2\) indicating earnings above three times the monthly FPL. 
\(n\) indicates welfare nonparticipation, \(r\) indicates welfare participation with truthful reporting of earnings. Numbers in braces are estimated lower and upper bounds. 
Analytical bounds column reports sharp bounds derived via closed-form expressions in \citet{KT}.
}

\end{minipage}
\end{table}

\subsection{Block Expansion Illustration}

Figure~\ref{fig:design-matrices} shows sequential partitioning of the coarse design matrix. Observable states are pairs (earnings bin, welfare regime). Earnings bins follow \citet{KT}: $0$ (zero), $1$ (positive, $\le \text{FPL}$),
$2$ ($>\text{FPL}$). The regime suffix is $n$ off welfare and $p$
on welfare; $r$ signifies truthful reporting, while $u$ underreporting. $0p$ is the on-welfare zero earners
and $2u$ flags above-FPL cells on welfare, which can only be reached by underreporting. State $1r$ is dropped from the constraint system as redundant. $1r$ coincides with $1p$ under JF, but not under ADFC where agents have underreporting incentives below FPL and hence $1p$ includes both latent types $1r$ and $1u$. The granular figures refine bin~$1$ into deciles $b_1,\dots, b_5$ and bin~$2$ into $b_6, b_7, b_8$, yielding the $13$-row set $\{0n, b_in, b_kn, 0p, b_kp\}$ for $i\!=\!1,\dots,5$ and $k\!=\!6, 7, 8$.
The nine groups $G_1,\dots,G_9$ index the coarse transitions of the \citet{KT} model and are preserved under refinement:
\begin{align*}
G_1 &\!:0n\!\to\!1r, & G_2 &\!:0r\!\to\!0n, & G_3 &\!:2n\!\to\!1r, \\
G_4 &\!:0r\!\to\!2n, & G_5 &\!:0r\!\to\!1r, & G_6 &\!:0r\!\to\!1n, \\
G_7 &\!:1n\!\to\!1r, & G_8 &\!:0r\!\to\!2u, & G_9 &\!:2u\!\to\!1r.
\end{align*}
Substantively, $\{G_1,G_3,G_7\}$ are take-up margins,
$\{G_2,G_4,G_6\}$ exit margins from the hub~$0p$, and $\{G_8,G_9\}$ the underreporting margin opened by Jobs First. Each column carries $+1$ at the destination row (blue), $-1$ at the source row (red), or both. Figure~\ref{fig:design-matrices} Panel (a) displays the coarse coefficient matrix $A$
($5\!\times\!9$) of~\eqref{eq:Acoarse}, with rows indexed by
$\{0n, 1n, 2n, 0p, 2p\}$. Panel (b) splits each refined source state into its sub-bins to obtain the 
$13\!\times\!25$ design; a further split of
the $1r$-destined columns of $G_1$ and $G_5$ yields the $13\!\times\!33$ design. Panel (d) is
$13\!\times\!45$ design that splits every $1r$ destination.

\begin{figure}[htbp]
\centering
\includegraphics[width=0.65\textwidth]{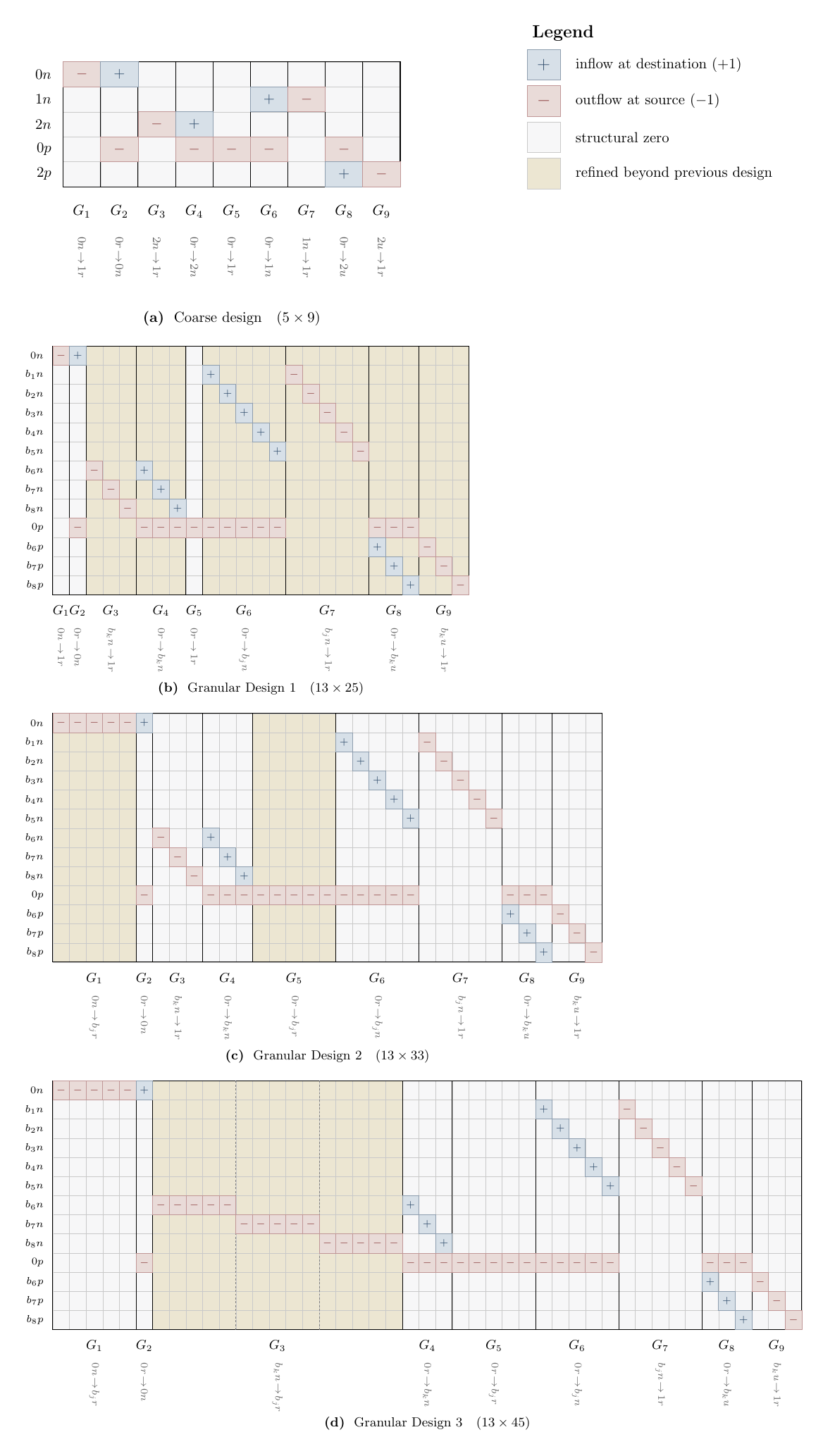}
\caption{\textsc{Possible block expansions of the coarse design matrix.}}
\label{fig:design-matrices}
\end{figure}

\subsection{Additional Results for Coarse Design}

We report supplementary results that extend the analysis of Table~\ref{tab:main} in Section~\ref{sec:emp}. Table~\ref{tab:kt_lasso_appendix} reports results with LASSO first-stage, using two feature sets (baseline and extended set) and two cross-fitting schemes (observation-level and person-level). Since the data possess a panel structure with multiple person-quarter observations per individual under observation-level partitioning, observations from the same individual may be assigned to both the training and held-out folds of a given split, inducing a form of leakage that compromises the orthogonality between the first-stage estimator and the held-out fold. Person-level partitioning assigns the full sequence of quarterly observations for each individual to a single fold to eliminate the leakage but reduces the effective sample size within each fold.

\begin{table}[htbp]
\centering
\caption{\textsc{Set-identified response probabilities: Jobs First case study, $5 \times 9$ coarse design}}
\label{tab:kt_lasso_appendix}

\scriptsize
\setlength{\tabcolsep}{2.8pt}
\renewcommand{\arraystretch}{1.08}

\begin{threeparttable}

\begin{adjustbox}{max width=\textwidth}
\begin{minipage}{\textwidth}

\begin{tabular*}{\textwidth}{@{\extracolsep{\fill}}llcccccc@{}}
\toprule
\toprule
& & \multicolumn{2}{c}{State occupied under}
& & & \multicolumn{2}{c}{Our estimate} \\
\cmidrule(lr){3-4}
\cmidrule(lr){7-8}
\makecell[l]{Response\\type}
& 
& AFDC
& JF
& Symbol
& \makecell{Kline \& Tartari\\reported bounds}
& \makecell{LASSO\\GroupKFold}
& \makecell{LASSO\\KFold} \\
\midrule

\multicolumn{8}{l}{\emph{Panel A. Base covariate regime}} \\

\multirow{9}{*}{Detailed}
& \multirow{9}{*}{$\left\{\vphantom{\begin{array}{c}0\\0\\0\\0\\0\\0\\0\\0\\0\end{array}}\right.$}
& $0n$ & $1r$ & $\pi_{0n,1r}$ & $\{0.055,0.620\}$ & $\{0.000,0.589\}$ & $\{0.047,0.593\}$ \\
& & $1n$ & $1r$ & $\pi_{1n,1r}$ & $\{0.382,0.987\}$ & $\{0.377,1.000\}$ & $\{0.377,0.961\}$ \\
& & $2n$ & $1r$ & $\pi_{2n,1r}$ & $\{0.280,1.000\}$ & $\{0.277,1.000\}$ & $\{0.270,1.000\}$ \\
& & $0r$ & $0n$ & $\pi_{0r,0n}$ & $\{0.000,0.170\}$ & $\{0.000,0.163\}$ & $\{0.000,0.164\}$ \\
& & $''$ & $1n$ & $\pi_{0r,1n}$ & $\{0.000,0.170\}$ & $\{0.000,0.179\}$ & $\{0.000,0.164\}$ \\
& & $''$ & $2n$ & $\pi_{0r,2n}$ & $\{0.000,0.154\}$ & $\{0.001,0.180\}$ & $\{0.000,0.163\}$ \\
& & $''$ & $1r$ & $\pi_{0r,1r}$ & $\{0.000,0.170\}$ & $\{0.000,0.179\}$ & $\{0.000,0.164\}$ \\
& & $''$ & $2u$ & $\pi_{0r,2u}$ & $\{0.031,0.051\}$ & $\{0.031,0.210\}$ & $\{0.031,0.196\}$ \\
& & $2u$ & $1r$ & $\pi_{2u,1r}$ & $\{0.000,1.000\}$ & $\{0.000,1.000\}$ & $\{0.000,1.000\}$ \\

\addlinespace[0.4em]

\multirow{3}{*}{Composite}
& \multirow{3}{*}{$\left\{\vphantom{\begin{array}{c}0\\0\\0\end{array}}\right.$}
& \makecell{Not\\working} 
& Working 
& $\pi_{0,1+}$ 
& $0.167$ 
& $0.160$ 
& $0.160$ \\

& & \makecell{Off\\welfare} 
& \makecell{On\\welfare} 
& $\pi_{n,p}$ 
& $\{0.231,0.445\}$ 
& $\{0.205,0.428\}$ 
& $\{0.223,0.428\}$ \\

& & \makecell{On welfare,\\not working} 
& \makecell{Off\\welfare} 
& $\pi_{0r,n}$ 
& $\{0.000,0.170\}$ 
& $\{0.000,0.163\}$ 
& $\{0.000,0.163\}$ \\

\addlinespace[0.7em]

\multicolumn{8}{l}{\emph{Panel B. Extended covariate regime}} \\

\multirow{9}{*}{Detailed}
& \multirow{9}{*}{$\left\{\vphantom{\begin{array}{c}0\\0\\0\\0\\0\\0\\0\\0\\0\end{array}}\right.$}
& $0n$ & $1r$ & $\pi_{0n,1r}$ & $\{0.055,0.620\}$ & $\{0.000,0.590\}$ & $\{0.047,0.588\}$ \\
& & $1n$ & $1r$ & $\pi_{1n,1r}$ & $\{0.382,0.987\}$ & $\{0.377,1.000\}$ & $\{0.378,0.957\}$ \\
& & $2n$ & $1r$ & $\pi_{2n,1r}$ & $\{0.280,1.000\}$ & $\{0.274,1.000\}$ & $\{0.274,1.000\}$ \\
& & $0r$ & $0n$ & $\pi_{0r,0n}$ & $\{0.000,0.170\}$ & $\{0.000,0.164\}$ & $\{0.000,0.163\}$ \\
& & $''$ & $1n$ & $\pi_{0r,1n}$ & $\{0.000,0.170\}$ & $\{0.000,0.180\}$ & $\{0.000,0.163\}$ \\
& & $''$ & $2n$ & $\pi_{0r,2n}$ & $\{0.000,0.154\}$ & $\{0.000,0.180\}$ & $\{0.000,0.163\}$ \\
& & $''$ & $1r$ & $\pi_{0r,1r}$ & $\{0.000,0.170\}$ & $\{0.000,0.180\}$ & $\{0.000,0.163\}$ \\
& & $''$ & $2u$ & $\pi_{0r,2u}$ & $\{0.031,0.051\}$ & $\{0.031,0.211\}$ & $\{0.031,0.194\}$ \\
& & $2u$ & $1r$ & $\pi_{2u,1r}$ & $\{0.000,1.000\}$ & $\{0.000,1.000\}$ & $\{0.000,1.000\}$ \\

\addlinespace[0.4em]

\multirow{3}{*}{Composite}
& \multirow{3}{*}{$\left\{\vphantom{\begin{array}{c}0\\0\\0\end{array}}\right.$}
& \makecell{Not\\working} 
& Working 
& $\pi_{0,1+}$ 
& $0.167$ 
& $0.160$ 
& $0.160$ \\

& & \makecell{Off\\welfare} 
& \makecell{On\\welfare} 
& $\pi_{n,p}$ 
& $\{0.231,0.445\}$ 
& $\{0.204,0.428\}$ 
& $\{0.224,0.428\}$ \\

& & \makecell{On welfare,\\not working} 
& \makecell{Off\\welfare} 
& $\pi_{0r,n}$ 
& $\{0.000,0.170\}$ 
& $\{0.000,0.163\}$ 
& $\{0.000,0.163\}$ \\

\bottomrule
\end{tabular*}

\begin{tablenotes}[flushleft]
\scriptsize
\item \emph{Notes:} Number of state refers to earnings level, with \(0\) indicating no earnings, \(1\) indicating earnings below three times the monthly FPL, and \(2\) indicating earnings above three times the monthly FPL. 
\(n\) indicates welfare nonparticipation, \(r\) indicates welfare participation with truthful reporting of earnings, \(u\) indicates welfare participation with underreporting of earnings, and \(p\) indicates welfare participation irrespective of reporting. 
Numbers in braces are estimated lower and upper bounds. 
The Kline \& Tartari reported bounds column reproduces the corresponding estimates from their Table 5. 
The LASSO columns report CLP bounds obtained from LASSO first-stage estimation under GroupKFold and KFold cross-fitting. 
Panel A uses the base covariate regime; Panel B uses the extended covariate regime.
\end{tablenotes}

\end{minipage}
\end{adjustbox}

\end{threeparttable}
\end{table}

\subsection{Additional Results for Granular Designs}
\label{subsec:specs}

Among the configurations considered above, LASSO delivers the strongest first-stage performance under the granular partition. We therefore fix LASSO as the first-stage estimator in the remaining robustness exercises and vary instead the partition. Below we summarize the structure of specifications under consideration, with Figure~\ref{fig:design1} showing design matrices for each case. Table~\ref{tab:pi_bounds_modeD} reports the set identified intervals for the parameters of interest under each specification, together with the composite bounds on $\pi_{2n, 1r}$ and $\pi_{1n, 1r}$ evaluated under the partitioned regime but with composite $q$ (See Table~\ref{tab:main} and~\ref{tab:granular-lasso-bounds} \textit{Notes}). 

\begin{enumerate}
    \item \textbf{Spec.~1} ($5\times13$). Source states $1n$ and $2n$ are not subdivided; the destination state $1r$ is split into the fine bins $b_{1}r-b_{5}r$ for the $1n\!\to\!1r$ flow.\\ 
    \item \textbf{Spec.~2} ($7\times11$). Source states not subdivided; the destination state $1r$ is split into restricted bins $l_{1}r-l_{3}r$ for the $2n\!\to\!1r$ flow. In this design, we will make an extra assumption. Recall that $1r$ type has both truthful reporters and underreporters as latent types under AFDC. We will assume that no underreporters exist in the two lowest income bins $b_1n$ and $b_2n$. This will allow to add row constraints for these states. Because of this extra assumption, the lower bound on composite $2n \to 1r$ is significantly higher at $56.4\%$. \\
    \item \textbf{Spec.~3} ($7\times25$). The $2n\!\to\!1r$ flow is subdivided on both the source side ($b_{6}n-b_{8}n$) and the destination side ($b_{1}r-b_{5}r$).\\
    \item \textbf{Spec.~4} ($9\times25$). Source-side $2n$ split into $b_{6}n-b_{8}n$ and $1n$ into $l_{1}n-l_{3}n$. \\  
    \item \textbf{Spec.~5} ($11\times29$). Source-side $2n$ split into $b_{6}n-b_{8}n$ and $1n$ into $b_{1}n-b_{5}n$. \\
    \item \textbf{Spec.~6} ($11\times29$). Full source-side split of $1n$ into $l_{1}n-l_{3}n$ and of $2n$ into $b_{6}n-b_{8}n$.  \\
    \item \textbf{Spec.~7} ($11\times53$). Most granular design. Source-side fine bins $b_{1}n-b_{8}n$ combined with destination-side fine bins $b_{1}r-b_{5}r$ of the state $1r$.  \\
    \item \textbf{Spec.~8} ($13\times33$). Full source-side fine binning $1n$: $b_{1}n-b_{5}n$; $2n$: $b_{6}n-b_{8}n$.

\end{enumerate}

\begin{figure}[htbp]
\centering
\includegraphics[width=1.1\textwidth]{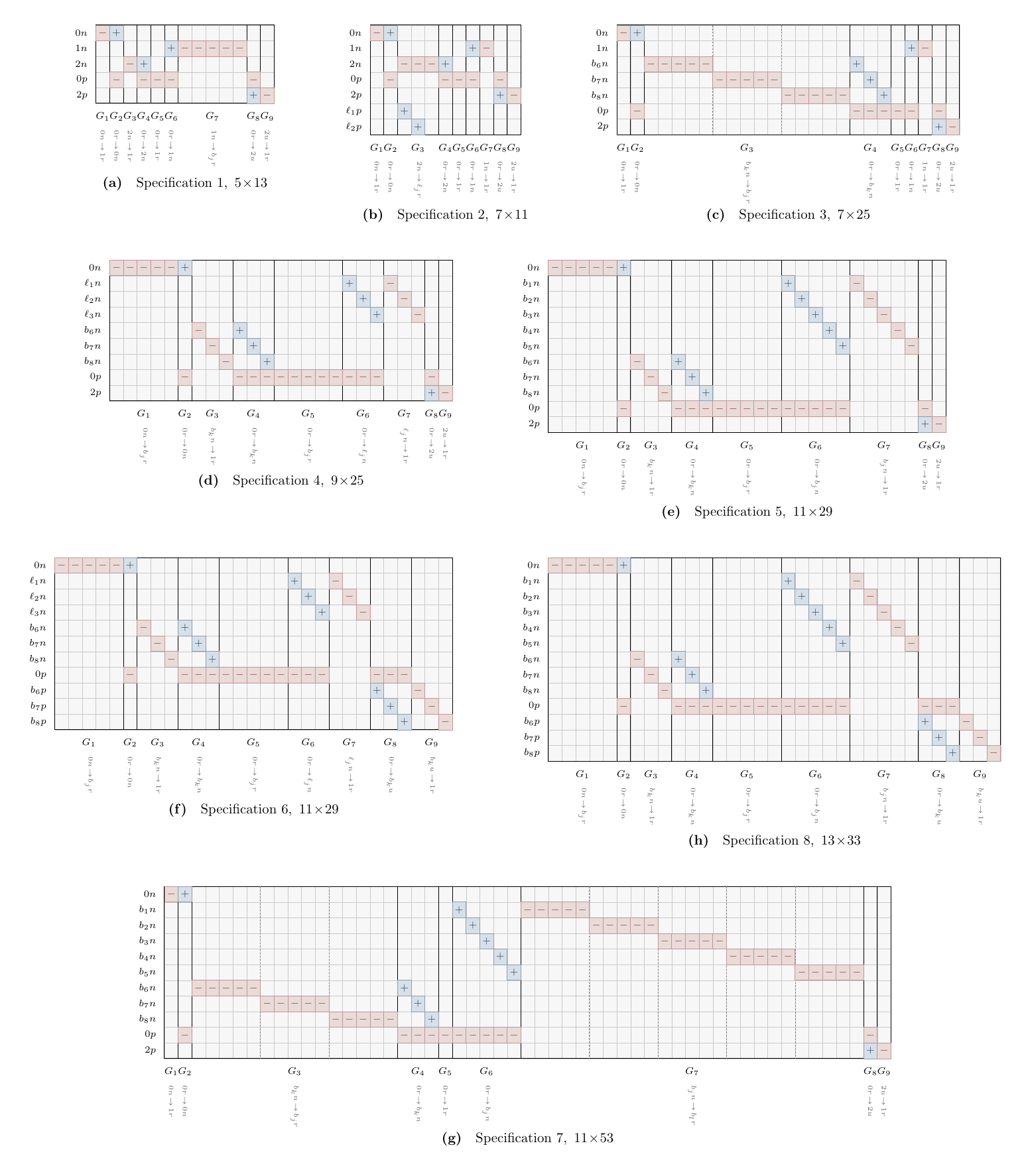}
\caption{\textsc{Different granular specification designs corresponding to specifications in Table~\ref{tab:pi_bounds_modeD}.}}
\label{fig:design1}
\end{figure}

\captionsetup[table]{labelfont=bf,font=normalsize,justification=centering}
\newsavebox{\pitbl}
\newsavebox{\pitblL}
\newsavebox{\pitblR}

\begin{table}[p]
\centering
\caption{\textsc{Set-identified response probabilities: Jobs First case study, all specifications}}
\label{tab:pi_bounds_modeD}
\scriptsize
\setlength{\tabcolsep}{4pt}
\renewcommand{\arraystretch}{0.92}
\sbox{\pitblL}{\begin{tabular}[t]{@{}l c c c c@{}}
\toprule
 & \multicolumn{2}{c}{State occupied under} & & \\
\cmidrule(lr){2-3}
Flow & AFDC & JF & Symbol & $\pi$-bounds \\
\midrule
\multicolumn{5}{@{}l}{\textit{Panel A. Specification~1} (design matrix $5\times13$).} \\
\ldelim\{{5}{14mm}[$\,1n\!\to\!1r$] & $1n$ & $b_{1}r$ & $\pi_{1n,\,b_{1}r}$ & $\{0.000,\;1.000\}$ \\
 & $1n$ & $b_{2}r$ & $\pi_{1n,\,b_{2}r}$ & $\{0.000,\;1.000\}$ \\
 & $1n$ & $b_{3}r$ & $\pi_{1n,\,b_{3}r}$ & $\{0.000,\;1.000\}$ \\
 & $1n$ & $b_{4}r$ & $\pi_{1n,\,b_{4}r}$ & $\{0.000,\;1.000\}$ \\
 & $1n$ & $b_{5}r$ & $\pi_{1n,\,b_{5}r}$ & $\{0.000,\;1.000\}$ \\
\cmidrule(l){2-5}
 & $2n$ & $1r$ & $\pi_{2n,\,1r}$ & $\{0.272,\;1.000\}$ \\
 & $1n$ & $1r$ & $\pi_{1n,\,1r}$ & $\{0.375,\;1.000\}$ \\
\midrule
\multicolumn{5}{@{}l}{\textit{Panel B. Specification~2} (design matrix $7\times11$).} \\
\ldelim\{{3}{14mm}[$\,2n\!\to\!1r$] & $2n$ & $l_{1}r$ & $\pi_{2n,\,l_{1}r}$ & $\{0.248,\;0.248\}$ \\
 & $2n$ & $l_{2}r$ & $\pi_{2n,\,l_{2}r}$ & $\{0.322,\;0.322\}$ \\
 & $2n$ & $l_{3}r$ & $\pi_{2n,\,l_{3}r}$ & $\{0.000,\;0.501\}$ \\
\cmidrule(l){2-5}
 & $2n$ & $1r$ & $\pi_{2n,\,1r}$ & $\{0.564,\;1.000\}$ \\
 & $1n$ & $1r$ & $\pi_{1n,\,1r}$ & $\{0.374,\;0.760\}$ \\
\midrule
\multicolumn{5}{@{}l}{\textit{Panel C. Specification~3} (design matrix $7\times25$).} \\
\ldelim\{{8}{14mm}[$\,2n\!\to\!1r$] & $b_{6}n$ & $1r$ & $\pi_{b_{6}n,\,1r}$ & $\{0.294,\;1.000\}$ \\
 & $b_{7}n$ & $1r$ & $\pi_{b_{7}n,\,1r}$ & $\{0.326,\;1.000\}$ \\
 & $b_{8}n$ & $1r$ & $\pi_{b_{8}n,\,1r}$ & $\{0.192,\;1.000\}$ \\
 & $2n$ & $b_{1}r$ & $\pi_{2n,\,b_{1}r}$ & $\{0.000,\;1.000\}$ \\
 & $2n$ & $b_{2}r$ & $\pi_{2n,\,b_{2}r}$ & $\{0.000,\;1.000\}$ \\
 & $2n$ & $b_{3}r$ & $\pi_{2n,\,b_{3}r}$ & $\{0.000,\;1.000\}$ \\
 & $2n$ & $b_{4}r$ & $\pi_{2n,\,b_{4}r}$ & $\{0.000,\;1.000\}$ \\
 & $2n$ & $b_{5}r$ & $\pi_{2n,\,b_{5}r}$ & $\{0.000,\;1.000\}$ \\
\cmidrule(l){2-5}
 & $2n$ & $1r$ & $\pi_{2n,\,1r}$ & $\{0.257,\;1.000\}$ \\
 & $1n$ & $1r$ & $\pi_{1n,\,1r}$ & $\{0.375,\;1.000\}$ \\
\midrule
\multicolumn{5}{@{}l}{\textit{Panel D. Specification~4} (design matrix $9\times25$).} \\
\ldelim\{{3}{14mm}[$\,2n\!\to\!1r$] & $b_{6}n$ & $1r$ & $\pi_{b_{6}n,\,1r}$ & $\{0.294,\;1.000\}$ \\
 & $b_{7}n$ & $1r$ & $\pi_{b_{7}n,\,1r}$ & $\{0.326,\;1.000\}$ \\
 & $b_{8}n$ & $1r$ & $\pi_{b_{8}n,\,1r}$ & $\{0.192,\;1.000\}$ \\
\ldelim\{{3}{14mm}[$\,1n\!\to\!1r$] & $l_{1}n$ & $1r$ & $\pi_{l_{1}n,\,1r}$ & $\{0.000,\;1.000\}$ \\
 & $l_{2}n$ & $1r$ & $\pi_{l_{2}n,\,1r}$ & $\{0.235,\;1.000\}$ \\
 & $l_{3}n$ & $1r$ & $\pi_{l_{3}n,\,1r}$ & $\{0.464,\;1.000\}$ \\
\cmidrule(l){2-5}
 & $2n$ & $1r$ & $\pi_{2n,\,1r}$ & $\{0.257,\;1.000\}$ \\
 & $1n$ & $1r$ & $\pi_{1n,\,1r}$ & $\{0.350,\;1.000\}$ \\
\midrule
\multicolumn{5}{@{}l}{\textit{Panel E. Specification~5} (design matrix $11\times29$).} \\
\ldelim\{{3}{14mm}[$\,2n\!\to\!1r$] & $b_{6}n$ & $1r$ & $\pi_{b_{6}n,\,1r}$ & $\{0.294,\;1.000\}$ \\
 & $b_{7}n$ & $1r$ & $\pi_{b_{7}n,\,1r}$ & $\{0.326,\;1.000\}$ \\
 & $b_{8}n$ & $1r$ & $\pi_{b_{8}n,\,1r}$ & $\{0.192,\;1.000\}$ \\
\ldelim\{{5}{14mm}[$\,1n\!\to\!1r$] & $b_{1}n$ & $1r$ & $\pi_{b_{1}n,\,1r}$ & $\{0.000,\;1.000\}$ \\
 & $b_{2}n$ & $1r$ & $\pi_{b_{2}n,\,1r}$ & $\{0.235,\;1.000\}$ \\
 & $b_{3}n$ & $1r$ & $\pi_{b_{3}n,\,1r}$ & $\{0.550,\;1.000\}$ \\
 & $b_{4}n$ & $1r$ & $\pi_{b_{4}n,\,1r}$ & $\{0.474,\;1.000\}$ \\
 & $b_{5}n$ & $1r$ & $\pi_{b_{5}n,\,1r}$ & $\{0.360,\;1.000\}$ \\
\cmidrule(l){2-5}
 & $2n$ & $1r$ & $\pi_{2n,\,1r}$ & $\{0.257,\;1.000\}$ \\
 & $1n$ & $1r$ & $\pi_{1n,\,1r}$ & $\{0.341,\;1.000\}$ \\
\bottomrule
\end{tabular}}%
\sbox{\pitblR}{\begin{tabular}[t]{@{}l c c c c@{}}
\toprule
 & \multicolumn{2}{c}{State occupied under} & & \\
\cmidrule(lr){2-3}
Flow & AFDC & JF & Symbol & $\pi$-bounds \\
\midrule
\multicolumn{5}{@{}l}{\textit{Panel F. Specification~6} (design matrix $11\times29$).} \\
\ldelim\{{3}{14mm}[$\,2n\!\to\!1r$] & $b_{6}n$ & $1r$ & $\pi_{b_{6}n,\,1r}$ & $\{0.300,\;1.000\}$ \\
 & $b_{7}n$ & $1r$ & $\pi_{b_{7}n,\,1r}$ & $\{0.328,\;1.000\}$ \\
 & $b_{8}n$ & $1r$ & $\pi_{b_{8}n,\,1r}$ & $\{0.204,\;1.000\}$ \\
\ldelim\{{3}{14mm}[$\,1n\!\to\!1r$] & $l_{1}n$ & $1r$ & $\pi_{l_{1}n,\,1r}$ & $\{0.000,\;1.000\}$ \\
 & $l_{2}n$ & $1r$ & $\pi_{l_{2}n,\,1r}$ & $\{0.235,\;1.000\}$ \\
 & $l_{3}n$ & $1r$ & $\pi_{l_{3}n,\,1r}$ & $\{0.465,\;1.000\}$ \\
\cmidrule(l){2-5}
 & $2n$ & $1r$ & $\pi_{2n,\,1r}$ & $\{0.265,\;1.000\}$ \\
 & $1n$ & $1r$ & $\pi_{1n,\,1r}$ & $\{0.351,\;1.000\}$ \\
 \midrule
\multicolumn{5}{@{}l}{\textit{Panel G. Specification~7} (design matrix $11\times53$).} \\
\ldelim\{{8}{14mm}[$\,2n\!\to\!1r$] & $b_{6}n$ & $1r$ & $\pi_{b_{6}n,\,1r}$ & $\{0.294,\;1.000\}$ \\
 & $b_{7}n$ & $1r$ & $\pi_{b_{7}n,\,1r}$ & $\{0.326,\;1.000\}$ \\
 & $b_{8}n$ & $1r$ & $\pi_{b_{8}n,\,1r}$ & $\{0.192,\;1.000\}$ \\
 & $2n$ & $b_{1}r$ & $\pi_{2n,\,b_{1}r}$ & $\{0.000,\;1.000\}$ \\
 & $2n$ & $b_{2}r$ & $\pi_{2n,\,b_{2}r}$ & $\{0.000,\;1.000\}$ \\
 & $2n$ & $b_{3}r$ & $\pi_{2n,\,b_{3}r}$ & $\{0.000,\;1.000\}$ \\
 & $2n$ & $b_{4}r$ & $\pi_{2n,\,b_{4}r}$ & $\{0.000,\;1.000\}$ \\
 & $2n$ & $b_{5}r$ & $\pi_{2n,\,b_{5}r}$ & $\{0.000,\;1.000\}$ \\
\ldelim\{{10}{14mm}[$\,1n\!\to\!1r$] & $b_{1}n$ & $1r$ & $\pi_{b_{1}n,\,1r}$ & $\{0.000,\;1.000\}$ \\
 & $b_{2}n$ & $1r$ & $\pi_{b_{2}n,\,1r}$ & $\{0.235,\;1.000\}$ \\
 & $b_{3}n$ & $1r$ & $\pi_{b_{3}n,\,1r}$ & $\{0.550,\;1.000\}$ \\
 & $b_{4}n$ & $1r$ & $\pi_{b_{4}n,\,1r}$ & $\{0.474,\;1.000\}$ \\
 & $b_{5}n$ & $1r$ & $\pi_{b_{5}n,\,1r}$ & $\{0.360,\;1.000\}$ \\
 & $1n$ & $b_{1}r$ & $\pi_{1n,\,b_{1}r}$ & $\{0.000,\;1.000\}$ \\
 & $1n$ & $b_{2}r$ & $\pi_{1n,\,b_{2}r}$ & $\{0.000,\;1.000\}$ \\
 & $1n$ & $b_{3}r$ & $\pi_{1n,\,b_{3}r}$ & $\{0.000,\;1.000\}$ \\
 & $1n$ & $b_{4}r$ & $\pi_{1n,\,b_{4}r}$ & $\{0.000,\;1.000\}$ \\
 & $1n$ & $b_{5}r$ & $\pi_{1n,\,b_{5}r}$ & $\{0.000,\;1.000\}$ \\
\cmidrule(l){2-5}
 & $2n$ & $1r$ & $\pi_{2n,\,1r}$ & $\{0.257,\;1.000\}$ \\
 & $1n$ & $1r$ & $\pi_{1n,\,1r}$ & $\{0.341,\;1.000\}$ \\
\midrule
\multicolumn{5}{@{}l}{\textit{Panel H. Specification~8} (design matrix $13\times33$).} \\
\ldelim\{{3}{14mm}[$\,2n\!\to\!1r$] & $b_{6}n$ & $1r$ & $\pi_{b_{6}n,\,1r}$ & $\{0.300,\;1.000\}$ \\
 & $b_{7}n$ & $1r$ & $\pi_{b_{7}n,\,1r}$ & $\{0.328,\;1.000\}$ \\
 & $b_{8}n$ & $1r$ & $\pi_{b_{8}n,\,1r}$ & $\{0.204,\;1.000\}$ \\
\ldelim\{{5}{14mm}[$\,1n\!\to\!1r$] & $b_{1}n$ & $1r$ & $\pi_{b_{1}n,\,1r}$ & $\{0.000,\;1.000\}$ \\
 & $b_{2}n$ & $1r$ & $\pi_{b_{2}n,\,1r}$ & $\{0.235,\;1.000\}$ \\
 & $b_{3}n$ & $1r$ & $\pi_{b_{3}n,\,1r}$ & $\{0.550,\;1.000\}$ \\
 & $b_{4}n$ & $1r$ & $\pi_{b_{4}n,\,1r}$ & $\{0.474,\;1.000\}$ \\
 & $b_{5}n$ & $1r$ & $\pi_{b_{5}n,\,1r}$ & $\{0.365,\;1.000\}$ \\
\cmidrule(l){2-5}
 & $2n$ & $1r$ & $\pi_{2n,\,1r}$ & $\{0.265,\;1.000\}$ \\
 & $1n$ & $1r$ & $\pi_{1n,\,1r}$ & $\{0.343,\;1.000\}$ \\
\addlinespace[21.55pt]
\bottomrule
\end{tabular}}%
\typeout{MEASURE LEFT=\the\dimexpr\ht\pitblL+\dp\pitblL\relax RIGHT=\the\dimexpr\ht\pitblR+\dp\pitblR\relax}%
\sbox{\pitbl}{%
\begin{tabular}{@{}c@{\hspace{6pt}}!{\vrule width 0.7pt}@{\hspace{6pt}}c@{}}
\usebox{\pitblL} & \usebox{\pitblR}
\end{tabular}}%
\usebox{\pitbl}\par
\vspace{5pt}
\begin{minipage}{\wd\pitbl}
\fontsize{7}{8.4}\selectfont
\textit{Notes.} Each panel corresponds to one specification, ordered from least to most granular by the dimension of the LASSO design matrix (rows~$\times$~columns), reported in each panel heading. The eight specifications are summarized in \S B.4 and Figure~\ref{fig:design1}. State labels follow \citet{KT}. Within a coarse state, $b_{k}n$ and $b_{k}r$ denote fine sub-bins, and $l_{k}n$, $l_{k}r$ denote restricted (lower) fine sub-bins (see spec2 in \S B.4). Sub-bin rows are bracketed to indicate the coarse flow to which they aggregate (by source bin, $b_{k}n\!\to\!1r$, and/or by destination bin, $2n/1n\!\to\!b_{k}r$); the two rows below each bracket, $\pi_{2n,1r}$ and $\pi_{1n,1r}$, report the bounds for the aggregated coarse bin under the same specification. Each composite bound is obtained by setting the vector $q$ to one
on the coordinates of the sub-bins that compose the coarse bin and to zero elsewhere. Numbers in braces are estimated lower and upper bounds and clipped to $[0,1]$ when
needed. In the coarse model the dual feasible region has few enough vertices that they can be enumerated, and each bound is obtained by checking over this finite set. In granular models the number of vertices grows combinatorially with the additional states and parameters—so enumeration is costly, and we instead obtain each bound by solving the dual program directly with a numerical solver, imposing $\|\nu\|_{\infty} \leq 200$ on the dual variable. We exclude any program that fails to solve and any program whose optimal $\nu$ attains the box boundary. In practice this filter removes less than $2\%$ of observations.\\[3pt]
\end{minipage}
\end{table}

\subsection{Additional Results for Welfare Bounds}

Under the most granular design reported in the main text, the bound on the net welfare gain from the $2n \to 1r$ transition is inconclusive in sign, with an identified interval that contains zero. We treat this granular result, rather than its coarse counterpart, as the primary one. The ambiguous sign is not an artifact of weak identification and reflects genuine heterogeneity in the welfare consequences of the transition across narrow earnings cells.
To see the mechanism, consider two women who both opt into welfare under Jobs First but originate in different cells. A woman in bin $b_6n$, earning just above the federal poverty line, who moves to $b_5r$, just below it, sacrifices little earned income while gaining the transfer; her welfare change is positive and potentially large. A woman in bin $b_8n$, earning well above the poverty line, who moves to $b_2r$ forgoes substantial earnings that the transfer might not offset. Both transitions are subsumed within the single coarse flow $2n \to 1r$ in coarser design cases that do not differentiate between those two cells, applying a common welfare weight to them and resulting in an average of those effects. More granular designs represent them as distinct cells and on aggregation the welfare-gain bound straddles zero precisely because the underlying cell-level effects do.

\vspace{1cm}

\setlength{\LTcapwidth}{\textwidth}
\small
\renewcommand{\arraystretch}{0.9}
\begin{longtable}{lll|lll}
\caption{\textsc{Set-identified welfare bounds: Jobs First case study, some specifications}}
\label{tab:welfare_bounds}\\
\toprule
Flow & Transition & Welfare $\$$/m  & Flow & Transition & Welfare $\$$/m\\
\midrule
\endfirsthead
\multicolumn{6}{l}{\textit{(Table~\ref{tab:welfare_bounds} continued)}}\\
\toprule
Flow & Transition & Welfare $\$$/m & Flow & Transition & Welfare $\$$/m\\
\midrule
\endhead
\bottomrule
\endfoot
\multicolumn{3}{l|}{\textit{Panel A. Specification 1 ($5\times13$)}} & \multicolumn{3}{l}{\textit{Panel D. Specification 7 ($11\times53$)}}\\
$2n\to1r$ & $2n\!\to\!1r$ & $\{-95.95,\ -23.70\}$ & $2n\to1r$ & $b_{6n}\!\to\!b_{1r}$ & $\{-74.58,\ 0.00\}$\\
 & Coarse $2n\to1r$ & $\{-95.95,\ -23.70\}$ &  & $b_{6n}\!\to\!b_{2r}$ & $\{-49.53,\ 0.00\}$\\
\cmidrule(lr){1-3}
$1n\to1r$ & $1n\!\to\!1r$ & $\{11.37,\ 30.56\}$ &  & $b_{6n}\!\to\!b_{3r}$ & $\{-26.10,\ 0.00\}$\\
 & Coarse $1n\to1r$ & $\{11.37,\ 30.56\}$ &  & $b_{6n}\!\to\!b_{4r}$ & $\{-3.16,\ 0.00\}$\\
\addlinespace[5pt]
\multicolumn{3}{l|}{\textit{Panel B. Specification 2 ($7\times11$)}} &  & $b_{6n}\!\to\!b_{5r}$ & $\{0.00,\ 20.72\}$\\
$2n\to1r$ & $2n\!\to\!\ell_{1r}$ & $\{-30.48,\ -30.48\}$ &  & $b_{7n}\!\to\!b_{1r}$ & $\{-97.49,\ 0.00\}$\\
 & $2n\!\to\!\ell_{2r}$ & $\{-31.29,\ -31.29\}$ &  & $b_{7n}\!\to\!b_{2r}$ & $\{-72.41,\ 0.00\}$\\
 & $2n\!\to\!\ell_{3r}$ & $\{-28.78,\ 0.30\}$ &  & $b_{7n}\!\to\!b_{3r}$ & $\{-48.96,\ 0.00\}$\\
 & Coarse $2n\to1r$ & $\{-90.56,\ -61.47\}$ &  & $b_{7n}\!\to\!b_{4r}$ & $\{-25.99,\ 0.00\}$\\
\cmidrule(lr){1-3}
$1n\to1r$ & $1n\!\to\!1r$ & $\{11.35,\ 23.07\}$ &  & $b_{7n}\!\to\!b_{5r}$ & $\{-2.09,\ 0.00\}$\\
 & Coarse $1n\to1r$ & $\{11.35,\ 23.07\}$ &  & $b_{8n}\!\to\!b_{1r}$ & $\{-163.33,\ 0.00\}$\\
\addlinespace[5pt]
\multicolumn{3}{l|}{\textit{Panel C. Specification 8 ($13\times33$)}} &  & $b_{8n}\!\to\!b_{2r}$ & $\{-138.32,\ 0.00\}$\\
$2n\to1r$ & $b_{6n}\!\to\!1r$ & $\{-39.74,\ -3.58\}$ &  & $b_{8n}\!\to\!b_{3r}$ & $\{-114.95,\ 0.00\}$\\
 & $b_{7n}\!\to\!1r$ & $\{-62.59,\ -5.62\}$ &  & $b_{8n}\!\to\!b_{4r}$ & $\{-92.05,\ 0.00\}$\\
 & $b_{8n}\!\to\!1r$ & $\{-129.28,\ -11.85\}$ &  & $b_{8n}\!\to\!b_{5r}$ & $\{-68.22,\ 0.00\}$\\
 & Coarse $2n\to1r$ & $\{-138.49,\ -21.05\}$ &  & Coarse $2n\to1r$ & $\{-178.64,\ 14.63\}$\\
\cmidrule(lr){1-3}\cmidrule(lr){4-6}
$1n\to1r$ & $b_{1n}\!\to\!1r$ & $\{-2.49,\ 71.70\}$ & $1n\to1r$ & $b_{1n}\!\to\!b_{1r}$ & $\{0.00,\ 40.83\}$\\
 & $b_{2n}\!\to\!1r$ & $\{2.47,\ 53.64\}$ &  & $b_{1n}\!\to\!b_{2r}$ & $\{0.00,\ 62.90\}$\\
 & $b_{3n}\!\to\!1r$ & $\{4.72,\ 33.65\}$ &  & $b_{1n}\!\to\!b_{3r}$ & $\{0.00,\ 83.54\}$\\
 & $b_{4n}\!\to\!1r$ & $\{1.42,\ 9.30\}$ &  & $b_{1n}\!\to\!b_{4r}$ & $\{0.00,\ 103.75\}$\\
 & $b_{5n}\!\to\!1r$ & $\{-16.27,\ -2.09\}$ &  & $b_{1n}\!\to\!b_{5r}$ & $\{0.00,\ 124.79\}$\\
 & Coarse $1n\to1r$ & $\{-10.15,\ 78.22\}$ &  & $b_{2n}\!\to\!b_{1r}$ & $\{0.00,\ 20.19\}$\\
 &  &  &  & $b_{2n}\!\to\!b_{2r}$ & $\{0.00,\ 44.14\}$\\
 &  &  &  & $b_{2n}\!\to\!b_{3r}$ & $\{0.00,\ 66.52\}$\\
 &  &  &  & $b_{2n}\!\to\!b_{4r}$ & $\{0.00,\ 88.45\}$\\
 &  &  &  & $b_{2n}\!\to\!b_{5r}$ & $\{0.00,\ 111.27\}$\\
 &  &  &  & $b_{3n}\!\to\!b_{1r}$ & $\{-3.45,\ 0.00\}$\\
 &  &  &  & $b_{3n}\!\to\!b_{2r}$ & $\{0.00,\ 23.12\}$\\
 &  &  &  & $b_{3n}\!\to\!b_{3r}$ & $\{0.00,\ 47.97\}$\\
 &  &  &  & $b_{3n}\!\to\!b_{4r}$ & $\{0.00,\ 72.30\}$\\
 &  &  &  & $b_{3n}\!\to\!b_{5r}$ & $\{0.00,\ 97.63\}$\\
 &  &  &  & $b_{4n}\!\to\!b_{1r}$ & $\{-28.32,\ 0.00\}$\\
 &  &  &  & $b_{4n}\!\to\!b_{2r}$ & $\{-1.36,\ 0.00\}$\\
 &  &  &  & $b_{4n}\!\to\!b_{3r}$ & $\{0.00,\ 23.85\}$\\
 &  &  &  & $b_{4n}\!\to\!b_{4r}$ & $\{0.00,\ 48.54\}$\\
 &  &  &  & $b_{4n}\!\to\!b_{5r}$ & $\{0.00,\ 74.24\}$\\
 &  &  &  & $b_{5n}\!\to\!b_{1r}$ & $\{-52.73,\ 0.00\}$\\
 &  &  &  & $b_{5n}\!\to\!b_{2r}$ & $\{-26.57,\ 0.00\}$\\
 &  &  &  & $b_{5n}\!\to\!b_{3r}$ & $\{-2.10,\ 0.00\}$\\
 &  &  &  & $b_{5n}\!\to\!b_{4r}$ & $\{0.00,\ 21.86\}$\\
 &  &  &  & $b_{5n}\!\to\!b_{5r}$ & $\{0.00,\ 46.80\}$\\
 &  &  &  & Coarse $1n\to1r$ & $\{-58.04,\ 160.94\}$\\
\end{longtable}
 
\vspace{2pt}
\begin{tablenotes}[flushleft]
\scriptsize
\item \emph{Notes:} Each panel corresponds to one specification, ordered from least to most granular (summarized in \S B.4 and Figure~\ref{fig:design1}) by the dimension of the LASSO design matrix (rows~$\times$~columns), reported in each panel heading. State labels follow \citet{KT}. Within a coarse state, $b_{k}n$ and $b_{k}r$ denote fine sub-bins, and $l_{k}n$, $l_{k}r$ denote restricted fine sub-bins (see spec2 in \S B.4). Sub-bin rows are bracketed to indicate the coarse flow to which they aggregate. Each composite bound is obtained by setting the vector $q$ to one
on the coordinates of the sub-bins that compose the coarse bin and to zero elsewhere. Numbers in braces are estimated lower and upper bounds and clipped to $[0,1]$ when
needed. Welfare values are in monthly dollars. In the coarse model the dual feasible region has few enough vertices that they can be enumerated, and each bound is obtained by checking over this finite set. In granular models the number of vertices grows combinatorially with the additional states and parameters—so enumeration is costly, and we instead obtain each bound by solving the dual program directly with a numerical solver, imposing $\|\nu\|_{\infty} \leq 200$ on the dual variable. We exclude any program that fails to solve and any program whose optimal $\nu$ attains the box boundary. In practice this filter removes less than $2\%$ of observations.\\[3pt]
\end{tablenotes}

\end{document}